\def\G{{\mathcal{G}}}
\def\F{{\mathcal{F}}}
\def\H{{\mathcal{H}}}
\def\P{{\mathcal{P}}}
\def\X{{\mathcal{X}}}
\def\Y{{\mathcal{Y}}}
\def\U{{\mathcal{U}}}
\def\D{{\mathcal{D}}}
\def\L{{\mathcal{L}}}
\def\Range{{\mathcal{R}}}
\def\Dirichlet{{\mathcal{E}}}
\def\E{{\mathbb{E}}}
\def\VAR{{\mathbb{V}\mathbb{A}\mathbb{R}}}
\def\COV{{\mathbb{C}\mathbb{O}\mathbb{V}}}
\def\R{{\mathbb{R}}}
\def\N{{\mathbb{N}}}
\def\1{{\textbf{1}}}
\def\unif{{\textbf{u}}}
\def\w{{w_{\delta}}}
\newtheorem{theorem}{Theorem}
\newtheorem{lemma}{Lemma}
\newtheorem{proposition}{Proposition}
\newtheorem{corollary}{Corollary}
\newtheorem{definition}{Definition}
\begin{document}

\title{Comparison of Channels: Criteria for\\Domination by a Symmetric Channel}

\author{Anuran Makur and Yury Polyanskiy%
\thanks{This work was supported in part by the National Science Foundation CAREER award under grant agreement CCF-12-53205, and in part by the Center for Science of Information (CSoI), an NSF Science and Technology Center, under grant agreement CCF-09-39370. This work was presented at the 2017 IEEE International Symposium on Information Theory (ISIT) \cite{SymmetricChannelDominationConf}.}%
\thanks{The authors are with the Department of Electrical Engineering and Computer Science, Massachusetts Institute of Technology, Cambridge, MA 02139, USA (e-mail: a\_makur@mit.edu; yp@mit.edu).}%
\thanks{Copyright (c) 2018 IEEE. Personal use of this material is permitted. However, permission to use this material for any other purposes must be obtained from the IEEE by sending a request to pubs-permissions@ieee.org.}}%

\maketitle

\thispagestyle{plain}
\pagestyle{plain}

\begin{abstract}
This paper studies the basic question of whether a given channel $V$ can be dominated (in the precise sense of being more noisy) by a $q$-ary symmetric channel. The concept of ``less noisy'' relation between channels originated in network information theory (broadcast channels) and is defined in terms of mutual information or Kullback-Leibler divergence. We provide an equivalent characterization in terms of $\chi^2$-divergence. Furthermore, we develop a simple criterion for domination by a $q$-ary symmetric channel in terms of the minimum entry of the stochastic matrix defining the channel $V$. The criterion is strengthened for the special case of additive noise channels over finite Abelian groups. Finally, it is shown that domination by a symmetric channel implies (via comparison of Dirichlet forms) a logarithmic Sobolev inequality for the original channel.
\end{abstract}

\begin{IEEEkeywords}
Less noisy, degradation, $q$-ary symmetric channel, additive noise channel, Dirichlet form, logarithmic Sobolev inequalities. 
\end{IEEEkeywords}

\tableofcontents
\thispagestyle{plain} 
\hypersetup{linkcolor = red}

\section{Introduction}
\label{Introduction}

For any Markov chain $U \rightarrow X \rightarrow Y$, it is well-known that the data processing inequality, $I(U;Y) \leq I(U;X)$, holds. This result can be
strengthened to \cite{AhlswedeGacsHypercontraction}:
\begin{equation}
\label{Eq: SDPI}
I(U;Y) \leq \eta I(U;X)
\end{equation}
where the contraction coefficient $\eta \in [0,1]$ only depends on the channel $P_{Y|X}$. Frequently, one gets $\eta < 1$ and the resulting inequality is called a \textit{strong data processing inequality} (SDPI). Such inequalities have been recently simultaneously rediscovered and applied in several disciplines; see \cite[Section 2]{GraphSDPI} for a short survey. In \cite[Section 6]{GraphSDPI}, it was noticed that the validity of \eqref{Eq: SDPI} for all $P_{U,X}$ is equivalent to the statement that an erasure channel with erasure probability $1-\eta$ is \textit{less noisy} than the given channel $P_{Y|X}$. In this way, the entire field of SDPIs is equivalent to determining whether a given channel is dominated by an erasure channel. 

This paper initiates the study of a natural extension of the concept of SDPI by replacing the distinguished role played by erasure channels with $q$-ary symmetric channels. We give simple criteria for testing this type of domination and explain how the latter can be used to prove logarithmic Sobolev inequalities. In the next three subsections, we introduce some basic definitions and notation. We state and motivate our main question in subsection \ref{Main Question and Motivation}, and present our main results in section \ref{Main Results}. 

\subsection{Preliminaries}
\label{Preliminaries}

The following notation will be used in our ensuing discussion. Consider any $q,r \in \N \triangleq \left\{1,2,3,\dots\right\}$. We let $\R^{q \times r}$ (respectively $\mathbb{C}^{q \times r}$) denote the set of all real (respectively complex) $q \times r$ matrices. Furthermore, for any matrix $A \in \R^{q \times r}$, we let $A^T \in \R^{r \times q}$ denote the transpose of $A$, $A^{\dagger} \in \R^{r \times q}$ denote the \textit{Moore-Penrose pseudoinverse} of $A$, $\Range(A)$ denote the range (or column space) of $A$, and $\rho\left(A\right)$ denote the \textit{spectral radius} of $A$ (which is the maximum of the absolute values of all complex eigenvalues of $A$) when $q = r$. We let $\R^{q \times q}_{\succeq 0} \subsetneq \R^{q \times q}_{\textsf{sym}}$ denote the sets of positive semidefinite and symmetric matrices, respectively. In fact, $\R^{q \times q}_{\succeq 0}$ is a closed convex cone (with respect to the Frobenius norm). We also let $\succeq_{\textsf{\tiny PSD}}$ denote the \textit{L\"{o}wner partial order} over $\R^{q \times q}_{\textsf{sym}}$: for any two matrices $A,B \in \R^{q \times q}_{\textsf{sym}}$, we write $A \succeq_{\textsf{\tiny PSD}} B$ (or equivalently, $A - B \succeq_{\textsf{\tiny PSD}} 0$, where $0$ is the zero matrix) if and only if $A - B \in \R^{q \times q}_{\succeq 0}$. To work with probabilities, we let $\P_q \triangleq \{p = \left(p_1,\dots,p_q\right) \in \R^q : p_1,\dots,p_q \geq 0 \enspace \text{and} \enspace p_1 + \cdots + p_q = 1\}$ be the probability simplex of row vectors in $\R^q$, $\P_q^{\circ} \triangleq \{p = \left(p_1,\dots,p_q\right) \in \R^q : p_1,\dots,p_q > 0 \enspace \text{and} \enspace p_1 + \cdots + p_q = 1\}$ be the relative interior of $\P_q$, and $\R^{q \times r}_{\textsf{sto}}$ be the convex set of row stochastic matrices (which have rows in $\P_r$). Finally, for any (row or column) vector $x = (x_1,\dots,x_q) \in \R^q$, we let $\textsf{\small diag}(x) \in \R^{q \times q}$ denote the diagonal matrix with entries $\left[\textsf{\small diag}(x)\right]_{i,i} = x_i$ for each $i \in \{1,\dots,q\}$, and for any set of vectors $\mathcal{S} \subseteq \R^{q}$, we let $\textsf{\small conv}\left(\mathcal{S}\right)$ be the convex hull of the vectors in $\mathcal{S}$. 

\subsection{Channel preorders in information theory} 
\label{Channel Preorders in Information Theory}

Since we will study preorders over discrete channels that capture various notions of relative ``noisiness'' between channels, we provide an overview of some well-known channel preorders in the literature. Consider an input random variable $X \in \X$ and an output random variable $Y \in \Y$, where the alphabets are $\X = [q] \triangleq \{0,1,\dots,q-1\}$ and $\Y = [r]$ for $q,r \in \N$ without loss of generality. We let $\P_q$ be the set of all probability mass functions (pmfs) of $X$, where every pmf $P_X = \left(P_X(0),\dots,P_X(q-1)\right) \in \P_q$ and is perceived as a row vector. Likewise, we let $\P_r$ be the set of all pmfs of $Y$. A \textit{channel} is the set of conditional distributions $W_{Y|X}$ that associates each $x \in \X$ with a conditional pmf $W_{Y|X}(\cdot|x) \in \P_r$. So, we represent each channel with a stochastic matrix $W \in \R^{q \times r}_{\textsf{sto}}$ that is defined entry-wise as:
\begin{equation}
\forall x \in \X, \forall y \in \Y, \enspace \left[W\right]_{x+1,y+1} \triangleq W_{Y|X}(y|x)
\end{equation}
where the $(x+1)$th row of $W$ corresponds to the conditional pmf $W_{Y|X}(\cdot|x) \in \P_r$, and each column of $W$ has at least one non-zero entry so that no output alphabet letters are redundant. Moreover, we think of such a channel as a (linear) map $W:\P_q \rightarrow \P_r$ that takes any row probability vector $P_X \in \P_q$ to the row probability vector $P_Y = P_X W \in \P_r$. 

One of the earliest preorders over channels was the notion of \textit{channel inclusion} proposed by Shannon in \cite{ShannonPartialOrder}.\footnote{Throughout this paper, we will refer to various information theoretic orders over channels as \textit{preorders} rather than \textit{partial orders} (although the latter is more standard terminology in the literature). This is because we will think of channels as individual stochastic matrices rather than equivalence classes of stochastic matrices (e.g. identifying all stochastic matrices with permuted columns), and as a result, the anti-symmetric property will not hold.} Given two channels $W \in \R^{q \times r}_{\textsf{sto}}$ and $V \in \R^{s \times t}_{\textsf{sto}}$ for some $q,r,s,t \in \N$, he stated that $W$ includes $V$, denoted $W \succeq_{\textsf{\tiny inc}} V$, if there exist a pmf $g \in \P_m$ for some $m \in \N$, and two sets of channels $\{A_k \in \R^{r \times t}_{\textsf{sto}} : k = 1,\dots,m\}$ and $\{B_k \in \R^{s \times q}_{\textsf{sto}} : k = 1,\dots,m\}$, such that:
\begin{equation}
V = \sum_{k = 1}^{m}{g_k B_k W A_k} .
\end{equation}
Channel inclusion is preserved under channel addition and multiplication (which are defined in \cite{ZeroErrorCapacity}), and the existence of a code for $V$ implies the existence of as good a code for $W$ in a probability of error sense \cite{ShannonPartialOrder}. The channel inclusion preorder includes the \textit{input-output degradation} preorder, which can be found in \cite{StochasticMatricesAndContractionCoefficients}, as a special case. Indeed, $V$ is an input-output degraded version of $W$, denoted $W \succeq_{\textsf{\tiny iod}} V$, if there exist channels $A \in \R^{r \times t}_{\textsf{sto}}$ and $B \in \R^{s \times q}_{\textsf{sto}}$ such that $V = B W A$. We will study an even more specialized case of Shannon's channel inclusion known as \textit{degradation} \cite{DegradationCover, Degradation}.  

\begin{definition}[Degradation Preorder]
\label{Def: Degradation Preorder}
A channel $V \in \R^{q \times s}_{\textsf{sto}}$ is said to be a \textit{degraded} version of a channel $W \in \R^{q \times r}_{\textsf{sto}}$ with the same input alphabet, denoted $W \succeq_{\textsf{\tiny deg}} V$, if $V = W A$ for some channel $A \in \R^{r \times s}_{\textsf{sto}}$. 
\end{definition}

We note that when Definition \ref{Def: Degradation Preorder} of degradation is applied to general matrices (rather than stochastic matrices), it is equivalent to Definition C.8 of \textit{matrix majorization} in \cite[Chapter 15]{Majorization}. Many other generalizations of the majorization preorder over vectors (briefly introduced in Appendix \ref{App: Basics of Majorization Theory}) that apply to matrices are also presented in \cite[Chapter 15]{Majorization}.

K\"{o}rner and Marton defined two other preorders over channels in \cite{ChannelPartialOrders} known as the \textit{more capable} and \textit{less noisy} preorders. While the original definitions of these preorders explicitly reflect their significance in channel coding, we will define them using equivalent mutual information characterizations proved in \cite{ChannelPartialOrders}. (See \cite[Problems 6.16-6.18]{CsiszarKorner} for more on the relationship between channel coding and some of the aforementioned preorders.) We say a channel $W \in \R^{q \times r}_{\textsf{sto}}$ is \textit{more capable} than a channel $V \in \R^{q \times s}_{\textsf{sto}}$ with the same input alphabet, denoted $W \succeq_{\textsf{\tiny mc}} V$, if $I(P_X,W_{Y|X}) \geq I(P_X,V_{Y|X})$ for every input pmf $P_X \in \P_q$, where $I(P_X,W_{Y|X})$ denotes the mutual information of the joint pmf defined by $P_X$ and $W_{Y|X}$. The next definition presents the less noisy preorder, which will be a key player in our study.

\begin{definition}[Less Noisy Preorder] 
\label{Def: Less Noisy Preorder}
Given two channels $W \in \R^{q \times r}_{\textsf{sto}}$ and $V \in \R^{q \times s}_{\textsf{sto}}$ with the same input alphabet, let $Y_W$ and $Y_V$ denote the output random variables of $W$ and $V$, respectively. Then, $W$ is \textit{less noisy} than $V$, denoted $W \succeq_{\textsf{\tiny ln}} V$, if $I(U;Y_W) \geq I(U;Y_V)$ for every joint distribution $P_{U,X}$, where the random variable $U \in \U$ has some arbitrary range $\U$, and $U \rightarrow X \rightarrow (Y_W,Y_V)$ forms a Markov chain.  
\end{definition}

An analogous characterization of the less noisy preorder using Kullback-Leibler (KL) divergence or relative entropy is given in the next proposition.

\begin{proposition}[KL Divergence Characterization of Less Noisy \cite{ChannelPartialOrders}]
\label{Prop: KL Divergence Characterization of Less Noisy}
Given two channels $W \in \R^{q \times r}_{\textsf{sto}}$ and $V \in \R^{q \times s}_{\textsf{sto}}$ with the same input alphabet, $W \succeq_{\textsf{\tiny ln}} V$ if and only if $D(P_X W||Q_X W) \geq D(P_X V||Q_X V)$ for every pair of input pmfs $P_X,Q_X \in \P_q$, where $D(\cdot||\cdot)$ denotes the KL divergence.\footnote{Throughout this paper, we will adhere to the convention that $\infty \geq \infty$ is true. So, $D(P_X W||Q_X W) \geq D(P_X V||Q_X V)$ is not violated when both KL divergences are infinity.}
\end{proposition}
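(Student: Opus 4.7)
The plan is to prove both directions by exploiting the identity
\[
I(U;Y) = \E_U\!\left[D(P_{Y|U}\,\|\,P_Y)\right],
\]
which reduces the mutual information to an average of KL divergences between conditional and marginal output distributions.

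For the easy direction ($\Leftarrow$), I would take an arbitrary Markov chain $U \to X \to (Y_W,Y_V)$ and write, for each $u \in \U$,
\[
D(P_{Y_W|U=u}\,\|\,P_{Y_W}) = D(P_{X|U=u}\,W\,\|\,P_X W),
\]
and similarly for $V$, where $P_X = \sum_u P_U(u) P_{X|U=u}$ is the induced input marginal. Applying the hypothesis pointwise with $P_X \leftarrow P_{X|U=u}$ and $Q_X \leftarrow P_X$ gives $D(P_{Y_W|U=u}\,\|\,P_{Y_W}) \geq D(P_{Y_V|U=u}\,\|\,P_{Y_V})$ for every $u$, and averaging over $U$ yields $I(U;Y_W) \geq I(U;Y_V)$.

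For the harder direction ($\Rightarrow$), given any $P_X, Q_X \in \P_q$, I would construct a binary auxiliary $U \in \{0,1\}$ with $P_U(0) = \lambda$ small, and set $P_{X|U=0} = P_X$, $P_{X|U=1} = Q_X$. Then $P_{Y_W} = \lambda P_X W + (1-\lambda) Q_X W$ and
\[
I(U;Y_W) = \lambda D(P_X W \,\|\, P_{Y_W}) + (1-\lambda) D(Q_X W \,\|\, P_{Y_W}).
\]
The second term is $O(\lambda^2)$ by a Taylor expansion of $D(Q_X W\,\|\,Q_X W + \lambda(P_X W - Q_X W))$ around $\lambda = 0$ (the first-order term vanishes since the derivative of KL divergence in the second argument at equality is zero, leaving a $\chi^2$-type second-order contribution). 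The first term tends to $\lambda D(P_X W\,\|\,Q_X W)$ by continuity of KL divergence when $P_X W \ll Q_X W$. Hence
\[
\lim_{\lambda \to 0^+} \frac{1}{\lambda} I(U;Y_W) = D(P_X W \,\|\, Q_X W),
\]
and the same identity holds with $V$ in place of $W$. Dividing $I(U;Y_W) \geq I(U;Y_V)$ by $\lambda$ and letting $\lambda \to 0^+$ yields the claim.

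The main obstacle is handling the case where $P_X W$ is \emph{not} absolutely continuous with respect to $Q_X W$ (so the target divergence is $+\infty$), and symmetrically where one of the output divergences is infinite. I would address this by treating infinite-divergence cases separately: if $D(P_X V\,\|\,Q_X V) = +\infty$, then there exists $y$ with $(P_X V)(y) > 0 = (Q_X V)(y)$, and by the structure of the convex combination $P_{Y_V}$ under the binary $U$ construction, $I(U;Y_V) = +\infty$ for any $\lambda \in (0,1)$, forcing $I(U;Y_W) = +\infty$ as well under the convention noted in the footnote, which in turn forces $D(P_X W \,\|\, Q_X W) = +\infty$ via the same support argument applied to $W$. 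With this dichotomy, the Taylor expansion only needs to be carried out in the finite case, where it is routine.
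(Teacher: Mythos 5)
Your overall strategy is sound in both directions, and the $\lambda\to 0^+$ local-expansion device you use for the $(\Rightarrow)$ direction is essentially the same one the paper later employs (via the local approximation of KL by $\chi^2$) to prove Theorem~\ref{Thm: Chi Squared Divergence Characterization of Less Noisy}; the paper itself does not prove Proposition~\ref{Prop: KL Divergence Characterization of Less Noisy}, citing K\"{o}rner--Marton. However, your handling of the infinite-divergence case contains a concrete error. You claim that if $D(P_X V\,\|\,Q_X V) = +\infty$ then $I(U;Y_V) = +\infty$ for every $\lambda \in (0,1)$. That is impossible: for a binary $U$ one always has $I(U;Y_V) \leq H(U) \leq \log 2$, and in the mixture construction both $P_X V \ll P_{Y_V}$ and $Q_X V \ll P_{Y_V}$ with Radon--Nikodym derivatives bounded by $1/\lambda$ and $1/(1-\lambda)$ respectively, so $D(P_X V\,\|\,P_{Y_V}) \leq \log(1/\lambda)$ and $D(Q_X V\,\|\,P_{Y_V}) \leq \log(1/(1-\lambda))$ are both finite. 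The subsequent chain ``forcing $I(U;Y_W)=+\infty$\ldots which forces $D(P_X W\,\|\,Q_X W)=+\infty$'' therefore does not go through.

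The repair is small and keeps your plan intact. What actually blows up is the \emph{rescaled} quantity $\lambda^{-1}I(U;Y_V)$, not $I(U;Y_V)$ itself: if $(P_X V)(y_0) > 0 = (Q_X V)(y_0)$, then
\begin{equation*}
\frac{1}{\lambda}\,I(U;Y_V) \;\geq\; D\!\left(P_X V\,\|\,P_{Y_V}\right) \;\geq\; (P_X V)(y_0)\,\log\frac{1}{\lambda} \;\longrightarrow\; +\infty \quad\text{as } \lambda\to 0^+ .
\end{equation*}
On the other hand, when $D(P_X W\,\|\,Q_X W) < \infty$ (equivalently $P_X W \ll Q_X W$), your Taylor estimate shows $\lambda^{-1}I(U;Y_W)$ stays bounded as $\lambda\to 0^+$ (use $D(P_X W\,\|\,P_{Y_W}) \leq (1-\lambda)D(P_X W\,\|\,Q_X W)$ by convexity of KL in its second argument, together with $D(Q_X W\,\|\,P_{Y_W}) = O(\lambda^2)$). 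Combining these with $\lambda^{-1}I(U;Y_W) \geq \lambda^{-1}I(U;Y_V)$ yields a contradiction, forcing $D(P_X W\,\|\,Q_X W) = +\infty$ whenever $D(P_X V\,\|\,Q_X V) = +\infty$. With this replacement the argument is complete; the finite case then proceeds exactly as you outlined.
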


We will primarily use this KL divergence characterization of $\succeq_{\textsf{\tiny ln}}$ in our discourse because of its simplicity. Another well-known equivalent characterization of $\succeq_{\textsf{\tiny ln}}$ due to van Dijk is presented below, cf. \cite[Theorem 2]{vanDijk}. We will derive some useful corollaries from it later in subsection \ref{Characterizations via the Loewner Partial Order and Spectral Radius}.

\begin{proposition}[van Dijk Characterization of Less Noisy \cite{vanDijk}]
\label{Prop: van Dijk Characterization of Less Noisy}
Given two channels $W \in \R^{q \times r}_{\textsf{sto}}$ and $V \in \R^{q \times s}_{\textsf{sto}}$ with the same input alphabet, consider the functional $F : \P_q \rightarrow \R$:
$$ \forall P_X \in \P_q, \enspace F(P_X) \triangleq I(P_X,W_{Y|X}) - I(P_X,V_{Y|X}) . $$
Then, $W \succeq_{\textsf{\tiny ln}} V$ if and only if $F$ is concave.
\end{proposition}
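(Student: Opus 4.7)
The plan is to reduce both directions to a single identity expressing $I(U;Y_W) - I(U;Y_V)$ as a concavity gap of $F$. Writing $J_W(p) \triangleq I(p, W_{Y|X})$ and similarly for $V$, I would first establish that for any Markov chain $U \to X \to Y_W$,
$$I(U;Y_W) = J_W(P_X) - \sum_u P_U(u) J_W(P_{X|U=u}), \quad (\star)$$
which follows from the chain rule $I(U,X;Y_W) = I(X;Y_W) + I(U;Y_W|X) = I(U;Y_W) + I(X;Y_W|U)$ once one uses $I(U;Y_W|X) = 0$ (Markov property) and identifies $I(X;Y_W|U) = \sum_u P_U(u) J_W(P_{X|U=u})$. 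Subtracting $(\star)$ for $V$ from $(\star)$ for $W$ yields the master identity
$$I(U;Y_W) - I(U;Y_V) = F(P_X) - \sum_u P_U(u) F(P_{X|U=u}). \quad (\dagger)$$

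For the ``if'' direction, assume $F$ is concave on $\P_q$. Since $P_X = \sum_u P_U(u) P_{X|U=u}$, Jensen's inequality forces the right-hand side of $(\dagger)$ to be nonnegative for every joint law $P_{U,X}$, which is exactly $W \succeq_{\textsf{\tiny ln}} V$ in the sense of Definition \ref{Def: Less Noisy Preorder}. For the ``only if'' direction, fix $P_X^{(0)}, P_X^{(1)} \in \P_q$ and $\lambda \in [0,1]$, take $U \in \{0,1\}$ Bernoulli with $P_U(1) = \lambda$, and couple $P_{X|U=i} \triangleq P_X^{(i)}$, so that $P_X = (1-\lambda) P_X^{(0)} + \lambda P_X^{(1)}$. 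The less noisy hypothesis applied through $(\dagger)$ then delivers $F(P_X) \geq (1-\lambda) F(P_X^{(0)}) + \lambda F(P_X^{(1)})$, which is the definition of concavity of $F$ on $\P_q$.

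The only non-cosmetic step is the identity $(\star)$, which expresses $I(U;Y_W)$ as the familiar concavity gap of the mutual-information functional $p \mapsto I(p, W_{Y|X})$; once it is in hand, both directions reduce to Jensen and to the definition of concavity, respectively. The arbitrariness of the auxiliary range $\U$ in Definition \ref{Def: Less Noisy Preorder} is handled automatically: binary $U$ already suffices for the ``only if'' direction, and Jensen absorbs any finite (or countable) $\U$ in the ``if'' direction.
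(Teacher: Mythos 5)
The paper does not give its own proof of this proposition; it is stated as a known result and attributed to \cite{vanDijk} (``cf.\ [Theorem~2]{vanDijk}''). Your proof is therefore being assessed on its own terms.

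Your argument is correct and self-contained. The identity $(\star)$ is a valid instance of the chain rule: from $I(U,X;Y_W)=I(X;Y_W)+I(U;Y_W|X)=I(U;Y_W)+I(X;Y_W|U)$ and the Markov property $I(U;Y_W|X)=0$, one gets $I(U;Y_W)=I(X;Y_W)-I(X;Y_W|U)$, and conditioning on $U=u$ leaves the channel $W_{Y|X}$ unchanged so that $I(X;Y_W|U=u)=J_W(P_{X|U=u})$. Subtracting the $V$ version gives $(\dagger)$, after which the ``if'' direction is Jensen's inequality (valid here since $F$ is a bounded measurable function on the compact simplex $\P_q$, so the inequality holds for arbitrary, not just finite, alphabets $\U$), and the ``only if'' direction follows by specializing to binary $U$, which exactly recovers the two-point definition of concavity. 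This concavity-gap decomposition is indeed the standard route to van Dijk's theorem and is essentially the argument in \cite{vanDijk} itself. One small remark worth adding: the paper later \emph{uses} this proposition in the proof of its Proposition~\ref{Prop: Loewner and Spectral Characterizations of Less Noisy} by passing to the Hessian test for concavity; your proof, by working directly with the chord definition of concavity, is the more elementary complement of that differential viewpoint and does not require any smoothness of $F$.
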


The more capable and less noisy preorders have both been used to study the capacity regions of broadcast channels. We refer readers to \cite{BCCapacityElGamal,BCCapacityNair,BCCapacityNairShamai}, and the references therein for further details. We also remark that the more capable and less noisy preorders tensorize, as shown in \cite[Problem 6.18]{CsiszarKorner} and \cite[Proposition 16]{GraphSDPI}, \cite[Proposition 5]{UniversalPolarCodes}, respectively. 

\begin{figure}[!t] 
\centering
\includegraphics[trim = 0mm 145mm 0mm 100mm, width=0.5\linewidth]{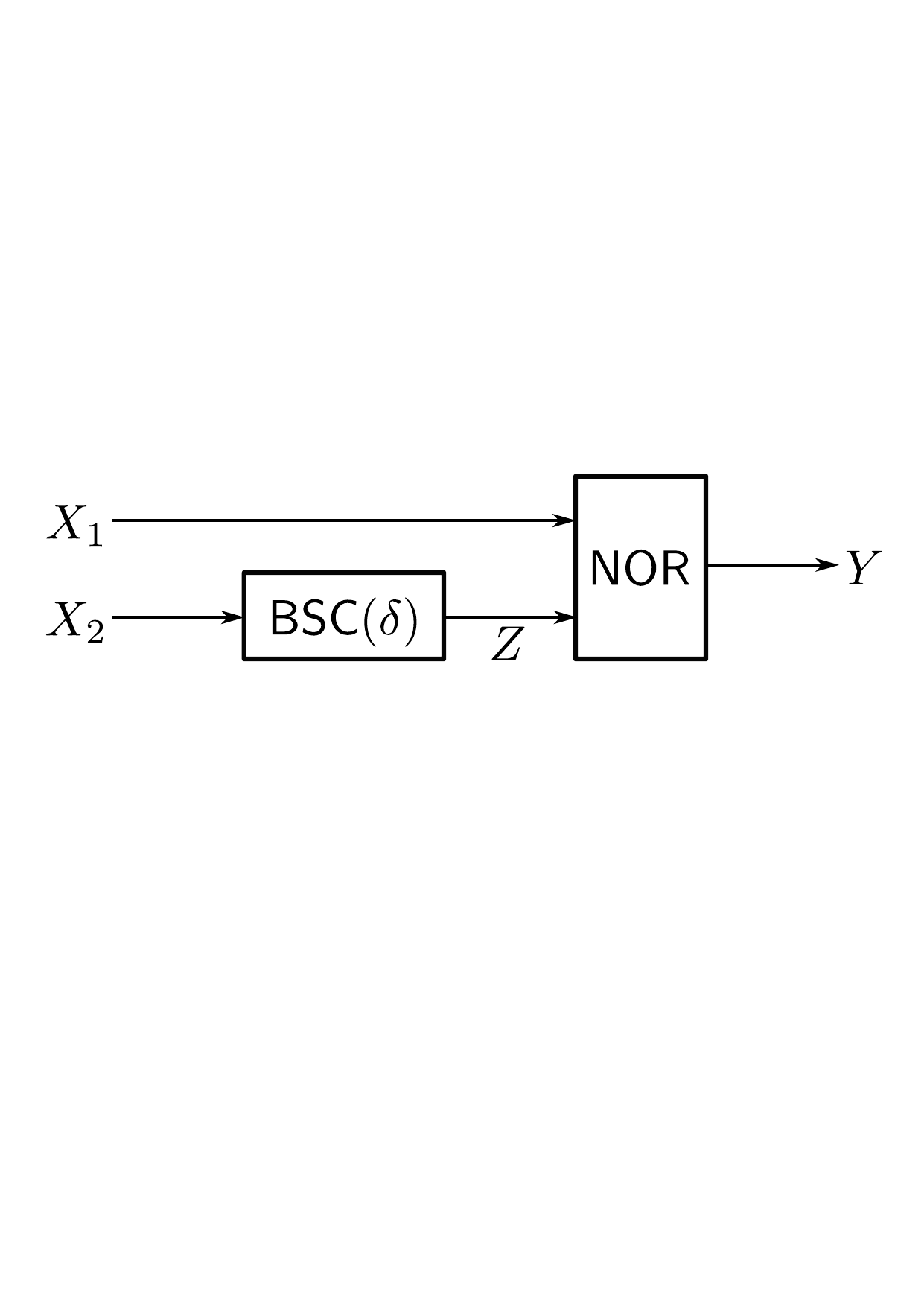} 
\caption{Illustration of a Bayesian network where $X_1,X_2,Z,Y \in \{0,1\}$ are binary random variables, $P_{Z|X_2}$ is a $\textsf{\scriptsize BSC}(\delta)$ with $\delta \in (0,1)$, and $P_{Y|X_1,Z}$ is defined by a deterministic NOR gate.}
\label{Figure: NOR Gate Counter-Example}
\end{figure}

On the other hand, these preorders exhibit rather counter-intuitive behavior in the context of Bayesian networks (or directed graphical models). Consider a Bayesian network with ``source'' nodes (with no inbound edges) $X$ and ``sink'' nodes (with no outbound edges) $Y$. If we select a node $Z$ in this network and replace the channel from the parents of $Z$ to $Z$ with a less noisy channel, then we may reasonably conjecture that the channel from $X$ to $Y$ also becomes less noisy (motivated by the results in \cite{GraphSDPI}). However, this conjecture is false. To see this, consider the Bayesian network in Figure \ref{Figure: NOR Gate Counter-Example} (inspired by the results in \cite{Unger2010}), where the source nodes are $X_1 \sim \textsf{\small Bernoulli}\!\left(\frac{1}{2}\right)$ and $X_2 = 1$ (almost surely), the node $Z$ is the output of a binary symmetric channel (BSC) with crossover probability $\delta \in (0,1)$, denoted $\textsf{\small BSC}(\delta)$, and the sink node $Y$ is the output of a NOR gate. Let $I(\delta) = I(X_1,X_2;Y)$ be the end-to-end mutual information. Then, although $\textsf{\small BSC}(0) \succeq_{\textsf{\tiny ln}} \textsf{\small BSC}(\delta)$ for $\delta \in (0,1)$, it is easy to verify that $I(\delta) > I(0) = 0$. So, when we replace the $\textsf{\small BSC}(\delta)$ with a less noisy $\textsf{\small BSC}(0)$, the end-to-end channel does \textit{not} become less noisy (or more capable).

The next proposition illustrates certain well-known relationships between the various preorders discussed in this subsection.

\begin{proposition}[Relations between Channel Preorders]
\label{Prop: Relations between Channel Preorders}
Given two channels $W \in \R^{q \times r}_{\textsf{sto}}$ and $V \in \R^{q \times s}_{\textsf{sto}}$ with the same input alphabet, we have:
\begin{enumerate}
\item $W \succeq_{\textsf{\tiny deg}} V \enspace \Rightarrow \enspace W \succeq_{\textsf{\tiny iod}} V \enspace \Rightarrow \enspace W \succeq_{\textsf{\tiny inc}} V$,
\item $W \succeq_{\textsf{\tiny deg}} V \enspace \Rightarrow \enspace W \succeq_{\textsf{\tiny ln}} V \enspace \Rightarrow \enspace W \succeq_{\textsf{\tiny mc}} V$.
\end{enumerate}
\end{proposition}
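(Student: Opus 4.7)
The plan is to check each of the five implications directly from the definitions; the only nontrivial ingredient is the data processing inequality (DPI) for Kullback--Leibler divergence.

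For part (1), I would argue as follows. Starting from $W \succeq_{\textsf{\tiny deg}} V$, Definition \ref{Def: Degradation Preorder} supplies a stochastic $A \in \R^{r \times s}_{\textsf{sto}}$ with $V = WA$. Since $W$ and $V$ share the input alphabet of size $q$, I can rewrite this as $V = I_q W A$, where $I_q \in \R^{q \times q}_{\textsf{sto}}$ is the identity matrix (trivially a row stochastic matrix); this witnesses $W \succeq_{\textsf{\tiny iod}} V$. To pass from $\succeq_{\textsf{\tiny iod}}$ to $\succeq_{\textsf{\tiny inc}}$, given a factorization $V = B W A$ I would take the degenerate mixture $m = 1$, $g_1 = 1$, $A_1 = A$, $B_1 = B$ in Shannon's definition of channel inclusion, which immediately yields $V = g_1 B_1 W A_1$.

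For part (2), I would leverage Proposition \ref{Prop: KL Divergence Characterization of Less Noisy} to prove $W \succeq_{\textsf{\tiny deg}} V \Rightarrow W \succeq_{\textsf{\tiny ln}} V$. Writing $V = WA$, for every pair $P_X, Q_X \in \P_q$ the output distributions satisfy $P_X V = (P_X W) A$ and $Q_X V = (Q_X W) A$, so both are obtained by pushing $(P_X W, Q_X W)$ through the stochastic kernel $A$. The DPI for KL divergence then gives
\begin{equation*}
D(P_X V \| Q_X V) = D\bigl((P_X W) A \,\|\, (Q_X W) A\bigr) \leq D(P_X W \| Q_X W),
\end{equation*}
and Proposition \ref{Prop: KL Divergence Characterization of Less Noisy} translates this to $W \succeq_{\textsf{\tiny ln}} V$. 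For the final implication $W \succeq_{\textsf{\tiny ln}} V \Rightarrow W \succeq_{\textsf{\tiny mc}} V$, I would specialize Definition \ref{Def: Less Noisy Preorder} by choosing $U = X$ (with $\U = \X$), which is allowed since $\U$ is arbitrary and $U \to X \to (Y_W, Y_V)$ is trivially Markov in this case. The resulting inequality $I(X; Y_W) \geq I(X; Y_V)$, valid for every marginal $P_X$, is exactly the definition of more capable.

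No step presents a real obstacle. The iod and inc implications are purely algebraic and follow by selecting identity matrices or degenerate mixtures, while the two information-theoretic implications reduce to standard invocations of the DPI (for KL divergence in the $\succeq_{\textsf{\tiny deg}} \Rightarrow \succeq_{\textsf{\tiny ln}}$ step, and the trivial $U = X$ specialization in the $\succeq_{\textsf{\tiny ln}} \Rightarrow \succeq_{\textsf{\tiny mc}}$ step).
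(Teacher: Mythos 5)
Your proof is correct and follows the same route the paper indicates: the paper merely notes that the implications follow directly from the definitions, singling out $\succeq_{\textsf{\tiny deg}} \Rightarrow \succeq_{\textsf{\tiny ln}}$ as the only nontrivial one and attributing it to Proposition \ref{Prop: KL Divergence Characterization of Less Noisy} together with the data processing inequality, which is exactly what you do. Your filling in of the remaining details (the $I_q$ witness, the degenerate $m=1$ mixture, and the $U=X$ specialization) is all sound.
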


These observations follow in a straightforward manner from the definitions of the various preorders. Perhaps the only nontrivial implication is $W \succeq_{\textsf{\tiny deg}} V \Rightarrow W \succeq_{\textsf{\tiny ln}} V$, which can be proven using Proposition \ref{Prop: KL Divergence Characterization of Less Noisy} and the data processing inequality.

\subsection{Symmetric channels and their properties}
\label{Symmetric Channels and their Properties}

We next formally define $q$-ary symmetric channels and convey some of their properties. To this end, we first introduce some properties of Abelian groups and define additive noise channels. Let us fix some $q \in \N$ with $q \geq 2$ and consider an \textit{Abelian group} $(\X,\oplus)$ of order $q$ equipped with a binary ``addition'' operation denoted by $\oplus$. Without loss of generality, we let $\X = [q]$, and let $0$ denote the identity element. This endows an ordering to the elements of $\X$. Each element $x \in \X$ permutes the entries of the row vector $(0,\dots,q-1)$ to $(\sigma_x(0),\dots,\sigma_x(q-1))$ by (left) addition in the Cayley table of the group, where $\sigma_x:[q] \rightarrow [q]$ denotes a permutation of $[q]$, and $\sigma_x(y) = x \oplus y$ for every $y \in \X$. So, corresponding to each $x \in \X$, we can define a permutation matrix $P_x \triangleq \left[e_{\sigma_x(0)} \cdots e_{\sigma_x(q-1)}\right] \in \R^{q \times q}$ such that:
\begin{equation}
\label{Eq: Permutation Representation}
\left[v_0 \, \cdots \, v_{q-1}\right] P_x = \left[v_{\sigma_x(0)} \, \cdots \, v_{\sigma_x(q-1)}\right]
\end{equation}
for any $v_0,\dots,v_{q-1} \in \R$, where for each $i \in [q]$, $e_i \in \R^q$ is the $i$th standard basis column vector with unity in the $(i+1)$th position and zero elsewhere. The permutation matrices $\left\{P_x \in \R^{q \times q}: x \in \X\right\}$ (with the matrix multiplication operation) form a group that is isomorphic to $(\X,\oplus)$ (see \textit{Cayley's theorem}, and permutation and regular representations of groups in \cite[Sections 6.11, 7.1, 10.6]{AlgebraArtin}). In particular, these matrices commute as $(\X,\oplus)$ is Abelian, and are jointly unitarily diagonalizable by a \textit{Fourier matrix of characters} (using \cite[Theorem 2.5.5]{BasicMatrixAnalysis}). We now recall that given a row vector $x = \left(x_0,\dots,x_{q-1}\right) \in \R^q$, we may define a corresponding \textit{$\X$-circulant matrix}, $\textsf{\small circ}_{\X}(x) \in \R^{q \times q}$, that is given entry-wise by \cite[Chapter 3E, Section 4]{GroupRepsProbability}: 
\begin{equation}
\label{Eq: Group Circulant Matrix Definition}
\forall a,b \in [q], \enspace \left[\textsf{\small circ}_{\X}(x)\right]_{a+1,b+1} \triangleq x_{-a \oplus b} . 
\end{equation} 
where $-a \in \X$ denotes the inverse of $a \in \X$. Moreover, we can decompose this $\X$-circulant matrix as:
\begin{equation}
\label{Eq: Group Circulant Matrix Decomposition} 
\textsf{\small circ}_{\X}(x) = \sum_{i = 0}^{q-1}{x_i P_i^T}
\end{equation}
since $\sum_{i = 0}^{q-1}{x_i \! \left[P_i^T\right]_{a+1,b+1}} \! = \sum_{i = 0}^{q-1}{x_i \! \left[e_{\sigma_{i}(a)}\right]_{b+1}} = x_{-a \oplus b}$ for every $a,b \in [q]$. Using similar reasoning, we can write:
\begin{equation}
\label{Eq: Group Circulant Matrix Row-Column Decomposition} 
\textsf{\small circ}_{\X}(x) = \left[P_0 y \, \cdots \, P_{q-1} y\right] = \left[P_0 x^T \, \cdots \, P_{q-1} x^T\right]^T
\end{equation}
where $y = \left[x_{0} \enspace x_{-1} \cdots \, x_{-(q-1)}\right]^T \in \R^q$, and $P_0 = I_q \in \R^{q \times q}$ is the $q \times q$ identity matrix. Using \eqref{Eq: Group Circulant Matrix Decomposition}, we see that $\X$-circulant matrices are normal, form a commutative algebra, and are jointly unitarily diagonalizable by a Fourier matrix. Furthermore, given two row vectors $x,y \in \R^q$, we can define $x \, \textsf{\small circ}_{\X}(y) = y \, \textsf{\small circ}_{\X}(x)$ as the \textit{$\X$-circular convolution} of $x$ and $y$, where the commutativity of $\X$-circular convolution follows from the commutativity of $\X$-circulant matrices. 

A salient specialization of this discussion is the case where $\oplus$ is addition modulo $q$, and $(\X = [q],\oplus)$ is the cyclic Abelian group $\mathbb{Z}/q\mathbb{Z}$. In this scenario, $\X$-circulant matrices correspond to the standard circulant matrices which are jointly unitarily diagonalized by the \textit{discrete Fourier transform} (DFT) matrix. Furthermore, for each $x \in [q]$, the permutation matrix $P_x^T = P_q^x$, where $P_q \in \R^{q \times q}$ is the \textit{generator cyclic permutation matrix} as presented in \cite[Section 0.9.6]{BasicMatrixAnalysis}:
\begin{equation}
\label{Eq: Generator Cyclic Permutation Matrix}
\forall a,b \in [q], \enspace \left[P_q\right]_{a+1,b+1} \triangleq \Delta_{1,(b-a \, (\textsf{\scriptsize mod} \, q))}
\end{equation} 
where $\Delta_{i,j}$ is the Kronecker delta function, which is unity if $i = j$ and zero otherwise. The matrix $P_q$ cyclically shifts any input row vector to the right once, i.e. $\left(x_1,x_2,\dots,x_{q}\right)P_q = \left(x_{q},x_1,\dots,x_{q-1}\right)$.

Let us now consider a channel with common input and output alphabet $\X = \Y = [q]$, where $(\X,\oplus)$ is an Abelian group. Such a channel operating on an Abelian group is called an \textit{additive noise channel} when it is defined as:
\begin{equation}
\label{Eq: Additive Noise Channel}
Y = X \oplus Z
\end{equation}
where $X \in \X$ is the input random variable, $Y \in \X$ is the output random variable, and $Z \in \X$ is the additive noise random variable that is independent of $X$ with pmf $P_Z = \left(P_Z(0),\dots,P_Z(q-1)\right) \in \P_q$. The channel transition probability matrix corresponding to \eqref{Eq: Additive Noise Channel} is the $\X$-circulant stochastic matrix $\textsf{\small circ}_{\X}\!\left(P_Z\right) \in \R^{q \times q}_{\textsf{sto}}$, which is also doubly stochastic (i.e. both $\textsf{\small circ}_{\X}\!\left(P_Z\right), \textsf{\small circ}_{\X}\!\left(P_Z\right)^T \in \R^{q \times q}_{\textsf{sto}}$). Indeed, for an additive noise channel, it is well-known that the pmf of $Y$, $P_Y \in \P_q$, can be obtained from the pmf of $X$, $P_X \in \P_q$, by $\X$-circular convolution: $P_Y = P_X \, \textsf{\small circ}_{\X}\!\left(P_Z\right)$. We remark that in the context of various channel symmetries in the literature (see \cite[Section VI.B]{ChannelSymmetries} for a discussion), additive noise channels correspond to ``group-noise'' channels, and are input symmetric, output symmetric, Dobrushin symmetric, and Gallager symmetric.

The \textit{$q$-ary symmetric channel} is an additive noise channel on the Abelian group $(\X,\oplus)$ with noise pmf $P_Z = \w \triangleq \left(1-\delta,\delta/(q-1),\dots,\delta/(q-1)\right) \in \P_q$, where $\delta \in [0,1]$. Its channel transition probability matrix is denoted $W_{\delta} \in \R^{q \times q}_{\textsf{sto}}$:
\begin{equation}
\label{Eq: Symmetric Channel}
W_{\delta} \triangleq \textsf{\small circ}_{\X}\!\left(\w\right) = \left[\w^T \enspace P_q^T\w^T \cdots \, \left(P_q^T\right)^{q-1}\!\w^T\right]^T 
\end{equation}
which has $1-\delta$ in the principal diagonal entries and $\delta/(q-1)$ in all other entries regardless of the choice of group $(\X,\oplus)$. We may interpret $\delta$ as the total crossover probability of the symmetric channel. Indeed, when $q = 2$, $W_{\delta}$ represents a BSC with crossover probability $\delta \in [0,1]$. Although $W_{\delta}$ is only stochastic when $\delta \in [0,1]$, we will refer to the parametrized convex set of matrices $\left\{W_{\delta} \in \R^{q \times q}_{\textsf{sym}} : \delta \in \R\right\}$ with parameter $\delta$ as the ``symmetric channel matrices,'' where each $W_{\delta}$ has the form \eqref{Eq: Symmetric Channel} such that every row and column sums to unity. We conclude this subsection with a list of properties of symmetric channel matrices.

\begin{proposition}[Properties of Symmetric Channel Matrices]
\label{Prop: Properties of Symmetric Channel Matrices}
The symmetric channel matrices, $\left\{W_{\delta} \in \R^{q \times q}_{\textsf{sym}} : \delta \in \R\right\}$, satisfy the following properties:
\begin{enumerate}
\item For every $\delta \in \R$, $W_{\delta}$ is a symmetric circulant matrix.
\item The DFT matrix $F_q \! \in \! \mathbb{C}^{q \times q}$, which is defined entry-wise as $\left[F_q\right]_{j,k} = q^{-1/2} \omega^{(j-1)(k-1)}$ for $1 \leq j,k \leq q$ where $\omega = \exp\left(2\pi i/q\right)$ and $i = \sqrt{-1}$, jointly diagonalizes $W_{\delta}$ for every $\delta \in \R$. Moreover, the corresponding eigenvalues or Fourier coefficients, $\{\lambda_j\left(W_{\delta}\right) = \left[F_q^H W_{\delta} F_q\right]_{j,j} : j = 1,\dots,q \}$ are real:
$$ \lambda_j\left(W_{\delta}\right) = \left\{
     \begin{array}{ll}
       1, & j = 1 \\
       1-\delta-\frac{\delta}{q-1}, & j = 2,\dots,q
     \end{array} \right. $$
where $F_q^H$ denotes the Hermitian transpose of $F_q$.
\item For all $\delta \in [0,1]$, $W_{\delta}$ is a doubly stochastic matrix that has the uniform pmf $\unif \triangleq (1/q,\dots,1/q)$ as its stationary distribution: $\unif W_{\delta} = \unif$.
\item For every $\delta \in \R\backslash\big\{\frac{q-1}{q}\big\}$, $W_{\delta}^{-1} = W_{\tau}$ with $\tau = -\delta/\big(1-\delta-\frac{\delta}{q-1}\big)$, and for $\delta = \frac{q-1}{q}$, $W_{\delta} = \frac{1}{q}\1 \1^T$ is unit rank and singular, where $\1 = [1 \cdots 1]^T$.
\item The set $\big\{W_{\delta} \in \R^{q \times q}_{\textsf{sym}}:\delta \in \R\backslash\big\{\frac{q-1}{q}\big\}\big\}$ with the operation of matrix multiplication is an Abelian group.
\end{enumerate}
\end{proposition}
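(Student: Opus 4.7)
The plan is to reduce all five claims to a single elementary observation: because the noise pmf $\w$ takes the value $\delta/(q-1)$ on every non-identity element of $\X$, the resulting matrix depends on the group only through the location of the identity, and in fact
$$ W_\delta \;=\; \Bigl(1 - \tfrac{q\delta}{q-1}\Bigr) I_q \;+\; \tfrac{\delta}{q-1}\,\mathbf{1}\mathbf{1}^T $$
for every Abelian group $(\X,\oplus)$. This rank-1 plus multiple-of-identity decomposition trivializes the whole proposition.

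For (1), symmetry is immediate from the decomposition since $I_q$ and $\mathbf{1}\mathbf{1}^T$ are symmetric; to see $W_\delta$ is circulant in the standard (cyclic) sense it suffices to note that every off-diagonal entry equals $\delta/(q-1)$, so the entries depend only on $(j-k)\bmod q$. For (2), observe that $\mathbf{1}/\sqrt{q}$, the first column of $F_q$, is an eigenvector of both $I_q$ and $\mathbf{1}\mathbf{1}^T$, hence of $W_\delta$ with eigenvalue $1 - q\delta/(q-1) + q\cdot\delta/(q-1) = 1$; the remaining $q-1$ columns of $F_q$ lie in the orthogonal complement of $\mathbf{1}$ (since the rows of $F_q$, excluding the first, sum to $0$), so they are annihilated by $\mathbf{1}\mathbf{1}^T$ and are eigenvectors of $W_\delta$ with eigenvalue $1-\delta-\delta/(q-1)$. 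For (3) with $\delta\in[0,1]$, each row of $W_\delta$ sums to $(1-\delta)+(q-1)\cdot\delta/(q-1)=1$, so $W_\delta$ is row-stochastic; symmetry then gives column-stochasticity and hence $\unif W_\delta = \unif$.

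For (4), note that the family $\{\alpha I_q + \beta \mathbf{1}\mathbf{1}^T:\alpha,\beta\in\R\}$ is closed under multiplication, since $\mathbf{1}^T\mathbf{1}=q$ gives
$$ (\alpha_1 I_q + \beta_1 \mathbf{1}\mathbf{1}^T)(\alpha_2 I_q + \beta_2 \mathbf{1}\mathbf{1}^T) = \alpha_1\alpha_2 I_q + (\alpha_1\beta_2+\beta_1\alpha_2+q\beta_1\beta_2)\mathbf{1}\mathbf{1}^T. $$
In particular $W_\delta W_\tau$ lies in this family, and equals $I_q$ exactly when the two non-unit Fourier coefficients multiply to $1$, i.e.\ $\bigl(1-\delta-\tfrac{\delta}{q-1}\bigr)\bigl(1-\tau-\tfrac{\tau}{q-1}\bigr)=1$. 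Solving for $\tau$ yields $\tau=-\delta/\bigl(1-\delta-\tfrac{\delta}{q-1}\bigr)$, and the inverse fails to exist precisely when the non-unit eigenvalue vanishes, i.e.\ $\delta=(q-1)/q$, in which case both summands of the decomposition combine to give exactly $\tfrac{1}{q}\mathbf{1}\mathbf{1}^T$.

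Finally, (5) is essentially a corollary: closure of the larger family shown above, together with the eigenvalue-matching used in (4), lets us solve for the unique $\gamma$ with $W_\delta W_\tau=W_\gamma$ whenever $\delta,\tau\neq (q-1)/q$ and $\gamma\neq(q-1)/q$; the identity is $W_0 = I_q$, inverses come from (4), and commutativity and associativity are automatic from the joint diagonalization by $F_q$ established in (2). There is no significant obstacle in this proof—everything reduces to the two-parameter algebra of $\{\alpha I_q+\beta \mathbf{1}\mathbf{1}^T\}$—and the only mild bookkeeping is verifying in (5) that $\gamma$ stays away from $(q-1)/q$, which follows immediately from $\lambda_2(W_\delta)\lambda_2(W_\tau)\neq 0$.
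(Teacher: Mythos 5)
Your proof is correct, and it takes a genuinely different (and arguably cleaner) route than the paper's. The paper proves part 2 by invoking the general theory of circulant matrices --- writing $\textsf{\small circ}_{\mathbb{Z}/q\mathbb{Z}}(x)=\sum_k x_k P_q^k = F_q\,\textsf{\small diag}(\sqrt{q}\,xF_q)F_q^H$ and evaluating the Fourier coefficients $\sum_k (\w)_k\,\omega^{(j-1)(k-1)}$ for the specific pmf $\w$ --- and proves parts 4 and 5 by direct entry-wise computation of the products $W_\tau W_\delta$ and $W_\epsilon W_\delta$. You instead observe once that $W_\delta=\bigl(1-\tfrac{q\delta}{q-1}\bigr)I_q+\tfrac{\delta}{q-1}\1\1^T$ and run everything through the closed two-parameter algebra $\{\alpha I_q+\beta\1\1^T\}$, with the eigenvalues read off from the action on $\1$ and its orthogonal complement. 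Your approach unifies parts 2, 4, and 5 and avoids the entry-by-entry verifications; what it gives up is generality --- the paper's circulant/DFT machinery is the one that extends to arbitrary additive noise channels $\textsf{\small circ}_{\X}(P_Z)$, which the authors need elsewhere, whereas your decomposition is special to the symmetric channel. Two trivial bookkeeping points: in part 3 you should also note that the entries $1-\delta$ and $\delta/(q-1)$ are non-negative for $\delta\in[0,1]$ (row sums equal to one alone do not give stochasticity), and in part 5 the fact that the product of two matrices $W_\delta, W_\tau$ lands back in the one-parameter subfamily (rather than merely in $\{\alpha I_q+\beta\1\1^T\}$) is exactly the preservation of the unit eigenvalue on $\1$, which you do implicitly use; making that explicit would close the argument completely.
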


\begin{proof} See Appendix \ref{App: Proofs of Propositions}.
\end{proof}

\subsection{Main question and motivation}
\label{Main Question and Motivation}

As we mentioned at the outset, our work is partly motivated by \cite[Section 6]{GraphSDPI}, where the authors demonstrate an intriguing relation between less noisy domination by an erasure channel and the contraction coefficient of the SDPI \eqref{Eq: SDPI}. For a common input alphabet $\X = [q]$, consider a channel $V \in \R^{q \times s}_{\textsf{sto}}$ and a \textit{$q$-ary erasure channel} $E_{\epsilon} \in \R^{q \times (q+1)}_{\textsf{sto}}$ with erasure probability $\epsilon \in [0,1]$. Recall that given an input $x \in \X$, a $q$-ary erasure channel erases $x$ and outputs $\textsf{\small e}$ (erasure symbol) with probability $\epsilon$, and outputs $x$ itself with probability $1-\epsilon$; the output alphabet of the erasure channel is $\left\{\textsf{\small e}\right\}\cup\X$. It is proved in \cite[Proposition 15]{GraphSDPI} that $E_{\epsilon} \succeq_{\textsf{\tiny ln}} V$ if and only if $\eta_{\textsf{\tiny KL}}\!\left(V\right) \leq 1-\epsilon$, where $\eta_{\textsf{\tiny KL}}\!\left(V\right) \in [0,1]$ is the \textit{contraction coefficient} for KL divergence:
\begin{equation}
\label{Eq: KL Contraction Coefficient}
\eta_{\textsf{\tiny KL}}\!\left(V\right) \triangleq \sup_{\substack{P_X,Q_X \in \P_q \\ 0 < D(P_X||Q_X) < +\infty}}{\frac{D\left(P_X V||Q_X V\right)}{D\left(P_X||Q_X\right)}}
\end{equation}
which equals the best possible constant $\eta$ in the SDPI \eqref{Eq: SDPI} when $V = P_{Y|X}$ (see \cite[Theorem 4]{GraphSDPI} and the references therein). This result illustrates that the $q$-ary erasure channel $E_{\epsilon}$ with the largest erasure probability $\epsilon \in [0,1]$ (or the smallest channel capacity) that is less noisy than $V$ has $\epsilon = 1-\eta_{\textsf{\tiny KL}}\!\left(V\right)$.\footnote{A $q$-ary erasure channel $E_{\epsilon}$ with erasure probability $\epsilon \in [0,1]$ has channel capacity $C(\epsilon) = \log(q) (1-\epsilon)$, which is linear and decreasing.} Furthermore, there are several simple upper bounds on $\eta_{\textsf{\tiny KL}}$ that provide sufficient conditions for such less noisy domination. For example, if the $\ell^1$-distances between the rows of $V$ are all bounded by $2 \alpha$ for some $\alpha \in [0,1]$, then $\eta_{\textsf{\tiny KL}}\!\left(V\right) \leq \alpha$, cf. \cite{CIR93}. Another criterion follows from \textit{Doeblin minorization} \cite[Remark III.2]{SDPIandSobolevInequalities}: if for some pmf $p \in \P_s$ and some $\alpha \in (0,1)$, $V \geq \alpha \, \1 p$ entry-wise, then $E_{\alpha} \succeq_{\textsf{\tiny deg}} V$ and $\eta_{\textsf{\tiny KL}}\!\left(V\right) \leq 1-\alpha$. 

To extend these ideas, we consider the following question: \textit{What is the $q$-ary symmetric channel $W_{\delta}$ with the largest value of $\delta \in \big[0,\frac{q-1}{q}\big]$ (or the smallest channel capacity) such that $W_{\delta} \succeq_{\textsf{\tiny ln}} V$?}\footnote{A $q$-ary symmetric channel $W_{\delta}$ with total crossover probability $\delta \in \big[0,\frac{q-1}{q}\big]$ has channel capacity $C(\delta) = \log(q) - H(\w)$, which is convex and decreasing. Here, $H(\w)$ denotes the Shannon entropy of the pmf $\w$.} Much like the bounds on $\eta_{\textsf{\tiny KL}}$ in the erasure channel context, the goal of this paper is to address this question by establishing simple criteria for testing $\succeq_{\textsf{\tiny ln}}$ domination by a $q$-ary symmetric channel. We next provide several other reasons why determining whether a $q$-ary symmetric channel dominates a given channel $V$ is interesting.
 
Firstly, if $W \succeq_{\textsf{\tiny ln}} V$, then $W^{\otimes n} \succeq_{\textsf{\tiny ln}} V^{\otimes n}$ (where $W^{\otimes n}$ is the $n$-fold tensor product of $W$) since $\succeq_{\textsf{\tiny ln}}$ tensorizes, and $I(U; Y_W^n) \geq I(U; Y_V^n)$ for every Markov chain $U \rightarrow X^n \rightarrow (Y_W^n,Y_V^n)$ (see Definition \ref{Def: Less Noisy Preorder}). Thus, many impossibility results (in statistical decision theory for example) that are proven by exhibiting bounds on quantities such as $I(U;Y_W^n)$ transparently carry over to statistical experiments with observations on the basis of $Y_V^n$. Since it is common to study the $q$-ary symmetric observation model (especially with $q=2$), we can leverage its sample complexity lower bounds for other $V$.

Secondly, we present a self-contained information theoretic motivation. $W \succeq_{\textsf{\tiny ln}} V$ if and only if $C_S = 0$, where $C_S$ is the secrecy capacity of the \textit{Wyner wiretap channel} with $V$ as the main (legal receiver) channel and $W$ as the eavesdropper channel \cite[Corollary 3]{CsiszarKorner1978}, \cite[Corollary 17.11]{CsiszarKorner}. Therefore, finding the maximally noisy $q$-ary symmetric channel that dominates $V$ establishes the minimal noise required on the eavesdropper link so that secret communication is feasible.

Thirdly, $\succeq_{\textsf{\tiny ln}}$ domination turns out to entail a comparison of Dirichlet forms (see subsection \ref{Comparison of Dirichlet Forms}), and consequently, allows us to prove \textit{Poincar\'{e} and logarithmic Sobolev inequalities} for $V$ from well-known results on $q$-ary symmetric channels. These inequalities are cornerstones of the modern approach to Markov chains and concentration of measure \cite{LogSobolevInequalitiesDiaconis,AnalysisofMixingTimes}.

\section{Main results}
\label{Main Results}

In this section, we first delineate some guiding sub-questions of our study, indicate the main results that address them, and then present these results in the ensuing subsections. We will delve into the following four leading questions: 
\begin{enumerate}
\item \textit{Can we test the less noisy preorder $\succeq_{\textsf{\tiny ln}}$ without using KL divergence?} \\
Yes, we can use $\chi^2$-divergence as shown in Theorem \ref{Thm: Chi Squared Divergence Characterization of Less Noisy}.
\item \textit{Given a channel $V \in \R^{q \times q}_{\textsf{sto}}$, is there a simple sufficient condition for less noisy domination by a $q$-ary symmetric channel $W_{\delta} \succeq_{\textsf{\tiny ln}} V$?} \\
Yes, a condition using degradation (which implies less noisy domination) is presented in Theorem \ref{Thm: Sufficient Condition for Degradation by Symmetric Channels}.
\item \textit{Can we say anything stronger about less noisy domination by a $q$-ary symmetric channel when $V \in \R^{q \times q}_{\textsf{sto}}$ is an additive noise channel?} \\
Yes, Theorem \ref{Thm: Additive Less Noisy Domination and Degradation Regions for Symmetric Channels} outlines the structure of additive noise channels in this context (and Figure \ref{Figure: Additive Noise Channel Domination Regions} depicts it).
\item \textit{Why are we interested in less noisy domination by $q$-ary symmetric channels?} \\
Because this permits us to compare Dirichlet forms as portrayed in Theorem \ref{Thm: Domination of Dirichlet Forms}.
\end{enumerate}
We next elaborate on these aforementioned theorems.

\subsection{$\chi^2$-divergence characterization of the less noisy preorder}
\label{Chi Squared Divergence Characterization of the Less Noisy Preorder}

Our most general result illustrates that although less noisy domination is a preorder defined using KL divergence, one can equivalently define it using $\chi^2$-divergence. Since we will prove this result for general measurable spaces, we introduce some notation pertinent only to this result. Let $(\X,\F)$, $(\Y_1,\H_1)$, and $(\Y_2,\H_2)$ be three measurable spaces, and let $W:\H_1 \times \X \rightarrow [0,1]$ and $V:\H_2 \times \X \rightarrow [0,1]$ be two \textit{Markov kernels} (or channels) acting on the same source space $(\X,\F)$. Given any probability measure $P_X$ on $(\X,\F)$, we denote by $P_X W$ the probability measure on $(\Y_1,\H_1)$ induced by the push-forward of $P_X$.\footnote{Here, we can think of $X$ and $Y$ as random variables with codomains $\X$ and $\Y$, respectively. The Markov kernel $W$ behaves like the conditional distribution of $Y$ given $X$ (under regularity conditions). Moreover, when the distribution of $X$ is $P_X$, the corresponding distribution of $Y$ is $P_Y = P_X W$.} Recall that for any two probability measures $P_X$ and $Q_X$ on $(\X,\F)$, their KL divergence is given by:
\begin{equation}
D\!\left(P_X||Q_X\right) \triangleq \left\{
\begin{array}{ll}
\displaystyle{\int_{\X}{\log\!\left(\frac{d P_X}{d Q_X}\right) d P_X}} , & \text{if} \enspace P_X \ll Q_X \\
+\infty , & \text{otherwise}
\end{array} \right.
\end{equation}
and their \textit{$\chi^2$-divergence} is given by:
\begin{equation}
\chi^2 \!\left(P_X||Q_X\right) \triangleq \left\{
\begin{array}{ll}
\displaystyle{\int_{\X}{\left(\frac{d P_X}{d Q_X}\right)^{\! 2} d Q_X} - 1} , & \text{if} \enspace P_X \ll Q_X \\
+\infty , & \text{otherwise}
\end{array} \right.
\end{equation}
where $P_X \ll Q_X$ denotes that $P_X$ is absolutely continuous with respect to $Q_X$, $\frac{d P_X}{d Q_X}$ denotes the Radon-Nikodym derivative of $P_X$ with respect to $Q_X$, and $\log\!\left(\cdot\right)$ is the natural logarithm with base $e$ (throughout this paper). Furthermore, the characterization of $\succeq_{\textsf{\tiny ln}}$ in Proposition \ref{Prop: KL Divergence Characterization of Less Noisy} extends naturally to general Markov kernels; indeed, $W \succeq_{\textsf{\tiny ln}} V$ if and only if $D(P_X W||Q_X W) \geq D(P_X V||Q_X V)$ for every pair of probability measures $P_X$ and $Q_X$ on $(\X,\F)$. The next theorem presents the $\chi^2$-divergence characterization of $\succeq_{\textsf{\tiny ln}}$.

\begin{theorem}[$\chi^2$-Divergence Characterization of $\succeq_{\textsf{\tiny ln}}$]
\label{Thm: Chi Squared Divergence Characterization of Less Noisy}
For any Markov kernels $W:\H_1 \times \X \rightarrow [0,1]$ and $V:\H_2 \times \X \rightarrow [0,1]$ acting on the same source space, $W \succeq_{\textsf{\tiny ln}} V$ if and only if:
$$ \chi^2 \!\left(P_X W||Q_X W\right) \geq \chi^2 \!\left(P_X V||Q_X V\right) $$
for every pair of probability measures $P_X$ and $Q_X$ on $(\X,\F)$.
\end{theorem}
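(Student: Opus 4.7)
The plan is to exploit two complementary quantitative links between KL and $\chi^2$ divergence: an integral representation of $D$ as a weighted integral of $\chi^2$'s along a one-parameter mixture (for the direction $\chi^2\text{-domination} \Rightarrow \succeq_{\textsf{\tiny ln}}$), and the local quadratic expansion $2 D(P_\epsilon \| Q) \sim \epsilon^2 \chi^2(P \| Q)$ as $P_\epsilon \to Q$ (for the converse). In both directions the key algebraic point is that applying $W$ (or $V$) commutes with convex combinations of input probability measures, so the argument on the output side is structurally identical to the argument on the input side.

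For the forward direction, fix arbitrary probability measures $P_X, Q_X$ on $(\X,\F)$ and define the interpolant $R_t \triangleq (1-t) Q_X + t P_X$ for $t \in [0,1]$, whose density against $Q_X$ is $f_t = 1 + t(f-1)$ with $f = dP_X/dQ_X$. A direct computation gives $\frac{d^2}{dt^2}(f_t \log f_t) = (f-1)^2/f_t$; Taylor's theorem applied twice, combined with Tonelli's theorem (the integrand is nonnegative), then yields the identity
\begin{equation*}
D(P_X \| Q_X) \;=\; \int_0^1 \frac{\chi^2(P_X \| R_t)}{1-t}\, dt,
\end{equation*}
after using the factorization $(f - f_t)^2 = (1-t)^2 (f-1)^2$. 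Since $R_t W = (1-t) Q_X W + t P_X W$, the very same identity expresses $D(P_X W \| Q_X W)$ as an integral of $\chi^2(P_X W \| R_t W)/(1-t)$, and likewise for $V$. Applying the $\chi^2$-hypothesis with input measures $P_X$ and $R_t$ gives $\chi^2(P_X W \| R_t W) \geq \chi^2(P_X V \| R_t V)$ for every $t \in [0,1)$, and integrating against $dt/(1-t)$ yields $D(P_X W \| Q_X W) \geq D(P_X V \| Q_X V)$, which by Proposition \ref{Prop: KL Divergence Characterization of Less Noisy} is $W \succeq_{\textsf{\tiny ln}} V$.

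For the converse, set $P_\epsilon \triangleq (1-\epsilon) Q_X + \epsilon P_X$. The elementary inequality $(1+x)\log(1+x) - x \leq x^2$ for $x \geq -1$, together with the pointwise equivalence $(1+x)\log(1+x) - x \sim x^2/2$ as $x \to 0$, delivers via dominated convergence (when $\chi^2(P_X \| Q_X) < \infty$) or Fatou's lemma (otherwise) the limit
\begin{equation*}
\lim_{\epsilon \to 0^+} \frac{2}{\epsilon^2}\, D(P_\epsilon \| Q_X) \;=\; \chi^2(P_X \| Q_X) \;\in\; [0, \infty],
\end{equation*}
which persists when $W$ (or $V$) is appended because $P_\epsilon W$ is the corresponding mixture on the output side. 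Dividing the less-noisy inequality $D(P_\epsilon W \| Q_X W) \geq D(P_\epsilon V \| Q_X V)$ by $\epsilon^2/2$ and sending $\epsilon \to 0^+$ produces $\chi^2(P_X W \| Q_X W) \geq \chi^2(P_X V \| Q_X V)$. The main obstacle in both directions is the measure-theoretic bookkeeping for arbitrary (non-discrete) spaces together with uniform handling of infinite divergences; on the forward side this is painless since the integrand is nonnegative so Tonelli is automatic, while on the converse side the bound $(1+x)\log(1+x) \leq x + x^2$ supplies an integrable dominator precisely when $\chi^2(P_X \| Q_X)$ is finite, and Fatou takes over otherwise.
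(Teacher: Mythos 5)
Your proof is correct and uses essentially the same two ingredients as the paper's: the local quadratic approximation of KL divergence by $\chi^2$-divergence for the direction $\succeq_{\textsf{\tiny ln}} \Rightarrow \chi^2$-domination, and the integral representation of KL divergence as a weighted average of $\chi^2$-divergences along a mixture path for the direction $\chi^2$-domination $\Rightarrow \succeq_{\textsf{\tiny ln}}$. The only cosmetic differences are that you parametrize the mixture over $t \in [0,1]$ rather than $t \in [0,\infty)$ as in the paper (related by $s = t/(1+t)$, so the two integral identities coincide after a change of variables), and that you re-derive the integral representation from Taylor's theorem with integral remainder rather than citing it, making the argument self-contained.
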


Theorem \ref{Thm: Chi Squared Divergence Characterization of Less Noisy} is proved in subsection \ref{Characterization using chi squared divergence}.

\subsection{Less noisy domination by symmetric channels}
\label{Less Noisy Domination by Symmetric Channels}

Our remaining results are all concerned with less noisy (and degraded) domination by $q$-ary symmetric channels. Suppose we are given a $q$-ary symmetric channel $W_{\delta} \in \R^{q \times q}_{\textsf{sto}}$ with $\delta \in [0,1]$, and another channel $V \in \R^{q \times q}_{\textsf{sto}}$ with common input and output alphabets. Then, the next result provides a sufficient condition for when $W_{\delta} \succeq_{\textsf{\tiny deg}} V$.

\begin{theorem}[Sufficient Condition for Degradation by Symmetric Channels]
\label{Thm: Sufficient Condition for Degradation by Symmetric Channels}
Given a channel $V \in \R^{q \times q}_{\textsf{sto}}$ with $q \geq 2$ and minimum probability $\nu = \min\left\{[V]_{i,j}:1 \leq i,j \leq q\right\}$, we have:
$$ 0 \leq \delta \leq \frac{\nu}{1-(q-1)\nu + \frac{\nu}{q-1}} \enspace \Rightarrow \enspace W_{\delta} \succeq_{\textsf{\tiny deg}} V . $$ 
\end{theorem}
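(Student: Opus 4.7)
My plan is to exhibit a stochastic matrix $A \in \R^{q\times q}_{\textsf{sto}}$ with $V = W_\delta A$, which is precisely the defining condition of $W_\delta \succeq_{\textsf{\tiny deg}} V$ in Definition \ref{Def: Degradation Preorder}. Observing from \eqref{Eq: Symmetric Channel} that $W_\delta = \lambda_0 I_q + \mu_0\, \1\1^T$ with $\lambda_0 \triangleq 1 - \delta q/(q-1)$ and $\mu_0 \triangleq \delta/(q-1)$ satisfying $\lambda_0 + q\mu_0 = 1$, Proposition \ref{Prop: Properties of Symmetric Channel Matrices} (item 4) guarantees that $W_\delta$ is invertible whenever $\delta \neq (q-1)/q$, with inverse of the same form. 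Solving $W_\delta A = V$, the only candidate is
$$A \triangleq W_\delta^{-1} V, \qquad A_{i,k} = \frac{V_{i,k} - \mu_0\, c_k}{\lambda_0}, \qquad c_k \triangleq \sum_{i=1}^q V_{i,k}.$$
(The boundary case $\delta = (q-1)/q$ is only permitted by the theorem's bound when $\nu = 1/q$, which forces every entry of $V$ to equal $1/q$, i.e.\ $V = W_{(q-1)/q}$; one then takes $A = I_q$.)

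By construction $W_\delta A = V$, so it remains only to verify that $A$ is stochastic. The row sums of $A$ equal one because $\sum_k c_k = q$ (each of the $q$ rows of $V$ sums to $1$) together with $\lambda_0 + q \mu_0 = 1$ yield $\sum_k A_{i,k} = (1 - q\mu_0)/\lambda_0 = 1$. The remaining, and crucial, requirement is entrywise nonnegativity: $V_{i,k} \geq \mu_0\, c_k$ for every $i$ and $k$.

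The main obstacle is to extract the precise constant in the theorem from the single hypothesis $\nu = \min_{i,k} V_{i,k}$. The key observation is that this lower bound on entries of $V$ converts, via the row-sum constraint, into the matching upper bound $V_{i,k} \leq 1 - (q-1)\nu$, since the other $q-1$ entries of row $i$ each contribute at least $\nu$. For fixed $k$ the ratio $(\min_i V_{i,k})/c_k$ is then minimized by a column having one entry equal to $\nu$ and the remaining $q-1$ saturating the upper bound at $1 - (q-1)\nu$; this follows from a simple monotonicity observation: for fixed minimum $m_k \triangleq \min_i V_{i,k}$, increasing any other entry of column $k$ decreases $m_k/c_k$. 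Hence
$$\min_{i,k} \frac{V_{i,k}}{c_k} \;\geq\; \frac{\nu}{\nu + (q-1)\bigl(1 - (q-1)\nu\bigr)}.$$
Imposing that $\mu_0 = \delta/(q-1)$ not exceed this ratio and dividing numerator and denominator by $q-1$ yields exactly $\delta \leq \nu / \bigl(1 - (q-1)\nu + \nu/(q-1)\bigr)$, which is the stated threshold.
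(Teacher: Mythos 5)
Your proof is correct, and it takes a genuinely different — and more direct — route than the paper. You exploit the scalar-plus-rank-one form $W_\delta = \lambda_0 I_q + \mu_0 \1\1^T$ to write down $A = W_\delta^{-1}V$ explicitly as $A_{i,k} = (V_{i,k} - \mu_0 c_k)/\lambda_0$, verify that rows sum to one using $\lambda_0 + q\mu_0 = 1$, and then reduce entrywise nonnegativity to a lower bound on $V_{i,k}/c_k$. The two-step monotonicity argument ($x/(x+s)$ increasing in $x$, decreasing in $s$, with $V_{i,k}\ge\nu$ and $\sum_{j\neq i}V_{j,k}\le(q-1)(1-(q-1)\nu)$) cleanly yields $V_{i,k}/c_k \ge \nu/(\nu+(q-1)(1-(q-1)\nu))$, which after division by $q-1$ is exactly the theorem's threshold, and your treatment of the boundary $\delta = (q-1)/q$ (forcing $V = W_{(q-1)/q}$) is correct. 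The paper instead first writes each row of $V$ as a convex combination of cyclic permutations of $w_{(q-1)\nu}$ (via the majorization machinery of Appendix \ref{App: Basics of Majorization Theory}), decomposes the resulting stochastic $P$ into row-selection atoms $E_{j_1,\dots,j_q}$, checks that $W_\delta^{-1}E_{j_1,\dots,j_q}W_{(q-1)\nu}$ is stochastic for each atom using $W_\delta^{-1}=W_\tau$ with $\tau<0$, and finally invokes convexity of the degradation region. Your route is shorter and avoids the convex decomposition entirely; the paper's route makes the extremal channel $E_{2,1,\dots,1}W_{(q-1)\nu}$ more visible (this is what the authors then use to argue tightness), but your monotonicity argument implicitly identifies the same worst-case column profile — one entry at $\nu$ and the other $q-1$ saturating $1-(q-1)\nu$ — so nothing is lost.
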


Theorem \ref{Thm: Sufficient Condition for Degradation by Symmetric Channels} is proved in section \ref{Sufficient Conditions for Degradation over General Channels}. We note that the sufficient condition in Theorem \ref{Thm: Sufficient Condition for Degradation by Symmetric Channels} is tight as there exist channels $V$ that violate $W_{\delta} \succeq_{\textsf{\tiny deg}} V$ when $\delta > \nu/(1-(q-1)\nu + \frac{\nu}{q-1})$. Furthermore, Theorem \ref{Thm: Sufficient Condition for Degradation by Symmetric Channels} also provides a sufficient condition for $W_{\delta} \succeq_{\textsf{\tiny ln}} V$ due to Proposition \ref{Prop: Relations between Channel Preorders}.

\subsection{Structure of additive noise channels}
\label{Structure of Additive Noise Channels}

Our next major result is concerned with understanding when $q$-ary symmetric channels operating on an Abelian group $(\X,\oplus)$ dominate other additive noise channels on $(\X,\oplus)$, which are defined in \eqref{Eq: Additive Noise Channel}, in the less noisy and degraded senses. Given a symmetric channel $W_{\delta} \in \R^{q \times q}_{\textsf{sto}}$ with $\delta \in [0,1]$, we define the \textit{additive less noisy domination region} of $W_{\delta}$ as:
\begin{equation}
\label{Eq: Additive Domination Region for Symmetric Channel}
\L_{W_{\delta}}^{\textsf{add}} \triangleq \left\{v \in \P_q : W_{\delta} = \textsf{\small circ}_{\X}\!\left(\w\right) \succeq_{\textsf{\tiny ln}} \textsf{\small circ}_{\X}\!\left(v\right)\right\}
\end{equation}
which is the set of all noise pmfs whose corresponding channel transition probability matrices are dominated by $W_{\delta}$ in the less noisy sense. Likewise, we define the \textit{additive degradation region} of $W_{\delta}$ as:
\begin{equation}
\label{Eq: Additive Degradation Region for Symmetric Channel}
\D_{W_{\delta}}^{\textsf{add}} \triangleq \left\{v \in \P_q : W_{\delta} = \textsf{\small circ}_{\X}\!\left(\w\right) \succeq_{\textsf{\tiny deg}} \textsf{\small circ}_{\X}\!\left(v\right)\right\}
\end{equation}
which is the set of all noise pmfs whose corresponding channel transition probability matrices are degraded versions of $W_{\delta}$. The next theorem exactly characterizes $\D_{W_{\delta}}^{\textsf{add}}$, and ``bounds'' $\L_{W_{\delta}}^{\textsf{add}}$ in a set theoretic sense. 

\begin{theorem}[Additive Less Noisy Domination and Degradation Regions for Symmetric Channels]
\label{Thm: Additive Less Noisy Domination and Degradation Regions for Symmetric Channels}
Given a symmetric channel $W_{\delta} = \textsf{\small circ}_{\X}\!\left(\w\right) \in \R^{q \times q}_{\textsf{sto}}$ with $\delta \in \big[0,\frac{q-1}{q}\big]$ and $q \geq 2$, we have:
\begin{align*}
\D_{W_{\delta}}^{\textsf{add}} & = \textsf{\small conv}\left(\left\{\w P_q^k : k \in [q]\right\}\right) \\
& \subseteq \textsf{\small conv}\left(\left\{\w P_q^k : k \in [q]\right\} \cup \left\{w_{\gamma} P_q^k : k \in [q]\right\}\right) \\
& \subseteq \L_{W_{\delta}}^{\textsf{add}} \subseteq \left\{v \in \P_q : \left\|v - \unif\right\|_{\ell^2} \leq \left\|\w - \unif\right\|_{\ell^2} \right\}
\end{align*}
where the first set inclusion is strict for $\delta \in \big(0,\frac{q-1}{q}\big)$ and $q \geq 3$, $P_q$ denotes the generator cyclic permutation matrix as defined in \eqref{Eq: Generator Cyclic Permutation Matrix}, $\unif$ denotes the uniform pmf, $\left\|\cdot\right\|_{\ell^2}$ is the Euclidean $\ell^2$-norm, and:
$$ \gamma = \frac{1-\delta}{1-\delta+\frac{\delta}{(q-1)^2}} . $$
Furthermore, $\L_{W_{\delta}}^{\textsf{add}}$ is a closed and convex set that is invariant under the permutations $\left\{P_x \in \R^{q \times q} : x \in \X\right\}$ defined in \eqref{Eq: Permutation Representation} corresponding to the underlying Abelian group $(\X,\oplus)$ (i.e. $v \in \L_{W_{\delta}}^{\textsf{add}} \Rightarrow v P_x \in \L_{W_{\delta}}^{\textsf{add}}$ for every $x \in \X$).
\end{theorem}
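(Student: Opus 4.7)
The plan is to establish each of the set inclusions in the theorem separately, drawing on two main tools from earlier in the paper: Theorem~\ref{Thm: Chi Squared Divergence Characterization of Less Noisy} (the $\chi^2$-divergence characterization of $\succeq_{\textsf{\tiny ln}}$) and the joint Fourier diagonalization of symmetric channel matrices from Proposition~\ref{Prop: Properties of Symmetric Channel Matrices}. Most of the individual claims are routine structural consequences; the one genuinely deep step is showing that the single point $w_\gamma$ lies in $\L_{W_{\delta}}^{\textsf{add}}$.

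The equality $\D_{W_{\delta}}^{\textsf{add}} = \textsf{\small conv}(\{\w\,P_q^k : k \in [q]\})$ is a group-symmetrization argument. The $\supseteq$ direction is immediate because each $\w\,P_q^k$ is obtained by applying a permutation stochastic matrix after $W_{\delta}$, and the set of degraded channels is convex. Conversely, if $\textsf{\small circ}_\X(v) = W_{\delta} A$ for some stochastic $A$, then the symmetrized witness $\tilde A = \tfrac{1}{q}\sum_{x \in \X} P_x^T A P_x$ is still stochastic and still valid (because $W_{\delta}$ and $\textsf{\small circ}_\X(v)$ both commute with every $P_x$) and is itself group-circulant, so $\tilde A = \textsf{\small circ}_\X(a)$ for some $a \in \P_q$; then $v = \w * a = \sum_k a_k\,\w\,P_q^k$ is the desired convex combination. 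For the outer $\ell^2$-bound on $\L_{W_{\delta}}^{\textsf{add}}$, apply Theorem~\ref{Thm: Chi Squared Divergence Characterization of Less Noisy} with $Q = \unif$ and $P$ a point mass at $0$; using $\unif W = \unif$ (doubly stochastic), the inequality specializes to $q \|\w - \unif\|_{\ell^2}^2 \geq q \|v - \unif\|_{\ell^2}^2$. Convexity of $\L_{W_{\delta}}^{\textsf{add}}$ then follows from joint convexity of $\chi^2(\cdot \| \cdot)$ plus Theorem~\ref{Thm: Chi Squared Divergence Characterization of Less Noisy}; closedness from lower semicontinuity; and permutation invariance from $P_x W_{\delta} = W_{\delta} P_x$ together with the joint permutation-invariance of $\chi^2$.

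The heart of the theorem is the single-point inclusion $w_\gamma \in \L_{W_{\delta}}^{\textsf{add}}$; once this is in hand, convexity and permutation-invariance of $\L_{W_{\delta}}^{\textsf{add}}$ deliver the middle inclusion. The key algebraic identity that pins down this particular $\gamma$ is $W_\gamma = c\,W_{\delta} + (1-c)\,U'$, where $U' = \tfrac{1}{q}\1\1^T = W_{(q-1)/q}$ and $c = -1/(1+(q-2)\theta)$ with $\theta = 1 - \delta q/(q-1)$ the non-unit Fourier eigenvalue of $W_{\delta}$ (Proposition~\ref{Prop: Properties of Symmetric Channel Matrices}). Since $c < 0$ throughout $\delta \in (0,(q-1)/q)$, this is a signed rather than convex combination, which is precisely why the degradation argument fails and a genuine less-noisy argument is forced. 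Invoking Theorem~\ref{Thm: Chi Squared Divergence Characterization of Less Noisy} and writing $P = Q + m$ with $\sum_k m_k = 0$, the exact identity $\chi^2(PW\|QW) = \sum_k (mW)_k^2 / (QW)_k$ reduces the desired inequality to the L\"owner condition
\[
W_{\delta}\,D^{-1}_{QW_{\delta}}\,W_{\delta} \;\succeq_{\textsf{\tiny PSD}}\; W_\gamma\,D^{-1}_{QW_\gamma}\,W_\gamma \quad \text{on } \1^\perp, \text{ for every } Q \in \P_q,
\]
and the chosen value of $|c|$ is the sharpest one for which this holds. The main obstacle is verifying this bound uniformly in $Q$; my plan is to use the group symmetry ($Q \mapsto Q P_x$ leaves the condition invariant) together with convexity/extremality arguments to reduce to the case when $Q$ is a point mass (so $QW_{\delta} = \w$ and $QW_\gamma = w_\gamma$), and then to diagonalize the resulting finite quadratic form in the Fourier basis, using the residual stabilizer symmetry (the permutations fixing $\w$) to reduce to a $2 \times 2$ block whose determinant vanishes exactly when $|c| = 1/(1+(q-2)\theta)$. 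Finally, the strict inclusion for $q \geq 3$ and $\delta \in (0,(q-1)/q)$ is arithmetic: every entry of a vector in $\textsf{\small conv}(\{\w\,P_q^k\})$ lies in $[\delta/(q-1),\,1-\delta]$, whereas a direct computation gives $1 - \gamma = \delta/[(q-1)^2(1-\delta) + \delta]$, which is strictly less than $\delta/(q-1)$ under these hypotheses, so $w_\gamma$ falls outside the degradation polytope.
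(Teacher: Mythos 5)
Your skeleton matches the paper's exactly: establish the degradation equality, push the outer $\ell^2$ bound via the $\chi^2$-characterization at a vertex, get the middle inclusion from a single less-noisy claim $W_{\delta} \succeq_{\textsf{\tiny ln}} W_{\gamma}$ plus convexity and permutation invariance, and derive closedness/convexity/invariance of $\L_{W_{\delta}}^{\textsf{add}}$ directly. Several routine pieces are handled slightly differently but correctly: your group-symmetrization argument for the degradation equality (average the witness $A$ over conjugation by $\{P_x\}$ to land in the $\X$-circulant algebra) replaces the paper's first-column extraction in Lemma~\ref{Lemma: Additive Noise Channel Degradation}, and your strictness argument (the minimum entry of $w_{\gamma}$, namely $1-\gamma = \delta/[(q-1)^2(1-\delta)+\delta]$, drops strictly below $\delta/(q-1)$ for $q \geq 3$, $\delta \in (0,(q-1)/q)$) is a clean substitute for the paper's reference to the proof of Proposition~\ref{Prop: Less Noisy Domination by Symmetric Channels}. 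Your signed-combination identity $W_{\gamma} = c\,W_{\delta} + (1-c)\,W_{(q-1)/q}$ with $c=-1/(1+(q-2)\theta)$ is correct and is a nice alternative way to pin down $\gamma$.

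The genuine gap is the verification of $W_{\delta} \succeq_{\textsf{\tiny ln}} W_{\gamma}$ itself. Two issues. First, the reduction from all $Q \in \P_q^{\circ}$ to point masses does not follow from ``group symmetry plus convexity'' in the form you stated it: the condition $W_{\delta}\,\textsf{\small diag}(QW_{\delta})^{-1}W_{\delta} \succeq_{\textsf{\tiny PSD}} W_{\gamma}\,\textsf{\small diag}(QW_{\gamma})^{-1}W_{\gamma}$ involves inverted diagonal matrices and is \emph{not} affine in $Q$, so you cannot directly pass to extreme points. The paper first applies a pair of nonsingular $*$-congruences (pre- and post-multiplying by $W_{\gamma}^{-1}$ and then by $\textsf{\small diag}(QW_{\gamma})^{-1/2}$) to reach the \emph{affine-in-$Q$} form $\textsf{\small diag}(QW_{\gamma}) \succeq_{\textsf{\tiny PSD}} W_{\tau}\,\textsf{\small diag}(QW_{\delta})\,W_{\tau}$, and only then invokes linearity in $Q$ to reduce to vertices. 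Second, the Fourier/stabilizer plan as stated is incomplete: conjugation invariance under the stabilizer $S_{q-1}$ of $\w$ splits $\R^q$ into a $2$-dimensional trivial isotypic part \emph{and} a $(q-2)$-dimensional standard isotypic part. You address only the $2\times 2$ block, and even there a vanishing determinant is necessary but not sufficient for positive semidefiniteness (you still need a sign condition on the trace or a diagonal entry); you say nothing about the eigenvalue on the $(q-2)$-dimensional block. The paper avoids both of these by a different device: after the congruences it shows the resulting matrix is entry-wise nonnegative with known Perron eigenvector $\sqrt{w_{\gamma}P_q^x}$ and eigenvalue $1$, so Perron--Frobenius immediately gives operator norm $\leq 1$ for all blocks simultaneously, and the nonnegativity threshold is a single scalar computation in $\gamma$ that yields exactly $\gamma = (1-\delta)/(1-\delta+\delta/(q-1)^2)$. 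Your Fourier plan is plausible as a route, but as written it does not constitute a proof.
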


Theorem \ref{Thm: Additive Less Noisy Domination and Degradation Regions for Symmetric Channels} is a compilation of several results. As explained at the very end of subsection \ref{Sufficient Conditions}, Proposition \ref{Prop: Additive Less Noisy Domination and Degradation Regions} (in subsection \ref{Less Noisy Domination and Degradation Regions for Additive
Noise Channels}), Corollary \ref{Cor: Degradation Region of Symmetric Channel} (in subsection \ref{Less Noisy Domination and Degradation Regions for Symmetric Channels}), part 1 of Proposition \ref{Prop: Necessary Conditions for Less Noisy Domination over Additive Noise Channels} (in subsection \ref{Necessary Conditions}), and Proposition \ref{Prop: Less Noisy Domination by Symmetric Channels} (in subsection \ref{Sufficient Conditions}) make up Theorem \ref{Thm: Additive Less Noisy Domination and Degradation Regions for Symmetric Channels}. We remark that according to numerical evidence, the second and third set inclusions in Theorem \ref{Thm: Additive Less Noisy Domination and Degradation Regions for Symmetric Channels} appear to be strict, and $\L_{W_{\delta}}^{\textsf{add}}$ seems to be a strictly convex set. The content of Theorem \ref{Thm: Additive Less Noisy Domination and Degradation Regions for Symmetric Channels} and these observations are illustrated in Figure \ref{Figure: Additive Noise Channel Domination Regions}, which portrays the probability simplex of noise pmfs for $q = 3$ and the pertinent regions which capture less noisy domination and degradation by a $q$-ary symmetric channel. 

\begin{figure}[!t] 
\centering
\includegraphics[trim = 18mm 145mm 38mm 16mm, width=0.6\linewidth]{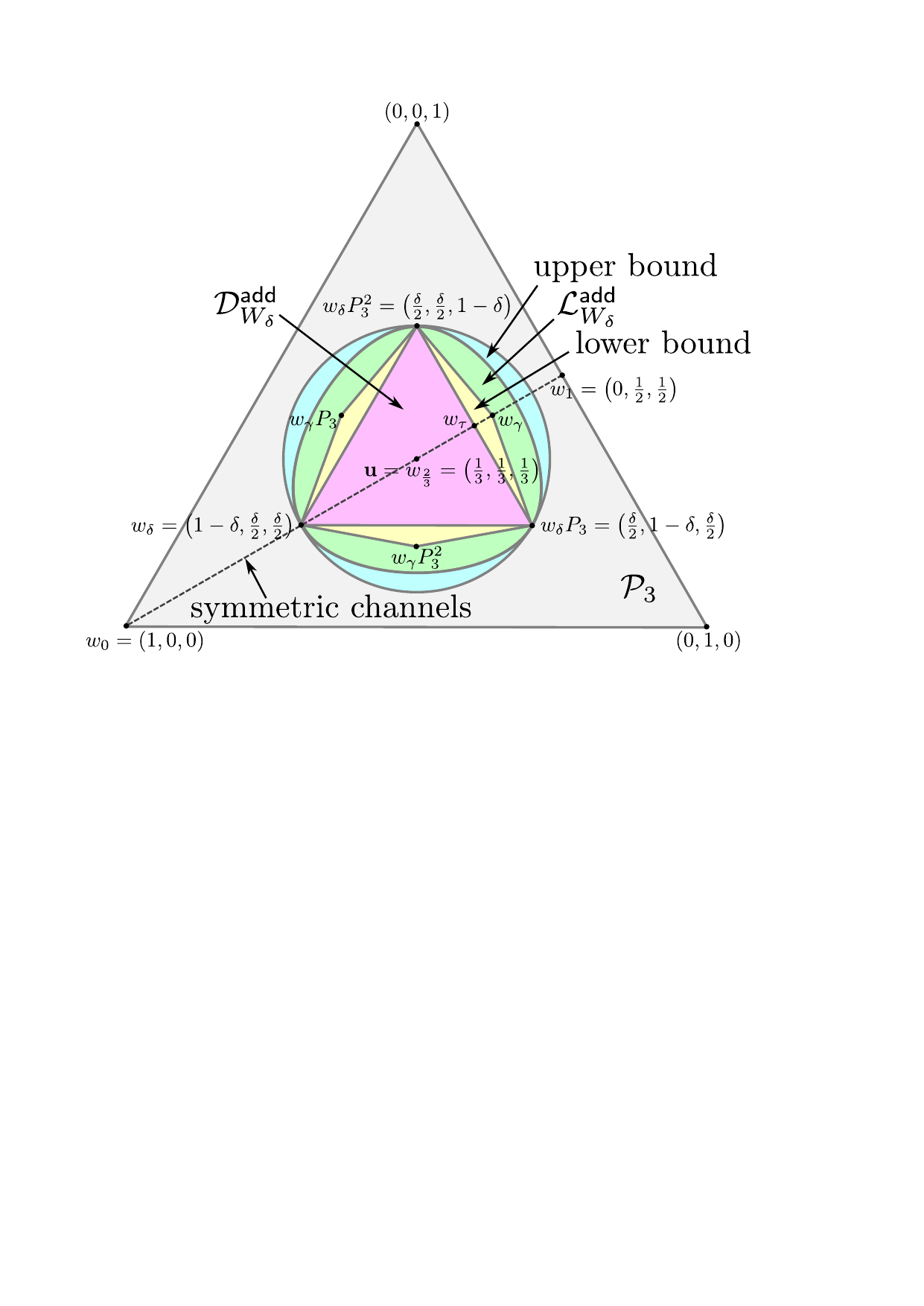} 
\caption{Illustration of the additive less noisy domination region and additive degradation region for a $q$-ary symmetric channel when $q = 3$ and $\delta \in \left(0,2/3\right)$: The gray triangle denotes the probability simplex of noise pmfs $\P_3$. The dotted line denotes the parametrized family of noise pmfs of $3$-ary symmetric channels $\left\{\w \in \P_3:\delta \in [0,1]\right\}$; its noteworthy points are $w_{0}$ (corner of simplex, $W_0$ is less noisy than every channel), $w_{\delta}$ for some fixed $\delta \in \left(0,2/3\right)$ (noise pmf of $3$-ary symmetric channel $W_{\delta}$ under consideration), $w_{2/3} = \unif$ (uniform pmf, $W_{2/3}$ is more noisy than every channel), $w_{\tau}$ with $\tau = 1-\left(\delta/2\right)$ ($W_{\tau}$ is the extremal symmetric channel that is degraded by $W_{\delta}$), $w_{\gamma}$ with $\gamma = (1-\delta)/(1-\delta+(\delta/4))$ ($W_{\gamma}$ is a $3$-ary symmetric channel that is not degraded by $W_{\delta}$ but $W_{\delta} \succeq_{\textsf{\tiny ln}} W_{\gamma}$), and $w_{1}$ (edge of simplex). The magenta triangle denotes the additive degradation region $\textsf{\scriptsize conv}\!\left(\left\{\w,\w P_3,\w P_3^2\right\}\right)$ of $W_{\delta}$. The green convex region denotes the additive less noisy domination region of $W_{\delta}$, and the yellow region $\textsf{\scriptsize conv}\!\left(\left\{\w,\w P_3,\w P_3^2,w_{\gamma},w_{\gamma} P_3,w_{\gamma} P_3^2\right\}\right)$ is its lower bound while the circular cyan region $\left\{v \in \mathcal{P}_3 : \left\|v - \textbf{u}\right\|_{\ell^2} \leq \left\|w_{\delta} - \textbf{u}\right\|_{\ell^2}\right\}$ (which is a hypersphere for general $q \geq 3$) is its upper bound. Note that we do not need to specify the underlying group because there is only one group of order $3$.}
\label{Figure: Additive Noise Channel Domination Regions}
\end{figure}

\subsection{Comparison of Dirichlet forms}
\label{Comparison of Dirichlet Forms}

As mentioned in subsection \ref{Main Question and Motivation}, one of the reasons we study $q$-ary symmetric channels and prove Theorems \ref{Thm: Sufficient Condition for Degradation by Symmetric Channels} and \ref{Thm: Additive Less Noisy Domination and Degradation Regions for Symmetric Channels} is because less noisy domination implies useful bounds between Dirichlet forms. Recall that the $q$-ary symmetric channel $W_{\delta} \in \R^{q \times q}_{\textsf{sto}}$ with $\delta \in [0,1]$ has uniform stationary distribution $\unif \in \P_q$ (see part 3 of Proposition \ref{Prop: Properties of Symmetric Channel Matrices}). For any channel $V \in \R^{q \times q}_{\textsf{sto}}$ that is doubly stochastic and has uniform stationary distribution, we may define a corresponding \textit{Dirichlet form}:
\begin{equation}
\forall f \in \R^q, \enspace \Dirichlet_V\left(f,f\right) = \frac{1}{q} f^T \left(I_q - V\right) f
\end{equation}
where $f = \left[f_1 \cdots f_q\right]^T \in \R^q$ are column vectors, and $I_q \in \R^{q \times q}$ denotes the $q \times q$ identity matrix (as shown in \cite{LogSobolevInequalitiesDiaconis} or \cite{AnalysisofMixingTimes}). Our final theorem portrays that $W_{\delta} \succeq_{\textsf{\tiny ln}} V$ implies that the Dirichlet form corresponding to $V$ dominates the Dirichlet form corresponding to $W_{\delta}$ pointwise. The Dirichlet form corresponding to $W_{\delta}$ is in fact a scaled version of the so called \textit{standard Dirichlet form}:
\begin{equation}
\label{Eq: Standard Dirichlet Form Quadratic Form}
\forall f \in \R^q, \enspace \Dirichlet_{\textsf{std}}\left(f,f\right) \triangleq \VAR_{\unif}(f) = \frac{1}{q}\sum_{k = 1}^{q}{f_k^2} - \left(\frac{1}{q}\sum_{k = 1}^{q}{f_k}\right)^{\!2}
\end{equation}
which is the Dirichlet form corresponding to the $q$-ary symmetric channel $W_{(q-1)/q} = \1\unif$ with all uniform conditional pmfs. Indeed, using $I_q - W_{\delta} = \frac{q\delta}{q-1}(I_q - \1\unif)$, we have:
\begin{equation}
\label{Eq: Symmetric Channel Dirichlet Form Quadratic Form}
\forall f \in \R^q, \enspace \Dirichlet_{W_{\delta}}\left(f,f\right) = \frac{q\delta}{q-1}\Dirichlet_{\textsf{std}}\left(f,f\right) .
\end{equation}
The standard Dirichlet form is the usual choice for Dirichlet form comparison because its logarithmic Sobolev constant has been precisely computed in \cite[Appendix, Theorem A.1]{LogSobolevInequalitiesDiaconis}. So, we present Theorem \ref{Thm: Domination of Dirichlet Forms} using $\Dirichlet_{\textsf{std}}$ rather than $\Dirichlet_{W_{\delta}}$.

\begin{theorem}[Domination of Dirichlet Forms]
\label{Thm: Domination of Dirichlet Forms}
Given the doubly stochastic channels $W_{\delta} \in \R^{q \times q}_{\textsf{sto}}$ with $\delta \in \big[0,\frac{q-1}{q}\big]$ and $V \in \R^{q \times q}_{\textsf{sto}}$, if $W_{\delta} \succeq_{\textsf{\tiny ln}} V$, then:
$$ \forall f \in \R^q, \enspace \Dirichlet_{V}\left(f,f\right) \geq \frac{q\delta}{q-1}\Dirichlet_{\textsf{std}}\left(f,f\right) . $$ 
\end{theorem}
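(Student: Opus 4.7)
The plan is to recast the claim as the matrix PSD inequality $W_{\delta} - V^{\mathrm{sym}} \succeq_{\textsf{\tiny PSD}} 0$ where $V^{\mathrm{sym}} := (V + V^T)/2$, and then to extract this PSD statement from the $\chi^2$-divergence characterization of less noisy domination (Theorem \ref{Thm: Chi Squared Divergence Characterization of Less Noisy}). By identity \eqref{Eq: Symmetric Channel Dirichlet Form Quadratic Form}, the target inequality is equivalent to $f^T(W_{\delta} - V)f \geq 0$ for every $f \in \R^q$, and since $f^T V f = f^T V^{\mathrm{sym}} f$ this is exactly $W_{\delta} \succeq_{\textsf{\tiny PSD}} V^{\mathrm{sym}}$. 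Proposition \ref{Prop: Properties of Symmetric Channel Matrices} gives $W_{\delta}\1 = \1$ and $W_{\delta}|_{\1^{\perp}} = \beta \, I$ with $\beta := 1 - \frac{q\delta}{q-1} \in [0,1]$, and double stochasticity of $V$ gives $V^{\mathrm{sym}}\1 = \1$. Decomposing $f = c\1 + h$ with $\1^T h = 0$, the remaining task is therefore to show $h^T V^{\mathrm{sym}} h \leq \beta \|h\|_2^2$ for every mean-zero $h \in \R^q$.

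To produce this mean-zero bound, I apply Theorem \ref{Thm: Chi Squared Divergence Characterization of Less Noisy} with reference measure $Q_X = \unif$. Because $\unif W_{\delta} = \unif V = \unif$, writing $P_X = \unif + h$ with $\1^T h = 0$ and $h_i \geq -1/q$, and using $\chi^2(P \| \unif) = q\|P - \unif\|_2^2$, turns the chi-squared inequality into $\|h W_{\delta}\|_2^2 \geq \|h V\|_2^2$; by quadratic homogeneity this extends to every mean-zero $h \in \R^q$. Transposing to column vectors and using $W_{\delta}^T = W_{\delta}$, the bound reads $\beta^2 \|f\|_2^2 = f^T W_{\delta}^2 f \geq f^T V V^T f = \|V^T f\|_2^2$ for every $f \in \1^{\perp}$. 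Because $V V^T$ and $V^T V$ share the same spectrum and both fix $\1$, the analogous bound $\|V f\|_2^2 \leq \beta^2 \|f\|_2^2$ also holds on $\1^{\perp}$.

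The final step is Cauchy-Schwarz: for every mean-zero $h \in \R^q$,
\begin{equation*}
h^T V h \;=\; \langle h, V h \rangle \;\leq\; \|h\|_2 \, \|V h\|_2 \;\leq\; \beta \, \|h\|_2^2,
\end{equation*}
where the last inequality uses $\|Vh\|_2 \leq |\beta|\,\|h\|_2 = \beta\,\|h\|_2$, valid because the hypothesis $\delta \leq (q-1)/q$ forces $\beta \geq 0$. Combined with the decomposition in the first paragraph, this gives $f^T(W_{\delta} - V^{\mathrm{sym}})f \geq 0$ for every $f \in \R^q$, which is the desired Dirichlet form comparison.

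I expect the Cauchy-Schwarz passage to be the main obstacle. Theorem \ref{Thm: Chi Squared Divergence Characterization of Less Noisy} naturally produces a \emph{squared} operator inequality ($W_{\delta}^2 \succeq V V^T$ on $\1^{\perp}$) rather than the first-order one ($W_{\delta} \succeq V^{\mathrm{sym}}$) actually needed for Dirichlet form comparison, and bridging these two crucially relies on $\beta \geq 0$. This is precisely why the hypothesis $\delta \leq (q-1)/q$ appears in the statement, and it is why I do not see a proof route that avoids this squaring while retaining only the doubly stochastic structure.
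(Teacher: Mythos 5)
Your proof is correct, and it takes a genuinely lighter route than the paper's through the central ``de-squaring'' step, though the overall skeleton coincides. Both arguments reduce the claim to the L\"{o}wner inequality $W_\delta \succeq_{\textsf{\tiny PSD}} (V+V^T)/2$ and both first extract $W_\delta^2 \succeq_{\textsf{\tiny PSD}} VV^T$ (restricted to $\1^\perp$) from the less noisy hypothesis --- you directly from Theorem~\ref{Thm: Chi Squared Divergence Characterization of Less Noisy} at $Q_X = \unif$, the paper via part~1 of Theorem~\ref{Thm: Extended Domination of Dirichlet Forms}, which performs the same computation through the L\"{o}wner characterization. Where the two diverge is in passing from the squared inequality to the unsquared one. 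The paper invokes the L\"{o}wner--Heinz theorem to get $W_\delta \succeq_{\textsf{\tiny PSD}} (VV^T)^{1/2}$ and then Fan's inequality $\lambda_i\big((AA^T)^{1/2}\big) \geq \lambda_i\big((A+A^T)/2\big)$, together with the fact that $W_\delta$ is simultaneously diagonalizable in both eigenbases because it acts as the scalar $1-\tfrac{q\delta}{q-1}$ on $\1^\perp$. You instead exploit that scalar structure head-on: the squared inequality is literally the operator-norm bound $\|V^T h\|_2 \leq \big(1-\tfrac{q\delta}{q-1}\big)\|h\|_2$ on $\1^\perp$, this transfers to $V$ because $VV^T$ and $V^TV$ are isospectral with the same invariant splitting $\R^q = \textsf{\small span}(\1) \oplus \1^\perp$, and Cauchy--Schwarz then gives $h^T V h \leq \|h\|_2\,\|Vh\|_2 \leq \big(1-\tfrac{q\delta}{q-1}\big)\|h\|_2^2$. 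In effect you reprove, from scratch, the single instance of Fan's inequality that is actually needed, and bypass L\"{o}wner--Heinz entirely. (You could even skip the $VV^T$/$V^TV$ transfer: $h^TVh = \langle V^Th,h\rangle \leq \|V^Th\|_2\,\|h\|_2$ already suffices.) The paper's heavier machinery is not wasted, however --- it also establishes part~2 of Theorem~\ref{Thm: Extended Domination of Dirichlet Forms}, where $W$ is a general positive semidefinite doubly stochastic matrix that is not a scalar on $\1^\perp$, a regime where your shortcut does not apply. Your closing observation that the Cauchy--Schwarz passage requires $1-\tfrac{q\delta}{q-1}\geq 0$, i.e.\ $\delta \leq \tfrac{q-1}{q}$, is accurate and corresponds exactly to the paper's use of $W_\delta \in \R^{q\times q}_{\succeq 0}$.
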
   

An extension of Theorem \ref{Thm: Domination of Dirichlet Forms} is proved in section \ref{Less Noisy Domination and Logarithmic Sobolev Inequalities}. The domination of Dirichlet forms shown in Theorem \ref{Thm: Domination of Dirichlet Forms} has several useful consequences. A major consequence is that we can immediately establish Poincar\'{e} (spectral gap) inequalities and logarithmic Sobolev inequalities (LSIs) for the channel $V$ using the corresponding inequalities for $q$-ary symmetric channels. For example, the LSI for $W_{\delta} \in \R^{q \times q}_{\textsf{sto}}$ with $q > 2$ is:
\begin{equation}
\label{Eq: LSI for Symmetric Channel}
D\left(f^2 \unif || \unif\right) \leq \frac{(q-1)\log(q-1)}{(q-2)\delta} \, \Dirichlet_{W_{\delta}}\left(f,f\right)
\end{equation}
for all $f \in \R^{q}$ such that $\sum_{k = 1}^{q}{f_k^2} = q$, where we use \eqref{Eq: Log-Sobolev Inequality} and the logarithmic Sobolev constant computed in part 1 of Proposition \ref{Prop: Constants of Symmetric Channels}. As shown in Appendix \ref{App: Proofs of Propositions}, \eqref{Eq: LSI for Symmetric Channel} is easily established using the known logarithmic Sobolev constant corresponding to the standard Dirichlet form. Using the LSI for $V$ that follows from \eqref{Eq: LSI for Symmetric Channel} and Theorem \ref{Thm: Domination of Dirichlet Forms}, we immediately obtain guarantees on the convergence rate and \textit{hypercontractivity} properties of the associated \textit{Markov semigroup} $\left\{\exp(-t(I_q - V)) : t \geq 0\right\}$. We refer readers to \cite{LogSobolevInequalitiesDiaconis} and \cite{AnalysisofMixingTimes} for comprehensive accounts of such topics.   

\subsection{Outline}
\label{Outline}

We briefly outline the content of the ensuing sections. In section \ref{Less Noisy Domination and Degradation Regions}, we study the structure of less noisy domination and degradation regions of channels. In section \ref{Equivalent Characterizations of Less Noisy Preorder}, we prove Theorem \ref{Thm: Chi Squared Divergence Characterization of Less Noisy} and present some other equivalent characterizations of $\succeq_{\textsf{\tiny ln}}$. We then derive several necessary and sufficient conditions for less noisy domination among additive noise channels in section \ref{Conditions for Less Noisy Domination over Additive Noise Channels}, which together with the results of section \ref{Less Noisy Domination and Degradation Regions}, culminates in a proof of Theorem \ref{Thm: Additive Less Noisy Domination and Degradation Regions for Symmetric Channels}. Section \ref{Sufficient Conditions for Degradation over General Channels} provides a proof of Theorem \ref{Thm: Sufficient Condition for Degradation by Symmetric Channels}, and section \ref{Less Noisy Domination and Logarithmic Sobolev Inequalities} introduces LSIs and proves an extension of Theorem \ref{Thm: Domination of Dirichlet Forms}. Finally, we conclude our discussion in section \ref{Conclusion}.

\section{Less noisy domination and degradation regions}
\label{Less Noisy Domination and Degradation Regions}

In this section, we focus on understanding the ``geometric'' aspects of less noisy domination and degradation by channels. We begin by deriving some simple characteristics of the sets of channels that are dominated by some fixed channel in the less noisy and degraded senses. We then specialize our results for additive noise channels, and this culminates in a complete characterization of $\D_{W_{\delta}}^{\textsf{add}}$ and derivations of certain properties of $\L_{W_{\delta}}^{\textsf{add}}$ presented in Theorem \ref{Thm: Additive Less Noisy Domination and Degradation Regions for Symmetric Channels}. 

Let $W \in \R^{q \times r}_{\textsf{sto}}$ be a fixed channel with $q,r \in \N$, and define its \textit{less noisy domination region}:
\begin{equation}
\label{Eq: Domination Region}
\L_{W} \triangleq \left\{V \in \R^{q \times r}_{\textsf{sto}} : W \succeq_{\textsf{\tiny ln}} V\right\}
\end{equation}
as the set of all channels on the same input and output alphabets that are dominated by $W$ in the less noisy sense. Moreover, we define the \textit{degradation region} of $W$:
\begin{equation}
\label{Eq: Degradation Region}
\D_{W} \triangleq \left\{V \in \R^{q \times r}_{\textsf{sto}} : W \succeq_{\textsf{\tiny deg}} V\right\}
\end{equation}
as the set of all channels on the same input and output alphabets that are degraded versions of $W$. Then, $\L_{W}$ and $\D_W$ satisfy the properties delineated below.

\begin{proposition}[Less Noisy Domination and Degradation Regions]
\label{Prop: Less Noisy Domination and Degradation Regions}
Given the channel $W \in \R^{q \times r}_{\textsf{sto}}$, its less noisy domination region $\L_W$ and its degradation region $\D_W$ are non-empty, closed, convex, and output alphabet permutation symmetric (i.e. $V \in \L_W \Rightarrow VP \in \L_W$ and $V \in \D_W \Rightarrow VP \in \D_W$ for every permutation matrix $P \in \R^{r \times r}$).
\end{proposition}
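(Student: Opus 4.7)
The plan is to verify the four properties --- non-emptiness, convexity, closedness, and output-alphabet permutation symmetry --- for $\D_W$ via the algebraic definition $V = WA$, and for $\L_W$ via the KL-divergence characterization of Proposition \ref{Prop: KL Divergence Characterization of Less Noisy}. Non-emptiness is immediate: $W$ itself lies in both sets, taking $A = I_r$ for $\D_W$ and the trivial inequality $D(P_X W \| Q_X W) \geq D(P_X W \| Q_X W)$ for $\L_W$.

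The degradation region reduces to matrix bookkeeping. For convexity, if $V_i = W A_i$ with stochastic $A_i \in \R^{r \times r}_{\textsf{sto}}$, then $\lambda V_1 + (1-\lambda) V_2 = W(\lambda A_1 + (1-\lambda) A_2)$, and a convex combination of stochastic matrices is stochastic. For closedness, $\D_W$ is the image of the compact set $\R^{r \times r}_{\textsf{sto}}$ under the continuous linear map $A \mapsto WA$, hence compact. For permutation symmetry, for any $r \times r$ permutation matrix $P$ the product $AP$ is stochastic whenever $A$ is, so $VP = W(AP) \in \D_W$.

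For $\L_W$ I would invoke three elementary facts about KL divergence: joint convexity, lower semi-continuity, and invariance under applying the same permutation to both arguments. Joint convexity delivers convexity of $\L_W$: for $V_1, V_2 \in \L_W$, $\lambda \in [0,1]$, and $V = \lambda V_1 + (1-\lambda) V_2$,
\begin{align*}
D(P_X V \, \| \, Q_X V) & \leq \lambda D(P_X V_1 \| Q_X V_1) + (1-\lambda) D(P_X V_2 \| Q_X V_2) \\
& \leq D(P_X W \| Q_X W).
\end{align*}
Lower semi-continuity delivers closedness: if $V_n \to V$ with $V_n \in \L_W$, then $P_X V_n \to P_X V$ and $Q_X V_n \to Q_X V$ entry-wise, whence $D(P_X V \| Q_X V) \leq \liminf_n D(P_X V_n \| Q_X V_n) \leq D(P_X W \| Q_X W)$ for every $P_X, Q_X \in \P_q$. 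Permutation invariance delivers symmetry, since $D((P_X V) P \| (Q_X V) P) = D(P_X V \| Q_X V)$ transports the bound for $V$ to a bound for $VP$.

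The proof amounts to assembling these standard ingredients; the only step beyond one-line manipulation is the lower semi-continuity of KL divergence, which I expect to be the mild technical point, especially given the paper's footnote convention that $\infty \geq \infty$ is to be respected when the relevant divergences are infinite.
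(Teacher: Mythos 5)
Your proof is correct and follows essentially the same approach as the paper: non-emptiness via $W \in \L_W, \D_W$; convexity and closedness of $\L_W$ via joint convexity and lower semicontinuity of KL divergence; closedness of $\D_W$ via compactness of $\R^{r \times r}_{\textsf{sto}}$ (the paper extracts a convergent subsequence of the $A_k$, which is just the unpacking of your ``continuous image of a compact set is compact'' argument); and the symmetry properties via the same observations. No gaps.
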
 

\begin{proof} ~\newline
\textbf{Non-Emptiness of $\L_W$ and $\D_W$:} $W \succeq_{\textsf{\tiny ln}} W \Rightarrow W \in \L_W$, and $W \succeq_{\textsf{\tiny deg}} W \Rightarrow W \in \D_W$. So, $\L_W$ and $\D_W$ are non-empty. \\
\textbf{Closure of $\L_W$:} Fix any two pmfs $P_X,Q_X \in \P_{q}$, and consider a sequence of channels $V_k \in \L_W$ such that $V_k \rightarrow V \in \R^{q \times r}_{\textsf{sto}}$ (with respect to the Frobenius norm). Then, we also have $P_X V_k \rightarrow P_X V $ and $Q_X V_k \rightarrow Q_X V$ (with respect to the $\ell^2$-norm). Hence, we get: 
\begin{align*}
D\left(P_X V||Q_X V\right) & \leq \liminf_{k \rightarrow \infty}{D\left(P_X V_k||Q_X V_k\right)} \\
& \leq D\left(P_X W||Q_X W\right) 
\end{align*}
where the first line follows from the lower semicontinuity of KL divergence \cite[Theorem 1]{LowerSemicontinuityDivergence}, \cite[Theorem 3.6, Section 3.5]{InfoTheoryNotes}, and the second line holds because $V_k \in \L_W$. This implies that for any two pmfs $P_X,Q_X \in \P_q$, the set $\mathcal{S}\left(P_X,Q_X\right) = \left\{V \in \R^{q \times r}_{\textsf{sto}} : D\left(P_X W||Q_X W\right) \geq D\left(P_X V||Q_X V\right)\right\}$ is actually closed. Using Proposition \ref{Prop: KL Divergence Characterization of Less Noisy}, we have that:
$$ \L_W = \bigcap_{P_X,Q_X \in \P_q}{\mathcal{S}\left(P_X,Q_X\right)} . $$ 
So, $\L_W$ is closed since it is an intersection of closed sets \cite{Rudin}. \\
\textbf{Closure of $\D_W$:} Consider a sequence of channels $V_k \in \D_W$ such that $V_k \rightarrow V \in \R^{q \times r}_{\textsf{sto}}$. Since each $V_k = W A_k$ for some channel $A_k \in \R^{r \times r}_{\textsf{sto}}$ belonging to the compact set $\R^{r \times r}_{\textsf{sto}}$, there exists a subsequence $A_{k_m}$ that converges by (sequential) compactness \cite{Rudin}: $A_{k_m} \rightarrow A \in \R^{r \times r}_{\textsf{sto}}$. Hence, $V \in \D_W$ since $V_{k_m} = W A_{k_m} \rightarrow W A = V$, and $\D_W$ is a closed set. \\
\textbf{Convexity of $\L_W$:} Suppose $V_1,V_2 \in \L_W$, and let $\lambda \in [0,1]$ and $\bar{\lambda} = 1 - \lambda$. Then, for every $P_X,Q_X \in \P_q$, we have:
$$ D(P_X W||Q_X W) \geq D(P_X (\lambda V_1 + \bar{\lambda} V_2) || Q_X (\lambda V_1 + \bar{\lambda} V_2)) $$
by the convexity of KL divergence. Hence, $\L_W$ is convex. \\
\textbf{Convexity of $\D_W$:} If $V_1,V_2 \in \D_W$ so that $V_1 = W A_1$ and $V_2 = W A_2$ for some $A_1,A_2 \in \R^{r \times r}_{\textsf{sto}}$, then $\lambda V_1 + \bar{\lambda} V_2 = W(\lambda A_1 + \bar{\lambda} A_2) \in \D_W$ for all $\lambda \in [0,1]$, and $\D_W$ is convex. \\
\textbf{Symmetry of $\L_W$:} This is obvious from Proposition \ref{Prop: KL Divergence Characterization of Less Noisy} because KL divergence is invariant to permutations of its input pmfs. \\
\textbf{Symmetry of $\D_W$:} Given $V \in \D_W$ so that $V = WA$ for some $A \in \R^{r \times r}_{\textsf{sto}}$, we have that $VP = WAP \in \D_W$ for every permutation matrix $P \in \R^{r \times r}$. This completes the proof.
\end{proof}

While the channels in $\L_W$ and $\D_W$ all have the same output alphabet as $W$, as defined in \eqref{Eq: Domination Region} and \eqref{Eq: Degradation Region}, we may extend the output alphabet of $W$ by adding zero probability letters. So, separate less noisy domination and degradation regions can be defined for each output alphabet size that is at least as large as the original output alphabet size of $W$. 

\subsection{Less noisy domination and degradation regions for additive noise channels}
\label{Less Noisy Domination and Degradation Regions for Additive Noise Channels}

Often in information theory, we are concerned with additive noise channels on an Abelian group $(\X,\oplus)$ with $\X = [q]$ and $q \in \N$, as defined in \eqref{Eq: Additive Noise Channel}. Such channels are completely defined by a noise pmf $P_Z \in \P_q$ with corresponding channel transition probability matrix $\textsf{\small circ}_{\X}\!\left(P_Z\right) \in \R^{q \times q}_{\textsf{sto}}$. Suppose $W = \textsf{\small circ}_{\X}\!\left(w\right) \in \R^{q \times q}_{\textsf{sto}}$ is an additive noise channel with noise pmf $w \in \P_q$. Then, we are often only interested in the set of additive noise channels that are dominated by $W$. We define the \textit{additive less noisy domination region} of $W$:
\begin{equation}
\label{Eq: Additive Domination Region}
\L_{W}^{\textsf{add}} \triangleq \left\{v \in \P_q : W \succeq_{\textsf{\tiny ln}} \textsf{\small circ}_{\X}\!\left(v\right)\right\}
\end{equation}
as the set of all noise pmfs whose corresponding channel transition matrices are dominated by $W$ in the less noisy sense. Likewise, we define the \textit{additive degradation region} of $W$:
\begin{equation}
\label{Eq: Additive Degradation Region}
\D_{W}^{\textsf{add}} \triangleq \left\{v \in \P_q : W \succeq_{\textsf{\tiny deg}} \textsf{\small circ}_{\X}\!\left(v\right)\right\}
\end{equation}
as the set of all noise pmfs whose corresponding channel transition matrices are degraded versions of $W$. (These definitions generalize \eqref{Eq: Additive Domination Region for Symmetric Channel} and \eqref{Eq: Additive Degradation Region for Symmetric Channel}, and can also hold for any non-additive noise channel $W$.) The next proposition illustrates certain properties of $\L_{W}^{\textsf{add}}$ and explicitly characterizes $\D_{W}^{\textsf{add}}$.

\begin{proposition}[Additive Less Noisy Domination and Degradation Regions]
\label{Prop: Additive Less Noisy Domination and Degradation Regions}
Given the additive noise channel $W = \textsf{\small circ}_{\X}(w) \in \R^{q \times q}_{\textsf{sto}}$ with noise pmf $w \in \P_q$, we have:
\begin{enumerate}
\item $\L_W^{\textsf{add}}$ and $\D_W^{\textsf{add}}$ are non-empty, closed, convex, and invariant under the permutations $\left\{P_x \in \R^{q \times q}: x \in \X\right\}$ defined in \eqref{Eq: Permutation Representation} (i.e. $v \in \L_{W}^{\textsf{add}} \Rightarrow v P_x \in \L_{W}^{\textsf{add}}$ and $v \in \D_{W}^{\textsf{add}} \Rightarrow v P_x \in \D_{W}^{\textsf{add}}$ for every $x \in \X$). 
\item $\D_W^{\textsf{add}} = \textsf{\small conv}\left(\left\{w P_x : x \in \X\right\}\right) = \left\{v \in \P_q : w \succeq_{\text{\tiny $\X$}} v\right\}$, where $\succeq_{\text{\tiny $\X$}}$ denotes the group majorization preorder as defined in Appendix \ref{App: Basics of Majorization Theory}.
\end{enumerate}
\end{proposition}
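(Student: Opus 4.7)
These structural assertions are essentially formal and my plan is to dispatch them quickly. Non-emptiness is immediate because $w$ itself belongs to $\L_W^{\textsf{add}}\cap\D_W^{\textsf{add}}$. For closedness and convexity I would observe that $\phi:v\mapsto\textsf{\small circ}_{\X}(v)$ is linear (hence continuous), so $\L_W^{\textsf{add}}=\phi^{-1}(\L_W)\cap\P_q$ and $\D_W^{\textsf{add}}=\phi^{-1}(\D_W)\cap\P_q$ inherit both properties from $\L_W,\D_W$ (Proposition \ref{Prop: Less Noisy Domination and Degradation Regions}). To get $P_x$-invariance I would first prove the identity $\textsf{\small circ}_{\X}(vP_x)=\textsf{\small circ}_{\X}(v)P_{-x}^{T}$ by expanding each side through $\textsf{\small circ}_{\X}(u)=\sum_i u_iP_i^T$ and the group relation $P_iP_x=P_{i\oplus x}$; since right-multiplication by a permutation matrix is just an output-alphabet relabeling, the invariance then reduces to the output-permutation symmetry of $\L_W,\D_W$ established in the same proposition.

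\textbf{Part 2: the easy half.} The second equality $\textsf{\small conv}(\{wP_x:x\in\X\})=\{v\in\P_q:w\succeq_{\X}v\}$ is the very definition of the group-majorization preorder recalled in Appendix \ref{App: Basics of Majorization Theory}, so the only real work is in the first equality. The inclusion $\textsf{\small conv}(\{wP_x:x\in\X\})\subseteq\D_W^{\textsf{add}}$ is immediate from the Part~1 identity: $\textsf{\small circ}_{\X}(wP_x)=WP_{-x}^T$ with $P_{-x}^T$ stochastic shows each orbit point lies in $\D_W^{\textsf{add}}$, and convexity from Part~1 extends the containment to the full convex hull.

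\textbf{Part 2: the hard direction.} For the reverse inclusion the plan is to \emph{symmetrize} by the regular action of $(\X,\oplus)$. Given $v\in\D_W^{\textsf{add}}$, pick a stochastic $A$ with $\textsf{\small circ}_{\X}(v)=\textsf{\small circ}_{\X}(w)A$ and set
\begin{equation*}
\tilde{A} \triangleq \frac{1}{q}\sum_{x\in\X}P_x A P_x^T .
\end{equation*}
I would run four short checks: (i) $\tilde{A}$ is stochastic as a convex combination of simultaneous row/column permutations of $A$; (ii) $\textsf{\small circ}_{\X}(w)\tilde{A}=\textsf{\small circ}_{\X}(v)$, using that every $\X$-circulant matrix commutes with each $P_x$---a direct entry-wise verification from $[\textsf{\small circ}_{\X}(u)]_{a+1,b+1}=u_{-a\oplus b}$ and commutativity of $\oplus$---so that $W\tilde{A}=\tfrac{1}{q}\sum_x P_xWAP_x^T=\tfrac{1}{q}\sum_x P_xVP_x^T=V$ (the last equality since $P_x^T=P_{-x}$ also commutes with $V$, giving $P_xVP_x^T=V$); (iii) $\tilde{A}$ itself commutes with every $P_y$, by reindexing the defining sum via $x\mapsto y\oplus x$ and using commutativity of $\oplus$; and (iv) any matrix commuting with every $P_y$ must be $\X$-circulant, which I would prove elementarily by observing that the commutation forces $M_{a+1,b+1}=M_{(a\oplus x)+1,(b\oplus x)+1}$, and choosing $x=-a$ then makes $M_{a+1,b+1}$ depend only on $-a\oplus b$. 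Combining (i)--(iv) gives $\tilde{A}=\textsf{\small circ}_{\X}(\tilde{a})$ for some $\tilde{a}\in\P_q$, and then
\begin{equation*}
\textsf{\small circ}_{\X}(v)=\textsf{\small circ}_{\X}(w)\sum_i \tilde{a}_i P_i^T=\sum_i \tilde{a}_i\,\textsf{\small circ}_{\X}(wP_{-i})
\end{equation*}
(reapplying the Part~1 identity), so by injectivity of $\textsf{\small circ}_{\X}$ we obtain $v=\sum_i\tilde{a}_i\,wP_{-i}\in\textsf{\small conv}(\{wP_y:y\in\X\})$. The main obstacle I anticipate is the conjunction of steps (iii) and (iv): confirming that the orbit average genuinely lands in the $\X$-circulant subalgebra, i.e.\ that the commutant of $\{P_x:x\in\X\}$ inside $\R^{q\times q}$ coincides with the span of $\{P_x:x\in\X\}$ itself.
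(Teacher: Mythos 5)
Your proposal is correct, and it reaches the hard direction of Part~2 by a genuinely different route than the paper. The paper proves an intermediate statement (Lemma~\ref{Lemma: Additive Noise Channel Degradation}) by a ``first-column'' argument: it takes any doubly stochastic $R$ with $V=WR$, extracts the first column $r$ of $R$, observes that the columns of $V$ are $P_0 Wr,\dots,P_{q-1}Wr$, pulls the $P_i$'s past $W$, and reads off $V = W\,[r\;P_1r\;\cdots\;P_{q-1}r]$ with $[r\;P_1r\;\cdots\;P_{q-1}r]$ already an $\X$-circulant stochastic matrix. Your approach instead applies the Reynolds/twirling operator $A\mapsto\tilde A=\tfrac1q\sum_x P_xAP_x^T$, shows that $W\tilde A$ is still $V$ (because $W$ and $V$ commute with the $P_x$), and then identifies $\tilde A$ as $\X$-circulant by proving that the commutant of the regular representation $\{P_x\}$ is exactly $\mathrm{span}\{P_x\}$. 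Both arguments are sound; the paper's is shorter because it exploits the explicit column structure of $\X$-circulant matrices, while yours is a standard representation-theoretic device whose one extra ingredient is precisely the step (iii)--(iv) you flag as the main obstacle --- and your elementary proof of (iv), via $M_{a+1,b+1}=M_{(x\oplus a)+1,(x\oplus b)+1}$ with $x=-a$, correctly discharges it. Minor remarks: the identity you state as $\textsf{\small circ}_{\X}(vP_x)=\textsf{\small circ}_{\X}(v)P_{-x}^T$ is of course just $\textsf{\small circ}_{\X}(v)P_x$ since $P_{-x}^T=P_x$; and for the $P_x$-invariance in Part~1 the paper instead chains $W\succeq_{\textsf{\tiny ln}}WP_x=\textsf{\small circ}_{\X}(wP_x)\succeq_{\textsf{\tiny ln}}W$ and uses transitivity, which is equivalent to your reduction to the output-permutation symmetry of $\L_W$ and $\D_W$ in Proposition~\ref{Prop: Less Noisy Domination and Degradation Regions}.
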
 

To prove Proposition \ref{Prop: Additive Less Noisy Domination and Degradation Regions}, we will need the following lemma.

\begin{lemma}[Additive Noise Channel Degradation]
\label{Lemma: Additive Noise Channel Degradation}
Given two additive noise channels $W = \textsf{\small circ}_{\X}(w) \in \R^{q \times q}_{\textsf{sto}}$ and $V = \textsf{\small circ}_{\X}(v) \in \R^{q \times q}_{\textsf{sto}}$ with noise pmfs $w,v \in \P_q$, $W \succeq_{\textsf{\tiny deg}} V$ if and only if $V = W \textsf{\small circ}_{\X}(z) = \textsf{\small circ}_{\X}(z) W$ for some $z \in \P_q$ (i.e. for additive noise channels $W \succeq_{\textsf{\tiny deg}} V$, the channel that degrades $W$ to produce $V$ is also an additive noise channel without loss of generality).  
\end{lemma}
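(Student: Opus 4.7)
The ``if'' direction is immediate: if $z \in \P_q$ then $\textsf{\small circ}_{\X}(z) \in \R^{q \times q}_{\textsf{sto}}$ (its rows are permutations of $z$ and hence lie in $\P_q$), so $V = W\,\textsf{\small circ}_{\X}(z)$ exhibits $V$ as a degraded version of $W$; commutativity $W\,\textsf{\small circ}_{\X}(z) = \textsf{\small circ}_{\X}(z)\,W$ follows from the fact that $\X$-circulant matrices form a commutative algebra (they all lie in the span of $\{P_x^T : x \in \X\}$, which pairwise commute since $(\X,\oplus)$ is Abelian).

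For the ``only if'' direction, suppose $V = WA$ for some $A \in \R^{q \times q}_{\textsf{sto}}$; the plan is to replace $A$ with a symmetrized, $\X$-circulant version that still witnesses the degradation. Define
$$ \tilde{A} \triangleq \frac{1}{q}\sum_{x \in \X} P_x A P_x^T . $$
I would then verify three things. First, $\tilde{A}$ is stochastic, because it is a convex combination of the stochastic matrices $P_x A P_x^T$ (products of stochastic matrices are stochastic). Second, $\tilde{A}$ commutes with every $P_y$. This is the computational core: using $P_x P_y = P_{x \oplus y} = P_y P_x$ and $P_x^T = P_{-x}$, both $P_y \tilde{A}$ and $\tilde{A} P_y$ can be reindexed (by $u = y \oplus x$ on the left and $u = x$ on the right) into the same sum $\frac{1}{q}\sum_{u}P_u A P_{y \oplus (-u)}$; Abelianness of $(\X,\oplus)$ is used crucially in this reindexing. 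Since $\tilde{A}$ commutes with every $P_y$ (equivalently every $P_y^T$), it lies in the commutant of the regular representation, which for an Abelian group of order $q$ has dimension $q$ and coincides with the span of $\{P_x^T : x \in \X\}$; hence $\tilde{A} = \textsf{\small circ}_{\X}(z)$ for some $z \in \R^q$. Combined with stochasticity of $\tilde{A}$, this forces $z \in \P_q$.

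Third, I must verify $W\tilde{A} = V$. Since $W$ is $\X$-circulant, $W P_x = P_x W$ for every $x$, so
$$ W \tilde{A} = \frac{1}{q}\sum_{x \in \X} P_x (WA) P_x^T = \frac{1}{q}\sum_{x \in \X} P_x V P_x^T , $$
and since $V$ is itself $\X$-circulant, it commutes with each $P_x^T$, so $P_x V P_x^T = V P_x P_x^T = V$ and the average equals $V$. Thus $V = W\,\textsf{\small circ}_{\X}(z)$, and the second equality $\textsf{\small circ}_{\X}(z)\,W$ follows from commutativity of $\X$-circulants, completing the proof.

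The only nontrivial obstacle is the commutation step: choosing the right ``group averaging'' of $A$ so that (i) the symmetrized matrix remains stochastic, (ii) it lies in the $\X$-circulant algebra, and (iii) it continues to represent the same degradation $V = W\tilde{A}$. All three rely on the Abelian structure of $(\X,\oplus)$ together with the fact that $W$ and $V$ already commute with the permutations $P_x$; once this is set up, the reindexing calculations are routine.
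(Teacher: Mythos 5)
Your proof is correct, but it takes a genuinely different route from the paper. The paper's argument is direct and columnwise: starting from $V = WR$ with $R$ a (necessarily doubly) stochastic witness, it extracts the first column $r$ of $R$, uses the column decomposition $\textsf{\small circ}_{\X}(x) = [P_0 y \,\cdots\, P_{q-1}y]$ from \eqref{Eq: Group Circulant Matrix Row-Column Decomposition} to write $V = [Wr\; P_1 Wr\;\cdots\; P_{q-1}Wr] = W[r\; P_1 r\;\cdots\;P_{q-1}r]$, and then observes that $[r\; P_1 r\;\cdots\;P_{q-1}r]$ is exactly an $\X$-circulant stochastic matrix. Your approach instead applies group averaging: you symmetrize $A$ into $\tilde{A} = \frac{1}{q}\sum_{x\in\X}P_x A P_x^T$, verify it is stochastic, show it commutes with every $P_y$, invoke the fact that the commutant of the (Abelian) regular representation is precisely the span of $\{P_x^T\}$ (i.e., the $\X$-circulant algebra), and confirm $W\tilde{A}=V$ using $W P_x = P_x W$ and $P_x V P_x^T = V$. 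Both arguments are sound. The paper's proof is shorter and more elementary, exploiting the explicit columnwise circulant structure in \eqref{Eq: Group Circulant Matrix Row-Column Decomposition}; your proof is somewhat longer but more conceptual, makes no use of that decomposition, and makes transparent \emph{why} the witness can be taken circulant (any witness can be projected onto the circulant algebra by averaging over the group action, with all three needed properties — stochasticity, circulant structure, and $W\tilde{A}=V$ — preserved by the averaging). Your symmetrization technique also generalizes more readily to other invariance structures.
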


\begin{proof} Since $\X$-circulant matrices commute, we must have $W \textsf{\small circ}_{\X}(z) = \textsf{\small circ}_{\X}(z) W$ for every $z \in \P_q$. Furthermore, $V = W \textsf{\small circ}_{\X}(z)$ for some $z \in \P_q$ implies that $W \succeq_{\textsf{\tiny deg}} V$ by Definition \ref{Def: Degradation Preorder}. So, it suffices to prove that $W \succeq_{\textsf{\tiny deg}} V$ implies $V = W \textsf{\small circ}_{\X}(z)$ for some $z \in \P_q$. By Definition \ref{Def: Degradation Preorder}, $W \succeq_{\textsf{\tiny deg}} V$ implies that $V = W R$ for some doubly stochastic channel $R \in \R^{q \times q}_{\textsf{sto}}$ (as $V$ and $W$ are doubly stochastic). Let $r$ with $r^T \in \P_q$ be the first column of $R$, and $s = W r$ with $s^T \in \P_q$ be the first column of $V$. Then, it is straightforward to verify using \eqref{Eq: Group Circulant Matrix Row-Column Decomposition} that:
\begin{align*}
V & = \left[
\begin{array}{ccccc} 
s & P_1 s & P_2 s & \cdots & P_{q-1} s
\end{array}
\right] \\
& = \left[
\begin{array}{ccccc} 
Wr & P_1 Wr & P_2 Wr & \cdots & P_{q-1} Wr
\end{array}
\right] \\
& = W \left[
\begin{array}{ccccc} 
r & P_1 r & P_2 r & \cdots & P_{q-1} r
\end{array}
\right]
\end{align*}
where the third equality holds because $\left\{P_x : x \in \X\right\}$ are $\X$-circulant matrices which commute with $W$. Hence, $V$ is the product of $W$ and an $\X$-circulant stochastic matrix, i.e. $V = W \textsf{\small circ}_{\X}(z)$ for some $z \in \P_q$. This concludes the proof.
\end{proof}

We emphasize that in Lemma \ref{Lemma: Additive Noise Channel Degradation}, the channel that degrades $W$ to produce $V$ is only an additive noise channel without loss of generality. We can certainly have $V = WR$ with a non-additive noise channel $R$. Consider for instance, $V = W = \1 \1^T /q$, where every doubly stochastic matrix $R$ satisfies $V = WR$. However, when we consider $V = WR$ with an additive noise channel $R$, $V$ corresponds to the channel $W$ with an additional independent additive noise term associated with $R$. We now prove Proposition \ref{Prop: Additive Less Noisy Domination and Degradation Regions}.

\renewcommand{\proofname}{Proof of Proposition \ref{Prop: Additive Less Noisy Domination and Degradation Regions}}

\begin{proof} ~\newline
\textbf{Part 1:} Non-emptiness, closure, and convexity of $\L_W^{\textsf{add}}$ and $\D_W^{\textsf{add}}$ can be proved in exactly the same way as in Proposition \ref{Prop: Less Noisy Domination and Degradation Regions}, with the additional observation that the set of $\X$-circulant matrices is closed and convex. Moreover, for every $x \in \X$:
\begin{alignat}{5}
W & \succeq_{\textsf{\tiny ln}} & \, W P_x = \textsf{\small circ}_{\X}\!\left(w P_x\right) & \succeq_{\textsf{\tiny ln}} & \, W \nonumber \\
W & \succeq_{\textsf{\tiny deg}} & \, W P_x = \textsf{\small circ}_{\X}\!\left(w P_x\right) & \succeq_{\textsf{\tiny deg}} & \, W \nonumber
\end{alignat}
where the equalities follow from \eqref{Eq: Group Circulant Matrix Row-Column Decomposition}. These inequalities and the transitive properties of $\succeq_{\textsf{\tiny ln}}$ and $\succeq_{\textsf{\tiny deg}}$ yield the invariance of $\L_{W}^{\textsf{add}}$ and $\D_{W}^{\textsf{add}}$ with respect to $\left\{P_x \in \R^{q \times q} : x \in \X\right\}$. \\
\textbf{Part 2:} Lemma \ref{Lemma: Additive Noise Channel Degradation} is equivalent to the fact that $v \in \D_W^{\textsf{add}}$ if and only if $\textsf{\small circ}_{\X}(v) = \textsf{\small circ}_{\X}(w) \, \textsf{\small circ}_{\X}(z)$ for some $z \in \P_q$. This implies that $v \in \D_W^{\textsf{add}}$ if and only if $v = w \, \textsf{\small circ}_{\X}(z)$ for some $z \in \P_q$ (due to \eqref{Eq: Group Circulant Matrix Row-Column Decomposition} and the fact that $\X$-circulant matrices commute). Applying Proposition \ref{Prop: Group Majorization} from Appendix \ref{App: Basics of Majorization Theory} completes the proof.
\end{proof}

\renewcommand{\proofname}{Proof}

We remark that part 1 of Proposition \ref{Prop: Additive Less Noisy Domination and Degradation Regions} does not require $W$ to be an additive noise channel. The proofs of closure, convexity, and invariance with respect to $\left\{P_x \in \R^{q \times q} : x \in \X\right\}$ hold for general $W \in \R^{q \times q}_{\textsf{sto}}$. Moreover, $\L_W^{\textsf{add}}$ and $\D_W^{\textsf{add}}$ are non-empty because $\unif \in \L_W^{\textsf{add}}$ and $\unif \in \D_W^{\textsf{add}}$. 

\subsection{Less noisy domination and degradation regions for symmetric channels}
\label{Less Noisy Domination and Degradation Regions for Symmetric Channels}

Since $q$-ary symmetric channels for $q \in \N$ are additive noise channels, Proposition \ref{Prop: Additive Less Noisy Domination and Degradation Regions} holds for symmetric channels. In this subsection, we deduce some simple results that are unique to symmetric channels. The first of these is a specialization of part 2 of Proposition \ref{Prop: Additive Less Noisy Domination and Degradation Regions} which states that the additive degradation region of a symmetric channel can be characterized by traditional majorization instead of group majorization.

\begin{corollary}[Degradation Region of Symmetric Channel]
\label{Cor: Degradation Region of Symmetric Channel}
The $q$-ary symmetric channel $W_{\delta} = \textsf{\small circ}_{\X}(\w) \in \R^{q \times q}_{\textsf{sto}}$ for $\delta \in \left[0,1\right]$ has additive degradation region:
$$ \D_{W_{\delta}}^{\textsf{add}} = \left\{v \in \P_q : \w \succeq_{\textsf{\tiny maj}} v\right\} = \textsf{\small conv}\left(\left\{\w P_q^k : k \in [q]\right\}\right) $$
where $\succeq_{\textsf{\tiny maj}}$ denotes the majorization preorder defined in Appendix \ref{App: Basics of Majorization Theory}, and $P_q \in \R^{q \times q}$ is defined in \eqref{Eq: Generator Cyclic Permutation Matrix}.
\end{corollary}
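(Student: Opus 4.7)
The plan is to derive this corollary directly from Part 2 of Proposition \ref{Prop: Additive Less Noisy Domination and Degradation Regions}, which already gives $\D_{W_\delta}^{\textsf{add}} = \textsf{\small conv}(\{\w P_x : x \in \X\}) = \{v \in \P_q : \w \succeq_{\text{\tiny $\X$}} v\}$. The crucial structural fact is that the symmetric-channel noise pmf $\w$ carries value $1-\delta$ in position $0$ and value $\delta/(q-1)$ in every other position, so any coordinate permutation of $\w$ is fully determined by where it sends the distinguished entry. In particular, the orbit of $\w$ under the full symmetric group $S_q$ has at most $q$ elements---one per choice of location for the value $1-\delta$.

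First I would establish the set equality $\{\w P_x : x \in \X\} = \{\w P_q^k : k \in [q]\}$ inside $\P_q$. Using \eqref{Eq: Permutation Representation}, the vector $\w P_x$ has entry $\w_{x \oplus i}$ at position $i$, so its distinguished coordinate lies at position $-x$; since the left-regular action of $\X$ on itself is a bijection, $-x$ ranges over all of $[q]$ as $x$ does, and the orbit consists of exactly the $q$ pmfs obtained by cycling the large entry through each position. The identical argument applied to $P_q^k$ shows that $\{\w P_q^k : k \in [q]\}$ is the same $q$-element set. Taking convex hulls then yields $\D_{W_\delta}^{\textsf{add}} = \textsf{\small conv}(\{\w P_q^k : k \in [q]\})$.

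For the majorization characterization, I would invoke the classical Hardy--Littlewood--P\'{o}lya theorem (recalled in Appendix \ref{App: Basics of Majorization Theory}): for any $\w \in \P_q$, the set $\{v \in \P_q : \w \succeq_{\textsf{\tiny maj}} v\}$ equals the convex hull of all coordinate permutations of $\w$. By the same symmetry argument, this full permutation orbit again reduces to the $q$ cyclic shifts $\{\w P_q^k : k \in [q]\}$, closing the chain of equalities. The only delicate point is observing that even though the underlying Abelian group on $\X$ may be non-cyclic (e.g.\ $\mathbb{Z}/2\mathbb{Z}\times\mathbb{Z}/2\mathbb{Z}$ when $q=4$), the highly symmetric shape of $\w$ makes both group majorization and ordinary majorization collapse onto the same $q$-vertex polytope generated by cyclic shifts---so the corollary holds uniformly in the choice of underlying group. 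I do not expect any genuine obstacle; the entire content is that the large stabilizer of $\w$ in $S_q$ absorbs every permutation beyond the choice of where to place the distinguished entry.
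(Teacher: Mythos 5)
Your proposal is correct and follows essentially the same route as the paper's proof: start from part~2 of Proposition~\ref{Prop: Additive Less Noisy Domination and Degradation Regions}, observe that the special shape of $\w$ (one distinguished entry $1-\delta$ and $q-1$ equal entries $\delta/(q-1)$) forces its orbit under the group $\{P_x : x\in\X\}$, under the cyclic shifts $\{P_q^k\}$, and under the full symmetric group all to coincide, and then apply the Hardy--Littlewood--P\'olya description of the majorization set as the convex hull of the full permutation orbit. The only cosmetic difference is that you spell out the bijective tracking of the distinguished coordinate, whereas the paper states the orbit equalities more tersely; the underlying argument is the same.
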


\begin{proof} From part 2 of Proposition \ref{Prop: Additive Less Noisy Domination and Degradation Regions}, we have that:
\begin{align*}
\D_{W_{\delta}}^{\textsf{add}} & = \textsf{\small conv}\left(\left\{\w P_x : x \in \X\right\}\right) = \textsf{\small conv}\left(\left\{\w P_q^k : k \in [q]\right\}\right) \\
& = \textsf{\small conv}\left(\left\{\w P : P \in \R^{q \times q} \enspace \text{is a permutation matrix}\right\}\right) \\
& = \left\{v \in \P_q : w \succeq_{\textsf{\tiny maj}} v\right\}
\end{align*}
where the second and third equalities hold regardless of the choice of group $(\X,\oplus)$, because the sets of all cyclic or regular permutations of $\w = \left(1-\delta,\delta/(q-1),\dots,\delta/(q-1)\right)$ equal $\left\{\w P_x : x \in \X\right\}$. The final equality follows from the definition of majorization in Appendix \ref{App: Basics of Majorization Theory}.
\end{proof}

With this geometric characterization of the additive degradation region, it is straightforward to find the extremal symmetric channel $W_{\tau}$ that is a degraded version of $W_{\delta}$ for some fixed $\delta \in [0,1]\backslash\big\{\frac{q-1}{q}\big\}$. Indeed, we compute $\tau$ by using the fact that the noise pmf $w_{\tau} \in \textsf{\small conv}\!\left(\left\{\w P_q^{k}: k = 1,\dots,q-1\right\}\right)$:
\begin{equation}
\label{Eq: Extremal Symmetric Channel Equation}
w_{\tau} = \sum_{i = 1}^{q-1}{\lambda_i \w P_q^{i}}
\end{equation}
for some $\lambda_1,\dots,\lambda_{q-1} \in [0,1]$ such that $\lambda_1 + \cdots + \lambda_{q-1} = 1$. Solving \eqref{Eq: Extremal Symmetric Channel Equation} for $\tau$ and $\lambda_1,\dots,\lambda_{q-1}$ yields:
\begin{equation}
\label{Eq: Extremal Degraded Channel}
\tau = 1 - \frac{\delta}{q-1} 
\end{equation}
and $\lambda_1 = \cdots = \lambda_{q-1} = \frac{1}{q-1}$, which means that:
\begin{equation}
w_{\tau} = \frac{1}{q-1}\sum_{i = 1}^{q-1}{\w P_q^{i}} .
\end{equation}
This is illustrated in Figure \ref{Figure: Additive Noise Channel Domination Regions} for the case where $\delta \in \big(0,\frac{q-1}{q}\big)$ and $\tau > \frac{q-1}{q} > \delta$. For $\delta \in \big(0,\frac{q-1}{q}\big)$, the symmetric channels that are degraded versions of $W_{\delta}$ are $\left\{ W_{\gamma} : \gamma \in [\delta,\tau]\right\}$. In particular, for such $\gamma \in [\delta,\tau]$, $W_{\gamma} = W_{\delta} W_{\beta}$ with $\beta = (\gamma - \delta)/(1 - \delta - \frac{\delta}{q-1})$ using the proof of part 5 of Proposition \ref{Prop: Properties of Symmetric Channel Matrices} in Appendix \ref{App: Proofs of Propositions}.

In the spirit of comparing symmetric and erasure channels as done in \cite{BCCapacityNairShamai} for the binary input case, our next result shows that a $q$-ary symmetric channel can never be less noisy than a $q$-ary erasure channel. 

\begin{proposition}[Symmetric Channel $\not\succeq_{\textsf{\tiny ln}}$ Erasure Channel]
\label{Prop: Symmetric Channel not Less Noisy than Erasure Channel}
For $q \in \N\backslash\{1\}$, given a $q$-ary erasure channel $E_{\epsilon} \in \R^{q \times (q+1)}_{\textsf{sto}}$ with erasure probability $\epsilon \in (0,1)$, there does not exist $\delta \in (0,1)$ such that the corresponding $q$-ary symmetric channel $W_{\delta} \in \R^{q \times q}_{\textsf{sto}}$ on the same input alphabet satisfies $W_{\delta} \succeq_{\textsf{\tiny ln}} E_{\epsilon}$. 
\end{proposition}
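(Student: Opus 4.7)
The plan is to exploit Proposition~\ref{Prop: KL Divergence Characterization of Less Noisy} together with the crucial structural difference between $W_\delta$ (which has all strictly positive entries when $\delta \in (0,1)$) and $E_\epsilon$ (which has many zero entries when $\epsilon \in (0,1)$). Since a less noisy domination $W_\delta \succeq_{\textsf{\tiny ln}} E_\epsilon$ demands $D(P_X W_\delta \| Q_X W_\delta) \geq D(P_X E_\epsilon \| Q_X E_\epsilon)$ for \emph{every} pair of input pmfs, I only need to exhibit one pair for which the right-hand side is $+\infty$ while the left-hand side is finite.

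The natural candidate is a pair of point masses on distinct letters. Take $P_X = e_{x_1}^T$ and $Q_X = e_{x_2}^T$ with $x_1 \neq x_2$ in $[q]$. Passing through $E_\epsilon$, the output $P_X E_\epsilon$ places mass $1-\epsilon$ on $x_1$ and mass $\epsilon$ on the erasure symbol $\textsf{e}$, while $Q_X E_\epsilon$ places mass $1-\epsilon$ on $x_2$ and mass $\epsilon$ on $\textsf{e}$. Since $\epsilon < 1$, the letter $x_1$ carries positive mass under $P_X E_\epsilon$ but zero mass under $Q_X E_\epsilon$; thus $P_X E_\epsilon \not\ll Q_X E_\epsilon$, so $D(P_X E_\epsilon \| Q_X E_\epsilon) = +\infty$.

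In contrast, for any $\delta \in (0,1)$ every entry of $W_\delta$ is strictly positive (either $1-\delta > 0$ on the diagonal or $\delta/(q-1) > 0$ off-diagonal), so $P_X W_\delta$ and $Q_X W_\delta$ both have full support on $[q]$. Hence $P_X W_\delta \ll Q_X W_\delta$, and $D(P_X W_\delta \| Q_X W_\delta)$ reduces to a finite sum of terms of the form $p \log(p/q')$ with $p, q' > 0$, making it finite. By Proposition~\ref{Prop: KL Divergence Characterization of Less Noisy}, the purported inequality $D(P_X W_\delta \| Q_X W_\delta) \geq +\infty$ fails, so $W_\delta \not\succeq_{\textsf{\tiny ln}} E_\epsilon$.

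There is essentially no obstacle here; the only subtlety to flag is the convention $\infty \geq \infty$ mentioned in the footnote following Proposition~\ref{Prop: KL Divergence Characterization of Less Noisy}. That convention would salvage the inequality only if the left-hand side were also infinite, which would require $W_\delta$ to have a zero entry---a possibility ruled out by the hypothesis $\delta \in (0,1)$. The argument is thus complete.
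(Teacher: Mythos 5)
Your proof is correct and takes essentially the same approach as the paper: exhibit a specific pair of input pmfs for which $D(\cdot E_\epsilon \| \cdot E_\epsilon) = +\infty$ while $D(\cdot W_\delta \| \cdot W_\delta) < +\infty$, then invoke Proposition~\ref{Prop: KL Divergence Characterization of Less Noisy}. The paper happens to use $P_X = \unif$ and $Q_X = \Delta_0$ rather than two distinct point masses, but the underlying mechanism (failure of absolute continuity on the non-erased letters under $E_\epsilon$ versus strict positivity of $W_\delta$) is identical.
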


\begin{proof}
For a $q$-ary erasure channel $E_{\epsilon}$ with $\epsilon \in (0,1)$, we always have $D(\unif E_{\epsilon}|| \Delta_{0} E_{\epsilon}) = +\infty$ for $\unif,\Delta_{0} = (1,0,\dots,0)$ $\in \P_q$. On the other hand, for any $q$-ary symmetric channel $W_{\delta}$ with $\delta \in (0,1)$, we have $D(P_X W_{\delta} || Q_X W_{\delta}) < +\infty$ for every $P_X,Q_X \in \P_q$. Thus, $W_{\delta} \not\succeq_{\textsf{\tiny ln}} E_{\epsilon}$ for any $\delta \in (0,1)$.
\end{proof}

In fact, the argument for Proposition \ref{Prop: Symmetric Channel not Less Noisy than Erasure Channel} conveys that a symmetric channel $W_{\delta} \in \R^{q \times q}_{\textsf{sto}}$ with $\delta \in (0,1)$ satisfies $W_{\delta} \succeq_{\textsf{\tiny ln}} V$ for some channel $V \in \R^{q \times r}_{\textsf{sto}}$ only if $D(P_X V || Q_X V) < +\infty$ for every $P_X,Q_X \in \P_q$. Typically, we are only interested in studying $q$-ary symmetric channels with $q \geq 2$ and $\delta \in \big(0,\frac{q-1}{q}\big)$. For example, the BSC with crossover probability $\delta$ is usually studied for $\delta \in \left(0,\frac{1}{2}\right)$. Indeed, the less noisy domination characteristics of the extremal $q$-ary symmetric channels with $\delta = 0$ or $\delta = \frac{q-1}{q}$ are quite elementary. Given $q \geq 2$, $W_{0} = I_q \in \R^{q \times q}_{\textsf{sto}}$ satisfies $W_{0} \succeq_{\textsf{\tiny ln}} V$, and $W_{(q-1)/q} = \1 \unif \in \R^{q \times q}_{\textsf{sto}}$ satisfies $V \succeq_{\textsf{\tiny ln}} W_{(q-1)/q}$, for every channel $V \in \R^{q \times r}_{\textsf{sto}}$ on a common input alphabet. For the sake of completeness, we also note that for $q \geq 2$, the extremal $q$-ary erasure channels $E_{0} \in \R^{q \times (q+1)}_{\textsf{sto}}$ and $E_{1} \in \R^{q \times (q+1)}_{\textsf{sto}}$, with $\epsilon = 0$ and $\epsilon = 1$ respectively, satisfy $E_{0} \succeq_{\textsf{\tiny ln}} V$ and $V \succeq_{\textsf{\tiny ln}} E_{1}$ for every channel $V \in \R^{q \times r}_{\textsf{sto}}$ on a common input alphabet.

The result that the $q$-ary symmetric channel with uniform noise pmf $W_{(q-1)/q}$ is more noisy than every channel on the same input alphabet has an analogue concerning additive white Gaussian noise (AWGN) channels. Consider all additive noise channels of the form:
\begin{equation}
Y = X + Z
\end{equation}
where $X,Y \in \R$, the input $X$ is uncorrelated with the additive noise $Z$: $\E\left[X Z\right] = 0$, and the noise $Z$ has power constraint $\E\left[Z^2\right] \leq \sigma_Z^2$ for some fixed $\sigma_Z > 0$. Let $X = X_{\textsf{g}} \sim \mathcal{N}(0,\sigma_X^2)$ (Gaussian distribution with mean $0$ and variance $\sigma_X^2$) for some $\sigma_X > 0$. Then, we have:
\begin{equation}
I\left(X_{\textsf{g}};X_{\textsf{g}} + Z\right) \geq I\left(X_{\textsf{g}};X_{\textsf{g}} + Z_{\textsf{g}}\right)
\end{equation} 
where $Z_{\textsf{g}} \sim \mathcal{N}(0,\sigma_Z^2)$, $Z_{\textsf{g}}$ is independent of $X_{\textsf{g}}$, and equality occurs if and only if $Z = Z_{\textsf{g}}$ in distribution \cite[Section 4.7]{InfoTheoryNotes}. This states that Gaussian noise is the ``worst case additive noise'' for a Gaussian source. Hence, the AWGN channel is \textit{not more capable} than any other additive noise channel with the same constraints. As a result, the AWGN channel is \textit{not less noisy} than any other additive noise channel with the same constraints (using Proposition \ref{Prop: Relations between Channel Preorders}).

\section{Equivalent characterizations of less noisy preorder}
\label{Equivalent Characterizations of Less Noisy Preorder}

Having studied the structure of less noisy domination and degradation regions of channels, we now consider the problem of verifying whether a channel $W$ is less noisy than another channel $V$. Since using Definition \ref{Def: Less Noisy Preorder} or Proposition \ref{Prop: KL Divergence Characterization of Less Noisy} directly is difficult, we often start by checking whether $V$ is a degraded version of $W$. When this fails, we typically resort to verifying van Dijk's condition in Proposition \ref{Prop: van Dijk Characterization of Less Noisy}, cf. \cite[Theorem 2]{vanDijk}. In this section, we prove the equivalent characterization of the less noisy preorder in Theorem \ref{Thm: Chi Squared Divergence Characterization of Less Noisy}, and then present some useful corollaries of van Dijk's condition.

\subsection{Characterization using $\chi^2$-divergence}
\label{Characterization using chi squared divergence}

Recall the general measure theoretic setup and the definition of $\chi^2$-divergence from subsection \ref{Chi Squared Divergence Characterization of the Less Noisy Preorder}. It is well-known that KL divergence is locally approximated by $\chi^2$-divergence, e.g. \cite[Section 4.2]{InfoTheoryNotes}. While this approximation sometimes fails globally,
cf. \cite{Anantharam2014Hypercontractivity}, the following notable result was first shown by Ahlswede and G\'{a}cs in the discrete case in \cite{AhlswedeGacsHypercontraction}, and then extended to general alphabets in \cite[Theorem 3]{GraphSDPI}:
\begin{equation}
\label{Eq: KL and Chi-Squared Contraction Coefficient}
\eta_{\textsf{\tiny KL}}\!\left(W\right) = \eta_{\chi^2}\!\left(W\right) \triangleq \sup_{\substack{P_X,Q_X \\ 0<\chi^2(P_X||Q_X)<+\infty}}{\frac{\chi^2\left(P_X W||Q_X W\right)}{\chi^2\left(P_X||Q_X\right)}} 
\end{equation}
for any Markov kernel $W : \H_1 \times \X \rightarrow [0,1]$, where $\eta_{\textsf{\tiny KL}}\!\left(W\right)$ is defined as in \eqref{Eq: KL Contraction Coefficient}, $\eta_{\chi^2}\!\left(W\right)$ is the contraction coefficient for $\chi^2$-divergence, and the suprema in $\eta_{\textsf{\tiny KL}}\!\left(W\right)$ and $\eta_{\chi^2}\!\left(W\right)$ are taken over all probability measures $P_X$ and $Q_X$ on $(\X,\F)$. Since $\eta_{\textsf{\tiny KL}}$ characterizes less noisy domination with respect to an erasure channel as mentioned in subsection \ref{Main Question and Motivation}, \eqref{Eq: KL and Chi-Squared Contraction Coefficient} portrays that $\eta_{\chi^2}$ also characterizes this. We will now prove Theorem \ref{Thm: Chi Squared Divergence Characterization of Less Noisy} from subsection \ref{Chi Squared Divergence Characterization of the Less Noisy Preorder}, which generalizes \eqref{Eq: KL and Chi-Squared Contraction Coefficient} and illustrates that $\chi^2$-divergence actually characterizes less noisy domination by an arbitrary channel.

\renewcommand{\proofname}{Proof of Theorem \ref{Thm: Chi Squared Divergence Characterization of Less Noisy}}

\begin{proof}
In order to prove the forward direction, we recall the local approximation of KL divergence using $\chi^2$-divergence from \cite[Proposition 4.2]{InfoTheoryNotes}, which states that for any two probability measures $P_X$ and $Q_X$ on $(\X,\F)$:
\begin{equation}
\label{Eq: Local Approximation}
\lim_{\lambda \rightarrow 0^{+}}{\frac{2}{\lambda^2} D\!\left(\lambda P_X + \bar{\lambda} Q_X||Q_X\right)} = \chi^2 \!\left(P_X||Q_X\right)
\end{equation}
where $\bar{\lambda} = 1 - \lambda$ for $\lambda \in (0,1)$, and both sides of \eqref{Eq: Local Approximation} are finite or infinite together. Then, we observe that for any two probability measures $P_X$ and $Q_X$, and any $\lambda \in [0,1]$, we have:
$$ D\!\left(\lambda P_X W \! + \! \bar{\lambda} Q_X W||Q_X W\right) \geq D\!\left(\lambda P_X V \! + \! \bar{\lambda} Q_X V||Q_X V\right) $$
since $W \succeq_{\textsf{\tiny ln}} V$. Scaling this inequality by $\frac{2}{\lambda^2}$ and letting $\lambda \rightarrow 0$ produces:
$$ \chi^2 \!\left(P_X W||Q_X W\right) \geq \chi^2 \!\left(P_X V||Q_X V\right) $$
as shown in \eqref{Eq: Local Approximation}. This proves the forward direction.

To establish the converse direction, we recall an integral representation of KL divergence using $\chi^2$-divergence presented in \cite[Appendix A.2]{GraphSDPI} (which can be distilled from the argument in \cite[Theorem 1]{ChoiRuskaiSeneta94}):\footnote{Note that \cite[Equation (78)]{GraphSDPI}, and hence \cite[Equation (7)]{SymmetricChannelDominationConf}, are missing factors of $\frac{1}{t+1}$ inside the integrals.}
\begin{equation}
\label{Eq: Integral Representation}
D\!\left(P_X || Q_X\right) = \int_{0}^{\infty}{\frac{\chi^2 \!\left(P_X||Q_X^t\right)}{t+1} \, dt} 
\end{equation}
for any two probability measures $P_X$ and $Q_X$ on $(\X,\F)$, where $Q_X^t = \frac{t}{1+t} P_X + \frac{1}{t+1} Q_X$ for $t \in [0,\infty)$, and both sides of \eqref{Eq: Integral Representation} are finite or infinite together (as a close inspection of the proof in \cite[Appendix A.2]{GraphSDPI} reveals). Hence, for every $P_X$ and $Q_X$, we have by assumption: 
$$ \chi^2 \!\left(P_X W||Q_X^t W\right) \geq \chi^2 \!\left(P_X V||Q_X^t V\right) $$
which implies that:
\begin{align*}
\int_{0}^{\infty}{\frac{\chi^2 \!\left(P_X W||Q_X^t W\right)}{t+1} \, dt} & \geq \int_{0}^{\infty}{\frac{\chi^2 \!\left(P_X V||Q_X^t V\right)}{t+1} \, dt} \\
\Rightarrow \enspace D\!\left(P_X W || Q_X W\right) & \geq D\!\left(P_X V || Q_X V\right) .
\end{align*}
Hence, $W \succeq_{\textsf{\tiny ln}} V$, which completes the proof.
\end{proof}

\renewcommand{\proofname}{Proof}

\subsection{Characterizations via the L\"{o}wner partial order and spectral radius}
\label{Characterizations via the Loewner Partial Order and Spectral Radius}

We will use the finite alphabet setup of subsection \ref{Channel Preorders in Information Theory} for the remaining discussion in this paper. In the finite alphabet setting, Theorem \ref{Thm: Chi Squared Divergence Characterization of Less Noisy} states that $W \in \R^{q \times r}_{\textsf{sto}}$ is less noisy than $V \in \R^{q \times s}_{\textsf{sto}}$ if and only if for every $P_X,Q_X \in \P_q$:
\begin{equation}
\label{Eq: Finite Chi Squared Condition}
\chi^2 \!\left(P_X W||Q_X W\right) \geq \chi^2 \!\left(P_X V||Q_X V\right) .
\end{equation}
This characterization has the flavor of a L\"{o}wner partial order condition. Indeed, it is straightforward to verify that for any $P_X \in \P_q$ and $Q_X \in \P_q^{\circ}$, we can write their $\chi^2$-divergence as:
\begin{equation}
\chi^2 \!\left(P_X||Q_X\right) = J_X \textsf{\small diag}\!\left(Q_X\right)^{-1} J_X^T . 
\end{equation}
where $J_X = P_X - Q_X$. Hence, we can express \eqref{Eq: Finite Chi Squared Condition} as:
\begin{equation}
J_X W \textsf{\small diag}\!\left(Q_X W\right)^{-1} W^T J_X^T \geq J_X V \textsf{\small diag}\!\left(Q_X V\right)^{-1} V^T J_X^T
\end{equation}
for every $J_X = P_X - Q_X$ such that $P_X \in \P_q$ and $Q_X \in \P_q^{\circ}$. This suggests that \eqref{Eq: Finite Chi Squared Condition} is equivalent to:
\begin{equation}
\label{Eq: Loewner Condition Intuition}
W \textsf{\small diag}\!\left(Q_X W\right)^{-1} W^T \succeq_{\textsf{\tiny PSD}} V \textsf{\small diag}\!\left(Q_X V\right)^{-1} V^T
\end{equation}
for every $Q_X \in \P_q^{\circ}$. It turns out that \eqref{Eq: Loewner Condition Intuition} indeed characterizes $\succeq_{\textsf{\tiny ln}}$, and this is straightforward to prove directly. The next proposition illustrates that \eqref{Eq: Loewner Condition Intuition} also follows as a corollary of van Dijk's characterization in Proposition \ref{Prop: van Dijk Characterization of Less Noisy}, and presents an equivalent spectral characterization of $\succeq_{\textsf{\tiny ln}}$. 

\begin{proposition}[L\"{o}wner and Spectral Characterizations of $\succeq_{\textsf{\tiny ln}}$]
\label{Prop: Loewner and Spectral Characterizations of Less Noisy}
For any pair of channels $W \in \R^{q \times r}_{\textsf{sto}}$ and $V \in \R^{q \times s}_{\textsf{sto}}$ on the same input alphabet $[q]$, the following are equivalent:
\begin{enumerate}
\item $W \succeq_{\textsf{\tiny ln}} V$.
\item For every $P_X \in \P_q^{\circ}$, we have:
$$ W \textsf{\small diag}\!\left(P_X W\right)^{-1} W^T \succeq_{\textsf{\tiny PSD}} V \textsf{\small diag}\!\left(P_X V\right)^{-1} V^T . $$
\item For every $P_X \in \P_q^{\circ}$, we have $\Range\big(V \textsf{\small diag}\!\left(P_X V\right)^{-1} V^T\big)$ $\subseteq \Range\big(W \textsf{\small diag}\!\left(P_X W\right)^{-1} W^T\big)$ and:\footnote{Note that \cite[Theorem 1 part 4]{SymmetricChannelDominationConf} neglected to mention the inclusion relation $\Range\big(V \textsf{\scriptsize diag}\!\left(P_X V\right)^{-1} V^T\big) \subseteq \Range\big(W \textsf{\scriptsize diag}\!\left(P_X W\right)^{-1} W^T\big)$.} 
$$ \rho\!\left(\!\left(W \textsf{\small diag}\!\left(P_X W\right)^{-1} W^T\right)^{\dagger} \! V \textsf{\small diag}\!\left(P_X V\right)^{-1} V^{T}\right) = 1 . $$
\end{enumerate}
\end{proposition}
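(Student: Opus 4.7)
The plan is to establish $(1) \Leftrightarrow (2)$ and $(2) \Leftrightarrow (3)$ separately. Both rest on the structural identity
$$ A P_X^T = B P_X^T = \mathbf{1}, $$
where $A \triangleq W\, \textsf{\small diag}(P_X W)^{-1} W^T$ and $B \triangleq V\, \textsf{\small diag}(P_X V)^{-1} V^T$; this follows by a direct computation in which $W^T P_X^T = (P_X W)^T$, multiplication by $\textsf{\small diag}(P_X W)^{-1}$ produces $\mathbf{1}$, and $W \mathbf{1} = \mathbf{1}$ by row-stochasticity of $W$. Consequently $(A - B) P_X^T = 0$, and by symmetry $P_X (A - B) = 0$.

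\textbf{Step 1: $(1) \Leftrightarrow (2)$.} I would invoke van Dijk's concavity criterion (Proposition \ref{Prop: van Dijk Characterization of Less Noisy}): $W \succeq_{\textsf{\tiny ln}} V$ iff $F(P_X) = I(P_X, W_{Y|X}) - I(P_X, V_{Y|X})$ is concave on $\P_q$. Since the term $-\sum_x P_X(x)[H(W(\cdot|x)) - H(V(\cdot|x))]$ in $F$ is linear in $P_X$, the Hessian of $F$ equals the Hessian of $H(P_X W) - H(P_X V)$, which a direct computation yields as $-A + B$. Concavity on $\P_q^\circ$ is equivalent to the Hessian being negative semidefinite along the tangent hyperplane $T \triangleq \{d \in \R^q : \mathbf{1}^T d = 0\}$, i.e., $J(A - B)J^T \geq 0$ for every $J \in T$. (The same $T$-PSD also falls out of Theorem \ref{Thm: Chi Squared Divergence Characterization of Less Noisy} by unfolding $\chi^2(\mu\|\nu) = (\mu - \nu)\textsf{\small diag}(\nu)^{-1}(\mu - \nu)^T$.) To promote $T$-PSD to the full L\"owner inequality on $\R^q$, I would use $(A - B)P_X^T = 0$: for arbitrary $d \in \R^q$, the vector $\tilde d \triangleq d - (\mathbf{1}^T d) P_X^T$ lies in $T$ because $\mathbf{1}^T P_X^T = 1$, and the column and row null conditions annihilate the cross terms, giving
$$ d^T (A - B) d = \tilde d^T (A - B) \tilde d \geq 0. $$
The reverse direction $(2) \Rightarrow (1)$ is immediate by restricting to $d = J \in T$ and running the chain backwards through either Proposition \ref{Prop: van Dijk Characterization of Less Noisy} or Theorem \ref{Thm: Chi Squared Divergence Characterization of Less Noisy}.

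\textbf{Step 2: $(2) \Leftrightarrow (3)$.} Both $A$ and $B$ are PSD Gram matrices. A classical fact for PSD $A, B$ states that $A \succeq_{\textsf{\tiny PSD}} B$ is equivalent to the conjunction $\Range(B) \subseteq \Range(A)$ and $\rho(A^{\dagger} B) \leq 1$: the range inclusion follows since $A v = 0$ forces $0 \leq v^T B v \leq v^T A v = 0$, hence $B v = 0$; and the spectral bound follows by restricting the generalized eigenvalue problem $B v = \lambda A v$ to $\Range(A)$, where $A$ is invertible and $A^{-1/2} B A^{-1/2} \preceq I$ delivers $\rho \leq 1$. To sharpen $\rho \leq 1$ to $\rho = 1$, I would again invoke $A P_X^T = B P_X^T = \mathbf{1}$: decompose $P_X^T = u + w$ orthogonally with $u \in \Range(A)$ and $w \in \mathrm{Null}(A) \subseteq \mathrm{Null}(B)$, so that $A u = B u = \mathbf{1}$; since $A^{\dagger} A$ is the orthogonal projector onto $\Range(A)$, one has $A^{\dagger} B u = A^{\dagger} A u = u$, and $u \neq 0$ (else $\mathbf{1} = 0$). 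Thus $1$ is an eigenvalue of $A^{\dagger} B$, forcing $\rho(A^{\dagger} B) \geq 1$ and hence equality. The converse $(3) \Rightarrow (2)$ is the standard implication: $\rho(A^{\dagger} B) \leq 1$ together with $\Range(B) \subseteq \Range(A)$ imply $A \succeq_{\textsf{\tiny PSD}} B$.

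\textbf{Main obstacle.} The crux of the whole argument is noticing and exploiting the null-vector identity $(A - B) P_X^T = 0$, which plays a dual role: it bridges the tangent-space PSD delivered by van Dijk's concavity (or by the $\chi^2$-divergence comparison) with the full L\"owner order on $\R^q$, and it produces the canonical eigenvector $u$ that pins the spectral radius to exactly $1$. Without this identity, $(2)$ could not be upgraded past $T$-PSD and $(3)$ could only assert the weaker bound $\rho(A^{\dagger} B) \leq 1$.
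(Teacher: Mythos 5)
Your proof is correct and follows essentially the same route as the paper: van Dijk's concavity criterion with the Hessian computation $\nabla^2 F = B - A$ for $(1) \Leftrightarrow (2)$, and the classical PSD fact (range inclusion plus $\rho(A^{\dagger}B) \leq 1$) pinned down to $\rho = 1$ via the structural identity $A P_X^T = B P_X^T = \1$ for $(2) \Leftrightarrow (3)$. The one genuine refinement you offer is the explicit upgrade from tangent-space negative semidefiniteness to full L\"owner domination on $\R^q$ via the decomposition $d \mapsto \tilde d = d - (\1^T d)P_X^T$ together with $(A-B)P_X^T = 0$, which makes airtight a step the paper covers only by citing the open-domain version of the concavity--Hessian equivalence (whereas $\P_q^{\circ}$ is merely relatively open in $\R^q$); your right-eigenvector argument for $\rho(A^{\dagger}B) = 1$ (taking $u = \mathrm{proj}_{\Range(A)} P_X^T$) is an equivalent small variation on the paper's left-eigenvector computation $\1^T A^{\dagger} B = P_X (AA^{\dagger}) B = P_X B = \1^T$.
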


\begin{proof} 
($1 \Leftrightarrow 2$) Recall the functional $F: \P_q \rightarrow \R, F(P_X) = I(P_X,W_{Y|X}) - I(P_X,V_{Y|X})$ defined in Proposition \ref{Prop: van Dijk Characterization of Less Noisy}, cf. \cite[Theorem 2]{vanDijk}. Since $F : \P_q \rightarrow \R$ is continuous on its domain $\P_q$, and twice differentiable on $\P_q^{\circ}$, $F$ is concave if and only if its Hessian is negative semidefinite for every $P_X \in \P_q^{\circ}$ (i.e. $-\nabla^2 F \left(P_X\right) \succeq_{\textsf{\tiny PSD}} 0$ for every $P_X \in \P_q^{\circ}$) \cite[Section 3.1.4]{ConvexOptimization}. The Hessian matrix of $F$, $\nabla^2 F : \P_q^{\circ} \rightarrow \R^{q \times q}_{\textsf{sym}}$, is defined entry-wise for every $x,x^{\prime} \in [q]$ as:
$$ \left[\nabla^2 F (P_X)\right]_{x,x^{\prime}} = \frac{\partial^2 F}{\partial P_X(x) \partial P_X(x^{\prime})}\left(P_X\right) $$
where we index the matrix $\nabla^2 F (P_X)$ starting at $0$ rather than $1$. Furthermore, a straightforward calculation shows that:
$$ \nabla^2 F (P_X) = V \textsf{\small diag}\!\left(P_X V\right)^{-1} V^T - W \textsf{\small diag}\!\left(P_X W\right)^{-1} W^T $$
for every $P_X \in \P_q^{\circ}$. (Note that the matrix inverses here are well-defined because $P_X \in \P_q^{\circ}$). Therefore, $F$ is concave if and only if for every $P_X \in \P_q^{\circ}$:
$$ W \textsf{\small diag}\!\left(P_X W\right)^{-1} W^T \succeq_{\textsf{\tiny PSD}} V \textsf{\small diag}\!\left(P_X V\right)^{-1} V^T . $$
This establishes the equivalence between parts 1 and 2 due to van Dijk's characterization of $\succeq_{\textsf{\tiny ln}}$ in Proposition \ref{Prop: van Dijk Characterization of Less Noisy}.

($2 \Leftrightarrow 3$) We now derive the spectral characterization of $\succeq_{\textsf{\tiny ln}}$ using part 2. Recall the well-known fact (see \cite[Theorem 1 parts (a),(f)]{PSDLoewnerOrder} and \cite[Theorem 7.7.3 (a)]{BasicMatrixAnalysis}):\\
\textit{Given positive semidefinite matrices $A,B \in \R^{q \times q}_{\succeq 0}$, $A \succeq_{\textsf{\tiny PSD}} B$ if and only if $\Range(B) \subseteq \Range(A)$ and $\rho\left(A^{\dagger}B\right) \leq 1$.}\\
Since $W \textsf{\small diag}\!\left(P_X W\right)^{-1} W^T$ and $V \textsf{\small diag}\!\left(P_X V\right)^{-1} V^T$ are positive semidefinite for every $P_X \in \P_q^{\circ}$, applying this fact shows that part 2 holds if and only if for every $P_X \in \P_q^{\circ}$, we have $\Range\big(V \textsf{\small diag}\!\left(P_X V\right)^{-1} V^T\big) \subseteq \Range\big(W \textsf{\small diag}\!\left(P_X W\right)^{-1} W^T\big)$ and: 
$$ \rho\left(\left(W \textsf{\small diag}\!\left(P_X W\right)^{-1} W^T\right)^{\dagger} V \textsf{\small diag}\!\left(P_X V\right)^{-1} V^{T}\right) \leq 1 . $$
To prove that this inequality is an equality, for any $P_X \in \P_q^{\circ}$, let $A = W \textsf{\small diag}\!\left(P_X W\right)^{-1} W^T$ and $B = V \textsf{\small diag}\!\left(P_X V\right)^{-1} \cdot$ $V^T$. It suffices to prove that: $\Range(B) \subseteq \Range(A)$ and $\rho\left(A^{\dagger}B\right) \leq 1$ if and only if $\Range(B) \subseteq \Range(A)$ and $\rho\left(A^{\dagger}B\right) = 1$. The converse direction is trivial, so we only establish the forward direction. Observe that $P_X A = \1^T$ and $P_X B = \1^T$. This implies that $\1^T A^{\dagger} B = P_X (A A^{\dagger}) B = P_X B = \1^T$, where $(A A^{\dagger}) B = B$ because $\Range(B) \subseteq \Range(A)$ and $A A^{\dagger}$ is the orthogonal projection matrix onto $\Range(A)$. Since $\rho\left(A^{\dagger}B\right) \leq 1$ and $A^{\dagger} B$ has an eigenvalue of $1$, we have $\rho\left(A^{\dagger}B\right) = 1$. Thus, we have proved that part 2 holds if and only if for every $P_X \in \P_q^{\circ}$, we have $\Range\big(V \textsf{\small diag}\!\left(P_X V\right)^{-1} V^T\big) \subseteq \Range\big(W \textsf{\small diag}\!\left(P_X W\right)^{-1} W^T\big)$ and:
$$ \rho\left(\left(W \textsf{\small diag}\!\left(P_X W\right)^{-1} W^T\right)^{\dagger} V \textsf{\small diag}\!\left(P_X V\right)^{-1} V^{T}\right) = 1 . $$ 
This completes the proof.
\end{proof}

The L\"{o}wner characterization of $\succeq_{\textsf{\tiny ln}}$ in part 2 of Proposition \ref{Prop: Loewner and Spectral Characterizations of Less Noisy} will be useful for proving some of our ensuing results. We remark that the equivalence between parts 1 and 2 can be derived by considering several other functionals. For instance, for any fixed pmf $Q_X \in \P_q^{\circ}$, we may consider the functional $F_2:\P_q \rightarrow \R$ defined by:
\begin{equation}
F_2(P_X) = D\!\left(P_X W || Q_X W\right) - D\!\left(P_X V || Q_X V\right)
\end{equation}
which has Hessian matrix, $\nabla^2 F_2 : \P_q^{\circ} \rightarrow \R^{q \times q}_{\textsf{sym}}$, $\nabla^2 F_2 (P_X) = W \textsf{\small diag}\!\left(P_X W\right)^{-1} W^T - V \textsf{\small diag}\!\left(P_X V\right)^{-1} V^T$, that does not depend on $Q_X$. Much like van Dijk's functional $F$, $F_2$ is \textit{convex} (for all $Q_X \in \P_q^{\circ}$) if and only if $W \succeq_{\textsf{\tiny ln}} V$. This is reminiscent of Ahlswede and G\'{a}cs' technique to prove \eqref{Eq: KL and Chi-Squared Contraction Coefficient}, where the convexity of a similar functional is established \cite{AhlswedeGacsHypercontraction}. 

As another example, for any fixed pmf $Q_X \in \P_q^{\circ}$, consider the functional $F_3:\P_q \rightarrow \R$ defined by:
\begin{equation}
F_3(P_X) = \chi^2 \!\left(P_X W || Q_X W\right) - \chi^2 \!\left(P_X V || Q_X V\right)
\end{equation}
which has Hessian matrix, $\nabla^2 F_3 : \P_q^{\circ} \rightarrow \R^{q \times q}_{\textsf{sym}}$, $\nabla^2 F_3 (P_X) = 2 \, W \textsf{\small diag}\!\left(Q_X W\right)^{-1} W^T - 2 \, V \textsf{\small diag}\!\left(Q_X V\right)^{-1} V^T$, that does not depend on $P_X$. Much like $F$ and $F_2$, $F_3$ is \textit{convex} for all $Q_X \in \P_q^{\circ}$ if and only if $W \succeq_{\textsf{\tiny ln}} V$.

Finally, we also mention some specializations of the spectral radius condition in part 3 of Proposition \ref{Prop: Loewner and Spectral Characterizations of Less Noisy}. If $q \geq r$ and $W$ has full column rank, the expression for spectral radius in the proposition statement can be simplified to:
\begin{equation}
\label{Eq: Simpler Spectral Condition}
\rho\left((W^{\dagger})^T \textsf{\small diag}\!\left(P_X W\right) W^{\dagger} V \textsf{\small diag}\!\left(P_X V\right)^{-1} V^{T}\right) = 1
\end{equation}
using basic properties of the Moore-Penrose pseudoinverse. Moreover, if $q = r$ and $W$ is non-singular, then the Moore-Penrose pseudoinverses in \eqref{Eq: Simpler Spectral Condition} can be written as inverses, and the inclusion relation between the ranges in part 3 of Proposition \ref{Prop: Loewner and Spectral Characterizations of Less Noisy} is trivially satisfied (and can be omitted from the proposition statement). We have used the spectral radius condition in this latter setting to (numerically) compute the additive less noisy domination region in Figure \ref{Figure: Additive Noise Channel Domination Regions}.

\section{Conditions for less noisy domination over additive noise channels}
\label{Conditions for Less Noisy Domination over Additive Noise Channels}

We now turn our attention to deriving several conditions for determining when $q$-ary symmetric channels are less noisy than other channels. Our interest in $q$-ary symmetric channels arises from their analytical tractability; Proposition \ref{Prop: Properties of Symmetric Channel Matrices} from subsection \ref{Symmetric Channels and their Properties}, Proposition \ref{Prop: Constants of Symmetric Channels} from section \ref{Less Noisy Domination and Logarithmic Sobolev Inequalities}, and \cite[Theorem 4.5.2]{Gallager} (which conveys that $q$-ary symmetric channels have uniform capacity achieving input distributions) serve as illustrations of this tractability. We focus on additive noise channels in this section, and on general channels in the next section.  

\subsection{Necessary conditions}
\label{Necessary Conditions}

We first present some straightforward necessary conditions for when an additive noise channel $W \in \R^{q \times q}_{\textsf{sto}}$ with $q \in \N$ is less noisy than another additive noise channel $V \in \R^{q \times q}_{\textsf{sto}}$ on an Abelian group $(\X,\oplus)$. These conditions can obviously be specialized for less noisy domination by symmetric channels. 

\begin{proposition}[Necessary Conditions for $\succeq_{\textsf{\tiny ln}}$ Domination over Additive Noise Channels]
\label{Prop: Necessary Conditions for Less Noisy Domination over Additive Noise Channels}
Suppose $W = \textsf{\small circ}_{\X}(w)$ and $V = \textsf{\small circ}_{\X}(v)$ are additive noise channels with noise pmfs $w,v \in \P_q$ such that $W \succeq_{\textsf{\tiny ln}} V$. Then, the following are true:
\begin{enumerate}
\item (Circle Condition) $\left\|w - \unif\right\|_{\ell^2} \geq \left\|v - \unif\right\|_{\ell^2}$. 
\item (Contraction Condition) $\eta_{\textsf{\tiny KL}}\!\left(W\right) \geq \eta_{\textsf{\tiny KL}}\!\left(V\right)$.
\item (Entropy Condition) $H\left(v\right) \geq H\left(w\right)$, where $H:\P_q \rightarrow \R^{+}$ is the Shannon entropy function. 
\end{enumerate}
\end{proposition}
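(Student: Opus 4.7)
The plan is to prove each of the three conditions by choosing particularly convenient input distributions and exploiting the structure of additive noise channels, namely that (i) additive noise channels are doubly stochastic, so $\unif W = \unif V = \unif$, and (ii) when the input is the point mass $\Delta_0 = (1,0,\dots,0) \in \P_q$ at the identity, the output is the noise pmf itself, since $\Delta_0 \, \textsf{\small circ}_{\X}(w) = w$ (the first row of an $\X$-circulant matrix $\textsf{\small circ}_{\X}(w)$ is $w$ by the definition \eqref{Eq: Group Circulant Matrix Definition}, as $-0 \oplus b = b$). No step should be difficult; if anything, the circle condition requires the most unpacking.

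\textbf{Circle condition.} First I would invoke Theorem \ref{Thm: Chi Squared Divergence Characterization of Less Noisy} with $P_X = \Delta_0$ and $Q_X = \unif$, so that the inequality $\chi^2(P_X W \| Q_X W) \geq \chi^2(P_X V \| Q_X V)$ becomes $\chi^2(w \| \unif) \geq \chi^2(v \| \unif)$. Next I would observe by direct computation that for any $p \in \P_q$,
\begin{equation*}
\chi^2(p \| \unif) = \sum_{i=0}^{q-1} \frac{(p_i - 1/q)^2}{1/q} = q \, \| p - \unif \|_{\ell^2}^2,
\end{equation*}
so the inequality immediately rearranges to $\|w - \unif\|_{\ell^2} \geq \|v - \unif\|_{\ell^2}$.

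\textbf{Contraction condition.} This does not actually require the additive noise hypothesis. From Proposition \ref{Prop: KL Divergence Characterization of Less Noisy}, $W \succeq_{\textsf{\tiny ln}} V$ gives $D(P_X W \| Q_X W) \geq D(P_X V \| Q_X V)$ for every pair $P_X, Q_X \in \P_q$. Restricting to pairs with $0 < D(P_X \| Q_X) < +\infty$ and dividing both sides by $D(P_X \| Q_X)$ preserves the inequality; taking supremum over the common set then yields $\eta_{\textsf{\tiny KL}}(W) \geq \eta_{\textsf{\tiny KL}}(V)$ directly from the definition \eqref{Eq: KL Contraction Coefficient}.

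\textbf{Entropy condition.} Here I would use Proposition \ref{Prop: Relations between Channel Preorders} to deduce that $W \succeq_{\textsf{\tiny ln}} V \Rightarrow W \succeq_{\textsf{\tiny mc}} V$, i.e.\ $I(P_X, W_{Y|X}) \geq I(P_X, V_{Y|X})$ for all $P_X \in \P_q$. Specializing to $P_X = \unif$, since the group operation $x \mapsto x \oplus z$ is a bijection for each $z$, the channel law $W_{Y|X}(\cdot|x)$ is a permutation of $w$, so $H(Y_W | X) = H(w)$ and, $Y_W$ being uniform, $I(\unif, W_{Y|X}) = \log q - H(w)$. The identical argument for $V$ gives $I(\unif, V_{Y|X}) = \log q - H(v)$, and the more-capable inequality rearranges to $H(v) \geq H(w)$. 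Alternatively, one can derive the entropy condition from Proposition \ref{Prop: KL Divergence Characterization of Less Noisy} with the same choice $P_X = \Delta_0$, $Q_X = \unif$ used in the circle condition, noting $D(p \| \unif) = \log q - H(p)$.
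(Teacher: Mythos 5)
Your proposal is correct and matches the paper's proof essentially line for line: parts 1 and 3 use exactly the choice $P_X = (1,0,\dots,0)$, $Q_X = \unif$ in the $\chi^2$- and KL-divergence characterizations respectively, and part 2 fills in the obvious supremum argument that the paper compresses into "easily follows." Your alternative derivation of the entropy condition via the more-capable preorder is a valid and slightly more conceptual detour, but the paper takes the direct route you also mention.
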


\begin{proof} ~\newline
\textbf{Part 1:} Letting $P_X = \left(1,0,\dots,0\right)$ and $Q_X = \unif$ in the $\chi^2$-divergence characterization of $\succeq_{\textsf{\tiny ln}}$ in Theorem \ref{Thm: Chi Squared Divergence Characterization of Less Noisy} produces:
$$ q \left\|w - \unif\right\|_{\ell^2}^2 = \chi^2\left(w||\unif\right) \geq \chi^2\left(v||\unif\right) = q \left\|v - \unif\right\|_{\ell^2}^2 $$
since $\unif W = \unif V = \unif$, and $P_X W = w$ and $P_X V = v$ using \eqref{Eq: Group Circulant Matrix Row-Column Decomposition}. (This result can alternatively be proved using part 2 of Proposition \ref{Prop: Loewner and Spectral Characterizations of Less Noisy} and Fourier analysis.) \\
\textbf{Part 2:} This easily follows from Proposition \ref{Prop: KL Divergence Characterization of Less Noisy} and \eqref{Eq: KL Contraction Coefficient}. \\
\textbf{Part 3:} Letting $P_X = \left(1,0,\dots,0\right)$ and $Q_X = \unif$ in the KL divergence characterization of $\succeq_{\textsf{\tiny ln}}$ in Proposition \ref{Prop: KL Divergence Characterization of Less Noisy} produces:
$$ \log\left(q\right) - H\left(w\right) = D\left(w||\unif\right) \geq D\left(v||\unif\right) = \log\left(q\right) - H\left(v\right) $$
via the same reasoning as part 1. This completes the proof.
\end{proof}

We remark that the aforementioned necessary conditions have many generalizations. Firstly, if $W,V \in \R^{q \times q}_{\textsf{sto}}$ are doubly stochastic matrices, then the generalized circle condition holds:
\begin{equation}
\label{Eq: General Circle Condition}
\left\|W - W_{\frac{q-1}{q}}\right\|_{\textsf{Fro}} \geq \left\|V - W_{\frac{q-1}{q}}\right\|_{\textsf{Fro}}
\end{equation} 
where $W_{(q-1)/q} = \1 \unif$ is the $q$-ary symmetric channel whose conditional pmfs are all uniform, and $\left\|\cdot\right\|_{\textsf{Fro}}$ denotes the Frobenius norm. Indeed, letting $P_X = \Delta_{x} = \left(0,\dots,1,\dots,0\right)$ for $x \in [q]$, which has unity in the $(x+1)$th position, in the proof of part 1 and then adding the inequalities corresponding to every $x \in [q]$ produces \eqref{Eq: General Circle Condition}. Secondly, the contraction condition in Proposition \ref{Prop: Necessary Conditions for Less Noisy Domination over Additive Noise Channels} actually holds for any pair of general channels $W \in \R^{q \times r}_{\textsf{sto}}$ and $V \in \R^{q \times s}_{\textsf{sto}}$ on a common input alphabet (not necessarily additive noise channels). Moreover, we can start with Theorem \ref{Thm: Chi Squared Divergence Characterization of Less Noisy} and take the suprema of the ratios in $\chi^2\left(P_X W||Q_X W\right)/\chi^2\left(P_X||Q_X\right) \geq \chi^2\left(P_X V||Q_X V\right)/\chi^2\left(P_X||Q_X\right)$ over all $P_X$ ($\neq Q_X$) to get:
\begin{equation}
\rho_{\textsf{max}}\!\left(Q_X,W\right) \geq \rho_{\textsf{max}}\!\left(Q_X,V\right)
\end{equation}
for any $Q_X \in \P_q$, where $\rho_{\textsf{max}}\!\left(\cdot\right)$ denotes \textit{maximal correlation} which is defined later in part 3 of Proposition \ref{Prop: Constants of Symmetric Channels}, cf. \cite{RenyiCorrelation}, and we use \cite[Theorem 3]{BoundsbetweenContractionCoefficients} (or the results of \cite{ChiSquaredContractionMaximalCorrelation}). A similar result also holds for the contraction coefficient for KL divergence with fixed input pmf (see e.g. \cite[Definition 1]{BoundsbetweenContractionCoefficients} for a definition). 

\subsection{Sufficient conditions}
\label{Sufficient Conditions}

We next portray a sufficient condition for when an additive noise channel $V \in \R^{q \times q}_{\textsf{sto}}$ is a degraded version of a symmetric channel $W_{\delta} \in \R^{q \times q}_{\textsf{sto}}$. By Proposition \ref{Prop: Relations between Channel Preorders}, this is also a sufficient condition for $W_{\delta} \succeq_{\textsf{\tiny ln}} V$.  

\begin{proposition}[Degradation by Symmetric Channels]
\label{Prop: Degradation by Symmetric Channels}
Given an additive noise channel $V = \textsf{\small circ}_{\X}(v)$ with noise pmf $v \in \P_q$ and minimum probability $\tau = \min\{\left[V\right]_{i,j}: 1 \leq i,j \leq q\}$, we have:
$$ 0 \leq \delta \leq \left(q-1\right)\tau \enspace \Rightarrow \enspace W_{\delta} \succeq_{\textsf{\tiny deg}} V $$
where $W_{\delta} \in \R^{q \times q}_{\textsf{sto}}$ is a $q$-ary symmetric channel.
\end{proposition}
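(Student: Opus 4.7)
The plan is to reduce the degradation statement to an explicit convolution equation and then verify that its unique solution is a valid pmf. By Lemma~\ref{Lemma: Additive Noise Channel Degradation}, since both $W_\delta = \textsf{\small circ}_\X(\w)$ and $V = \textsf{\small circ}_\X(v)$ are additive noise channels on $(\X,\oplus)$, the relation $W_\delta \succeq_{\textsf{\tiny deg}} V$ holds if and only if there exists a noise pmf $z \in \P_q$ such that $V = W_\delta\, \textsf{\small circ}_\X(z)$; equivalently, $v = \w * z$ in the group algebra. So the goal reduces to constructing such a $z$ under the hypothesis $\delta \le (q-1)\tau$.

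The construction is almost forced by the structure of $\w$. Since $\w$ places mass $1-\delta$ at the identity and mass $\delta/(q-1)$ at every nonidentity element, expanding $(\w * z)_k$ and using $\sum_j z_j = 1$ decouples the equation $v_k = (\w * z)_k$ into scalar affine equations, one per coordinate, of the shape $v_k = \bigl(1 - \delta - \tfrac{\delta}{q-1}\bigr) z_k + \tfrac{\delta}{q-1}$. Each can be inverted pointwise to yield an explicit affine formula for $z_k$ in terms of $v_k$ alone, giving a unique candidate $z$.

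It remains to check $z \in \P_q$. Every entry of $V = \textsf{\small circ}_\X(v)$ equals some coordinate $v_k$, so $\tau = \min_k v_k$, and the hypothesis $\delta \le (q-1)\tau$ is exactly the condition $v_k \ge \delta/(q-1)$ for every $k$, which in turn is exactly the condition needed for nonnegativity of $z_k$ in the explicit formula. Normalization $\sum_k z_k = 1$ falls out directly from $\sum_k v_k = 1$ and a one-line computation. The only subtle point, and the main (mild) obstacle, is the boundary case $\delta = (q-1)/q$, where the denominator in the inversion vanishes; but there the hypothesis forces $\tau = 1/q$, hence $v = \unif$, hence $V = \frac{1}{q}\1\1^T = W_\delta$, so $V = W_\delta I_q$ is trivially a degraded version. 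I would therefore handle $\delta \in [0,(q-1)/q)$ uniformly through the explicit formula and dispatch the uniform edge case by inspection.
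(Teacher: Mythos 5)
Your proof is correct, and it takes a genuinely different route from the paper's. The paper reduces to a majorization statement: by Corollary~\ref{Cor: Degradation Region of Symmetric Channel}, $W_{\delta} \succeq_{\textsf{\tiny deg}} V$ for an additive $V = \textsf{\small circ}_{\X}(v)$ if and only if $\w \succeq_{\textsf{\tiny maj}} v$, and the hypothesis $\delta \leq (q-1)\tau$ makes the partial-sum criterion of Proposition~\ref{Prop: Majorization} immediate (all nonidentity entries of $\w$ are at most $\tau$, while $v \geq \tau$ entry-wise). You instead bypass the majorization layer and go straight to Lemma~\ref{Lemma: Additive Noise Channel Degradation}, explicitly deconvolving $v = \w * z$ to obtain the unique candidate $z_k = \bigl(v_k - \tfrac{\delta}{q-1}\bigr)\big/\bigl(1-\delta-\tfrac{\delta}{q-1}\bigr)$, and then verify nonnegativity and normalization; your reduction of the convolution to decoupled scalar equations is correct because $\w$ is constant off the identity and $d \mapsto m\ominus d$ is a bijection of $\X$. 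What the paper's route buys is brevity given the majorization apparatus already in place (Proposition~\ref{Prop: Group Majorization}, Corollary~\ref{Cor: Degradation Region of Symmetric Channel}); what yours buys is a self-contained argument with a closed-form expression for the degrading channel $\textsf{\small circ}_{\X}(z)$, which also makes the tightness of the bound $\delta \leq (q-1)\tau$ transparent (when $v$ actually attains its minimum $\tau$, any $\delta > (q-1)\tau$ forces a negative entry in $z$). Your handling of the boundary $\delta = \tfrac{q-1}{q}$ (where $1-\delta-\tfrac{\delta}{q-1}=0$) is also correct: the hypothesis then forces $\tau = 1/q$ and $v = \unif$, so $V = W_{(q-1)/q}$ and degradation is trivial.
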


\begin{proof}
Using Corollary \ref{Cor: Degradation Region of Symmetric Channel}, it suffices to prove that the noise pmf $w_{(q-1)\tau} \succeq_{\textsf{\tiny maj}} v$. Since $0 \leq \tau \leq \frac{1}{q}$, we must have $0 \leq (q-1)\tau \leq \frac{q-1}{q}$. So, all entries of $w_{(q-1)\tau}$, except (possibly) the first, are equal to its minimum entry of $\tau$. As $v \geq \tau$ (entry-wise), $w_{(q-1)\tau} \succeq_{\textsf{\tiny maj}} v$ because the conditions of part 3 in Proposition \ref{Prop: Majorization} in Appendix \ref{App: Basics of Majorization Theory} are satisfied.
\end{proof}

It is compelling to find a sufficient condition for $W_{\delta} \succeq_{\textsf{\tiny ln}} V$ that does not simply ensure $W_{\delta} \succeq_{\textsf{\tiny deg}} V$ (such as Proposition \ref{Prop: Degradation by Symmetric Channels} and Theorem \ref{Thm: Sufficient Condition for Degradation by Symmetric Channels}). The ensuing proposition elucidates such a sufficient condition for additive noise channels. The general strategy for finding such a condition for additive noise channels is to identify a noise pmf that belongs to $\L_{W_{\delta}}^{\textsf{add}} \backslash \D_{W_{\delta}}^{\textsf{add}}$. One can then use Proposition \ref{Prop: Additive Less Noisy Domination and Degradation Regions} to explicitly construct a set of noise pmfs that is a subset of $\L_{W_{\delta}}^{\textsf{add}}$ but strictly includes $\D_{W_{\delta}}^{\textsf{add}}$. The proof of Proposition \ref{Prop: Less Noisy Domination by Symmetric Channels} finds such a noise pmf (that corresponds to a $q$-ary symmetric channel).

\begin{proposition}[Less Noisy Domination by Symmetric Channels]
\label{Prop: Less Noisy Domination by Symmetric Channels}
Given an additive noise channel $V = \textsf{\small circ}_{\X}(v)$ with noise pmf $v \in \P_q$ and $q \geq 2$, if for $\delta \in \big[0,\frac{q-1}{q}\big]$ we have:
$$ v \in \textsf{\small conv}\left(\left\{\w P_q^k : k \in [q]\right\} \cup \left\{w_{\gamma} P_q^k : k \in [q]\right\}\right) $$ 
then $W_{\delta} \succeq_{\textsf{\tiny ln}} V$, where $P_q \in \R^{q \times q}$ is defined in \eqref{Eq: Generator Cyclic Permutation Matrix}, and:
$$ \gamma = \frac{1-\delta}{1-\delta + \frac{\delta}{\left(q-1\right)^2}} \in \left[1-\frac{\delta}{q-1},1\right] . $$
\end{proposition}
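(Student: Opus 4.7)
The plan is to use convexity and permutation invariance of $\L_{W_\delta}^{\textsf{add}}$ to reduce the claim to the scalar statement $W_\delta \succeq_{\textsf{\tiny ln}} W_\gamma$, and then verify this reduced inequality via the $\chi^2$/L\"owner characterization of the less noisy preorder.

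By part~1 of Proposition~\ref{Prop: Additive Less Noisy Domination and Degradation Regions}, $\L_{W_\delta}^{\textsf{add}}$ is convex and invariant under $\{P_x : x \in \X\}$. Since $w_\delta$ and $w_\gamma$ each have one distinguished entry, the orbits $\{w_\delta P_x\}$ and $\{w_\gamma P_x\}$ coincide, as sets of pmfs, with the cyclic-shift sets $\{w_\delta P_q^k : k \in [q]\}$ and $\{w_\gamma P_q^k : k \in [q]\}$. Hence it suffices to place the $2q$ vertices of the stated convex hull into $\L_{W_\delta}^{\textsf{add}}$, and by invariance this reduces to the two claims $w_\delta \in \L_{W_\delta}^{\textsf{add}}$ (trivial, since $W_\delta \succeq_{\textsf{\tiny ln}} W_\delta$) and $w_\gamma \in \L_{W_\delta}^{\textsf{add}}$, i.e.\ $W_\delta \succeq_{\textsf{\tiny ln}} W_\gamma$.

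For the nontrivial claim, decompose $W_\delta = a_\delta I_q + \tfrac{1-a_\delta}{q}\1\1^T$ with $a_\delta = 1-q\delta/(q-1) \geq 0$, and similarly $W_\gamma$ with $a_\gamma = 1-q\gamma/(q-1)$. A short computation shows the stated $\gamma$ satisfies $\gamma \geq (q-1)/q$, so $a_\gamma \leq 0$ (and $W_\gamma$ is generically \emph{not} a degraded version of $W_\delta$). For any $R \in \1^{\perp}$ one has $RW_\delta = a_\delta R$ and $RW_\gamma = a_\gamma R$. Applying Theorem~\ref{Thm: Chi Squared Divergence Characterization of Less Noisy} (or equivalently part~2 of Proposition~\ref{Prop: Loewner and Spectral Characterizations of Less Noisy}) then equates $W_\delta \succeq_{\textsf{\tiny ln}} W_\gamma$ with the diagonal quadratic-form inequality $\sum_{i=0}^{q-1} R_i^2 \, c_i(P_X) \geq 0$ on $\1^{\perp}$ for every $P_X \in \P_q^{\circ}$, where $c_i(P_X) := a_\delta^2/(P_X W_\delta)_i - a_\gamma^2/(P_X W_\gamma)_i$ depends only on $p_i := P_X(i)$. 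A direct computation, using the algebraic identity $(q-1)(1-\delta)+\delta/(q-1) - 1 = (q-2)a_\delta$ implicit in the definition of $\gamma$, shows that $c_i(p_i) > 0$ for $p_i < (q-1)/q$, $c_i = 0$ at $p_i = (q-1)/q$, and $c_i < 0$ for $p_i > (q-1)/q$. Since any pmf has at most one coordinate exceeding $(q-1)/q$, the diagonal matrix $\textsf{\small diag}(c(P_X))$ carries at most one negative entry, so its positive semidefiniteness on $\1^{\perp}$ is classically equivalent to $\sum_i 1/c_i(P_X) \le 0$ (the smallest root of the secular equation $\sum_i 1/(c_i - \mu)=0$ is nonnegative iff $\sum_i 1/c_i \le 0$). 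Evaluating at the vertex $P_X = e_j$ collapses this to the scalar identity
\[ \frac{a_\delta^2}{\delta(1-\delta)} \;=\; \frac{a_\gamma^2}{\gamma(1-\gamma)}, \]
and an elementary substitution verifies that this is saturated exactly by $\gamma = (1-\delta)/(1-\delta+\delta/(q-1)^2)$ --- which is the design principle behind the stated value of $\gamma$.

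The chief technical obstacle is then certifying that the simplex vertex $P_X = e_j$ is really the worst case for $\sum_i 1/c_i(P_X) \leq 0$ throughout $\P_q^{\circ}$. Because $c_i(p)$ is neither convex nor concave in $p$ globally, a Jensen-type argument does not apply directly. The cleanest route I see is to write $1/c_i(p)$ as a rational function whose denominator factors as a positive constant times $((q-1)/q - p)$, and then to exploit the constraint $\sum_i p_i = 1$ together with the boundary singularity $1/c_i \to -\infty$ as $p_i \to ((q-1)/q)^+$ to reduce the extremization to the simplex vertices. Once this one-parameter estimate is verified, the reduction step closes the proof of the proposition.
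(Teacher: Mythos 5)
Your reduction to the scalar claim $W_\delta \succeq_{\textsf{\tiny ln}} W_\gamma$ via Proposition~\ref{Prop: Additive Less Noisy Domination and Degradation Regions} matches the paper, and the subsequent restriction to $\1^\perp$ is a clean and correct observation: since $P_X^T$ lies in the kernel of the difference matrix $W_\delta \textsf{\small diag}(P_X W_\delta)^{-1} W_\delta - W_\gamma \textsf{\small diag}(P_X W_\gamma)^{-1} W_\gamma$ and $P_X^T \notin \1^\perp$, PSD on $\1^\perp$ is indeed equivalent to the full L\"owner condition, and on $\1^\perp$ the quadratic form becomes the diagonal $\textsf{\small diag}(c(P_X))$ you describe. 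The secular-equation criterion ``$\sum_i 1/c_i \leq 0$'' for PSD-on-$\1^\perp$ when exactly one $c_i < 0$ is also correct, and the identity $a_\delta^2/(\delta(1-\delta)) = a_\gamma^2/(\gamma(1-\gamma))$ does hold for the stated $\gamma$ and explains why the vertex case is tight.

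However, as you flag yourself, the proof has a genuine gap: you never establish that the vertices $P_X = e_j$ dominate the interior for the criterion $\sum_i 1/c_i(P_X) \leq 0$. Since each $c_i(P_X)$ depends on $P_X$ through the reciprocals $1/(P_X W_\delta)_i$ and $1/(P_X W_\gamma)_i$, the map $P_X \mapsto \textsf{\small diag}(c(P_X))$ is \emph{not} affine in $P_X$, so the set of $P_X$ satisfying the criterion need not be convex and a vertex check is not in itself a proof. The sketch about rational functions and boundary singularities does not close this; moreover equality at the vertex means the inequality is saturated there, so there is no slack to absorb a perturbation argument.

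The paper's proof works precisely because it first rewrites the L\"owner condition, via inversion and a congruence by $W_\gamma$, into the form $\textsf{\small diag}(P_X W_\gamma) \succeq_{\textsf{\tiny PSD}} W_\gamma W_\delta^{-1}\textsf{\small diag}(P_X W_\delta)W_\delta^{-1}W_\gamma$, which \emph{is} affine in $P_X$; then vertex checking is valid because convex combinations preserve the L\"owner order. At the vertices, the paper uses Perron--Frobenius (the transformed matrix has a strictly positive eigenvector $\sqrt{w_\gamma P_q^x}$ with eigenvalue $1$, so entrywise nonnegativity of $W_\tau\textsf{\small diag}(\w P_q^x)W_\tau$ certifies spectral radius $\leq 1$). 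Your $\1^\perp$-reduction of that congruence-transformed inequality would give diagonal entries $d_i = (P_X W_\gamma)_i - (a_\gamma/a_\delta)^2 (P_X W_\delta)_i$ that are affine in $P_X$, and a short computation shows the vertex case $\sum_i 1/d_i \leq 0$ reduces exactly to $\gamma\delta \geq (q-1)^2(1-\gamma)(1-\delta)$, which is saturated by the stated $\gamma$. So your framework can be made to work --- but the linearization by congruence, which is the crux of the paper's argument, is missing from your writeup, and without it the step from ``holds at vertices'' to ``holds on $\P_q^\circ$'' is unjustified.
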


\begin{proof}
Due to Proposition \ref{Prop: Additive Less Noisy Domination and Degradation Regions} and $\left\{w_{\gamma} P_x : x \in \X\right\} = \{w_{\gamma} P_q^k : k \in [q]\}$, it suffices to prove that $W_{\delta} \succeq_{\textsf{\tiny ln}} W_{\gamma}$. Since $\delta = 0 \Rightarrow \gamma = 1$ and $\delta = \frac{q-1}{q} \Rightarrow \gamma = \frac{q-1}{q}$, $W_{\delta} \succeq_{\textsf{\tiny ln}} W_{\gamma}$ is certainly true for $\delta \in \big\{0,\frac{q-1}{q}\big\}$. So, we assume that $\delta \in \big(0,\frac{q-1}{q}\big)$, which implies that:
$$ \gamma = \frac{1-\delta}{1-\delta + \frac{\delta}{\left(q-1\right)^2}} \in \left(\frac{q-1}{q},1\right) . $$ 
Since our goal is to show $W_{\delta} \succeq_{\textsf{\tiny ln}} W_{\gamma}$, we prove the equivalent condition in part 2 of Proposition \ref{Prop: Loewner and Spectral Characterizations of Less Noisy} that for every $P_X \in \P_q^{\circ}$:
\begin{align*}
& \, W_{\delta} \, \textsf{\small diag}\!\left(P_X W_{\delta}\right)^{-1} W_{\delta}^T \succeq_{\textsf{\tiny PSD}} W_{\gamma} \, \textsf{\small diag}\!\left(P_X W_{\gamma}\right)^{-1} W_{\gamma}^T \\
\Leftrightarrow & \, W_{\gamma}^{-1} \textsf{\small diag}\!\left(P_X W_{\gamma}\right) W_{\gamma}^{-1} \succeq_{\textsf{\tiny PSD}} W_{\delta}^{-1} \textsf{\small diag}\!\left(P_X W_{\delta}\right) W_{\delta}^{-1} \\
\Leftrightarrow & \, \textsf{\small diag}\!\left(P_X W_{\gamma}\right) \succeq_{\textsf{\tiny PSD}} W_{\gamma} W_{\delta}^{-1} \textsf{\small diag}\!\left(P_X W_{\delta}\right) W_{\delta}^{-1} W_{\gamma} \\
\Leftrightarrow & \, I_q \! \succeq_{\textsf{\tiny PSD}} \! \textsf{\small diag}\!\left(P_X W_{\gamma}\right)^{\!-\!\frac{1}{2}} \! W_{\tau} \textsf{\small diag}\!\left(P_X W_{\delta}\right) \! W_{\tau} \textsf{\small diag}\!\left(P_X W_{\gamma}\right)^{\!-\!\frac{1}{2}} \\
\Leftrightarrow & \, 1 \! \geq \! \left\| \textsf{\small diag}\!\left(P_X W_{\gamma}\right)^{\!-\!\frac{1}{2}} \! W_{\tau} \textsf{\small diag}\!\left(P_X W_{\delta}\right) \! W_{\tau} \textsf{\small diag}\!\left(P_X W_{\gamma}\right)^{\!-\!\frac{1}{2}} \! \right\|_{\textsf{op}} \\
\Leftrightarrow & \, 1 \geq \left\|\textsf{\small diag}\!\left(P_X W_{\gamma}\right)^{-\frac{1}{2}} W_{\tau} \textsf{\small diag}\!\left(P_X W_{\delta}\right)^{\frac{1}{2}} \right\|_{\textsf{op}}
\end{align*}
where the second equivalence holds because $W_{\delta}$ and $W_{\gamma}$ are symmetric and invertible (see part 4 of Proposition \ref{Prop: Properties of Symmetric Channel Matrices} and \cite[Corollary 7.7.4]{BasicMatrixAnalysis}), the third and fourth equivalences are non-singular $*$-congruences with $W_{\tau} = W_{\delta}^{-1} W_{\gamma} = W_{\gamma} W_{\delta}^{-1}$ and:
$$ \tau = \frac{\gamma-\delta}{1-\delta-\frac{\delta}{q-1}} > 0 $$
which can be computed as in the proof of Proposition \ref{Prop: Properties of Domination Factor Function} in Appendix \ref{App: Auxiliary Results}, and $\left\|\cdot\right\|_{\textsf{op}}$ denotes the spectral or operator norm.\footnote{Note that we cannot use the strict L\"{o}wner partial order $\succ_{\textsf{\tiny PSD}}$ (for $A,B \in \R^{q \times q}_{\textsf{sym}}$, $A \succ_{\textsf{\tiny PSD}} B$ if and only if $A - B$ is positive definite) for these equivalences as $\1^T W_{\gamma}^{-1} \textsf{\scriptsize diag}\!\left(P_X W_{\gamma}\right) W_{\gamma}^{-1} \1 = \1^T W_{\delta}^{-1} \textsf{\scriptsize diag}\!\left(P_X W_{\delta}\right) W_{\delta}^{-1} \1$.}

It is instructive to note that if $W_{\tau} \in \R^{q \times q}_{\textsf{sto}}$, then the \textit{divergence transition matrix} $\textsf{\small diag}\!\left(P_X W_{\gamma}\right)^{-\frac{1}{2}} W_{\tau} \textsf{\small diag}\!\left(P_X W_{\delta}\right)^{\frac{1}{2}}$ has right singular vector $\sqrt{P_X W_{\delta}}^T$ and left singular vector $\sqrt{P_X W_{\gamma}}^T$ corresponding to its maximum singular value of unity (where the square roots are applied entry-wise)\textemdash{}see \cite{BoundsbetweenContractionCoefficients} and the references therein. So, $W_{\tau} \in \R^{q \times q}_{\textsf{sto}}$ is a sufficient condition for $W_{\delta} \succeq_{\textsf{\tiny ln}} W_{\gamma}$. Since $W_{\tau} \in \R^{q \times q}_{\textsf{sto}}$ if and only if $0 \leq \tau \leq 1$ if and only if $\delta \leq \gamma \leq 1-\frac{\delta}{q-1}$, the latter condition also implies that $W_{\delta} \succeq_{\textsf{\tiny ln}} W_{\gamma}$. However, we recall from \eqref{Eq: Extremal Degraded Channel} in subsection \ref{Less Noisy Domination and Degradation Regions for Symmetric Channels} that $W_{\delta} \succeq_{\textsf{\tiny deg}} W_{\gamma}$ for $\delta \leq \gamma \leq 1-\frac{\delta}{q-1}$, while we seek some $1-\frac{\delta}{q-1} < \gamma \leq 1$ for which $W_{\delta} \succeq_{\textsf{\tiny ln}} W_{\gamma}$. When $q = 2$, we only have:
$$ \gamma = \frac{1-\delta}{1-\delta + \frac{\delta}{\left(q-1\right)^2}} = 1 - \frac{\delta}{q-1} = 1 - \delta $$
which implies that $W_{\delta} \succeq_{\textsf{\tiny deg}} W_{\gamma}$ is true for $q = 2$. On the other hand, when $q \geq 3$, it is straightforward to verify that:
$$ \gamma = \frac{1-\delta}{1-\delta + \frac{\delta}{\left(q-1\right)^2}} \in \left(1 - \frac{\delta}{q-1} , 1\right) $$ 
since $\delta \in \big(0,\frac{q-1}{q}\big)$.

From the preceding discussion, it suffices to prove for $q \geq 3$ that for every $P_X \in \P_q^{\circ}$:
$$ \left\| \textsf{\small diag}\!\left(P_X W_{\gamma}\right)^{\!-\!\frac{1}{2}} \! W_{\tau} \textsf{\small diag}\!\left(P_X W_{\delta}\right) \! W_{\tau} \textsf{\small diag}\!\left(P_X W_{\gamma}\right)^{\!-\!\frac{1}{2}} \right\|_{\textsf{op}} \leq 1 . $$
Since $\tau > 0$, and $0 \leq \tau \leq 1$ does not produce $\gamma > 1 - \frac{\delta}{q-1}$, we require that $\tau > 1$ ($\Leftrightarrow \gamma > 1 - \frac{\delta}{q-1}$) so that $W_{\tau}$ has strictly negative entries along the diagonal. Notice that:
$$ \forall x \in [q], \enspace \textsf{\small diag}\!\left(\Delta_x W_{\gamma}\right) \succeq_{\textsf{\tiny PSD}} W_{\gamma} W_{\delta}^{-1} \textsf{\small diag}\!\left(\Delta_x W_{\delta}\right) W_{\delta}^{-1} W_{\gamma} $$
where $\Delta_x = \left(0,\dots,1,\dots,0\right) \in \P_q$ denotes the Kronecker delta pmf with unity at the $(x+1)$th position, implies that:  
$$ \textsf{\small diag}\!\left(P_X W_{\gamma}\right) \succeq_{\textsf{\tiny PSD}} W_{\gamma} W_{\delta}^{-1} \textsf{\small diag}\!\left(P_X W_{\delta}\right) W_{\delta}^{-1} W_{\gamma} $$
for every $P_X \in \P_q^{\circ}$, because convex combinations preserve the L\"{o}wner relation. So, it suffices to prove that for every $x \in [q]$:
$$ \left\| \textsf{\small diag}\!\left(w_{\gamma} P_q^{x}\right)^{-\frac{1}{2}} \! W_{\tau} \textsf{\small diag}\!\left(\w P_q^{x}\right) \! W_{\tau} \textsf{\small diag}\!\left(w_{\gamma}P_q^{x}\right)^{-\frac{1}{2}} \right\|_{\textsf{op}} \leq 1 $$
where $P_q \in \R^{q \times q}$ is defined in \eqref{Eq: Generator Cyclic Permutation Matrix}, because $\Delta_x M$ extracts the $(x+1)$th row of a matrix $M \in \R^{q \times q}$. Let us define $A_x \triangleq \textsf{\small diag}\!\left(w_{\gamma} P_q^{x}\right)^{-\frac{1}{2}} \! W_{\tau} \textsf{\small diag}\!\left(\w P_q^{x}\right) \! W_{\tau} \textsf{\small diag}\!\left(w_{\gamma}P_q^{x}\right)^{-\frac{1}{2}}$ for each $x \in [q]$. Observe that for every $x \in [q]$, $A_x \in \R^{q \times q}_{\succeq 0}$ is orthogonally diagonalizable by the real spectral theorem \cite[Theorem 7.13]{LinearAlgebraAxler}, and has a strictly positive eigenvector $\sqrt{w_{\gamma}P_q^{x}}$ corresponding to the eigenvalue of unity:
$$ \forall x \in [q], \enspace \sqrt{w_{\gamma}P_q^{x}} A_x = \sqrt{w_{\gamma}P_q^{x}} $$
so that all other eigenvectors of $A_x$ have some strictly negative entries since they are orthogonal to $\sqrt{w_{\gamma}P_q^{x}}$. Suppose $A_x$ is entry-wise non-negative for every $x \in [q]$. Then, the largest eigenvalue (known as the Perron-Frobenius eigenvalue) and the spectral radius of each $A_x$ is unity by the Perron-Frobenius theorem \cite[Theorem 8.3.4]{BasicMatrixAnalysis}, which proves that $\left\| A_x \right\|_{\textsf{op}} \leq 1$ for every $x \in [q]$. Therefore, it is sufficient to prove that $A_x$ is entry-wise non-negative for every $x \in [q]$. Equivalently, we can prove that $W_{\tau} \textsf{\small diag}\!\left(\w P_q^{x}\right) W_{\tau}$ is entry-wise non-negative for every $x \in [q]$, since $\textsf{\small diag}\!\left(w_{\gamma}P_q^{x}\right)^{-\frac{1}{2}}$ scales the rows or columns of the matrix it is pre- or post-multiplied with using strictly positive scalars. 

We now show the equivalent condition below that the minimum possible entry of $W_{\tau} \textsf{\small diag}\!\left(\w P_q^{x}\right) W_{\tau}$ is non-negative:
\begin{align}
0 & \leq \min_{\substack{x \in [q]\\1 \leq i,j \leq q}}{\underbrace{\sum_{r = 1}^{q}{\left[W_{\tau}\right]_{i,r} \left[W_{\delta}\right]_{x+1,r} \left[W_{\tau}\right]_{r,j}}}_{= \enspace \left[W_{\tau} \textsf{\scriptsize diag}\left(\w P_q^{x}\right) W_{\tau}\right]_{i,j}}} \nonumber \\
& = \frac{\tau (1 - \delta)(1 - \tau)}{q-1} + \frac{\delta \tau (1 - \tau)}{(q-1)^2} + (q-2)\frac{\delta \tau^2}{(q-1)^3} .
\label{Eq: Minimum Possible Entry}
\end{align}
The above equality holds because for $i \neq j$:
$$ \frac{\delta}{q-1}\sum_{r = 1}^{q}{\underbrace{\left[W_{\tau}\right]_{i,r} \left[W_{\tau}\right]_{r,i}}_{= \, \left[W_{\tau}\right]_{i,r}^2 \, \geq \, 0}} \geq \frac{\delta}{q-1}\sum_{r = 1}^{q}{\left[W_{\tau}\right]_{i,r} \left[W_{\tau}\right]_{r,j}} $$
is clearly true (using, for example, the rearrangement inequality in \cite[Section 10.2]{HardyLittlewoodPolya}), and adding $\big(1 - \delta - \frac{\delta}{q-1}\big)\left[W_{\tau}\right]_{i,k}^2 \geq 0$ (regardless of the value of $1 \leq k \leq q$) to the left summation increases its value, while adding $\big(1 - \delta - \frac{\delta}{q-1}\big)\left[W_{\tau}\right]_{i,p} \cdot$ $\left[W_{\tau}\right]_{p,j} < 0$ (which exists for an appropriate value $1 \leq p \leq q$ as $\tau > 1$) to the right summation decreases its value. As a result, the minimum possible entry of $W_{\tau} \textsf{\small diag}\!\left(\w P_q^{x}\right) W_{\tau}$ can be achieved with $x+1 = i \neq j$ or $i \neq j = x+1$. We next substitute $\tau = (\gamma-\delta)/\big(1-\delta-\frac{\delta}{q-1}\big)$ into \eqref{Eq: Minimum Possible Entry} and simplify the resulting expression to get:
\begin{align*}
0 & \leq \left(\gamma-\delta\right) \times \\
& \quad \left(\!\!\left(\!1-\frac{\delta}{q-1}- \gamma\!\right)\!\!\left(\!1-\delta+\frac{\delta}{q-1}\!\right)\! + \!\frac{\left(q-2\right) \!\delta\! \left(\gamma-\delta\right)}{\left(q-1\right)^2}\!\right) .
\end{align*} 
The right hand side of this inequality is quadratic in $\gamma$ with roots $\gamma = \delta$ and $\gamma = \frac{1-\delta}{1-\delta + (\delta/(q-1)^2)}$. Since the coefficient of $\gamma^2$ in this quadratic is strictly negative:
$$ \underbrace{\frac{\left(q-2\right) \delta}{\left(q-1\right)^2} - \left(1-\delta+\frac{\delta}{q-1}\right)}_{\text{coefficient of} \, \gamma^2} < 0 \Leftrightarrow 1 - \delta  + \frac{\delta}{\left(q-1\right)^2} > 0 $$
the minimum possible entry of $W_{\tau} \textsf{\small diag}\!\left(\w P_q^{x}\right) W_{\tau}$ is non-negative if and only if:
$$ \delta \leq \gamma \leq \frac{1-\delta}{1-\delta + \frac{\delta}{\left(q-1\right)^2}} $$
where we use the fact that $\frac{1-\delta}{1-\delta + (\delta/(q-1)^2)} \geq 1 - \frac{\delta}{q-1} \geq \delta$. Therefore, $\gamma = \frac{1-\delta}{1-\delta + (\delta/(q-1)^2)}$ produces $W_{\delta} \succeq_{\textsf{\tiny ln}} W_{\gamma}$, which completes the proof.
\end{proof}

Heretofore we have derived results concerning less noisy domination and degradation regions in section \ref{Less Noisy Domination and Degradation Regions}, and proven several necessary and sufficient conditions for less noisy domination of additive noise channels by symmetric channels in this section. We finally have all the pieces in place to establish Theorem \ref{Thm: Additive Less Noisy Domination and Degradation Regions for Symmetric Channels} from section \ref{Main Results}. In closing this section, we indicate the pertinent results that coalesce to justify it.

\renewcommand{\proofname}{Proof of Theorem \ref{Thm: Additive Less Noisy Domination and Degradation Regions for Symmetric Channels}}

\begin{proof}
The first equality follows from Corollary \ref{Cor: Degradation Region of Symmetric Channel}. The first set inclusion is obvious, and its strictness follows from the proof of Proposition \ref{Prop: Less Noisy Domination by Symmetric Channels}. The second set inclusion follows from Proposition \ref{Prop: Less Noisy Domination by Symmetric Channels}. The third set inclusion follows from the circle condition (part 1) in Proposition \ref{Prop: Necessary Conditions for Less Noisy Domination over Additive Noise Channels}. Lastly, the properties of $\L_{W_{\delta}}^{\textsf{add}}$ are derived in Proposition \ref{Prop: Additive Less Noisy Domination and Degradation Regions}.
\end{proof}

\renewcommand{\proofname}{Proof}

\section{Sufficient conditions for degradation over general channels}
\label{Sufficient Conditions for Degradation over General Channels}

While Propositions \ref{Prop: Degradation by Symmetric Channels} and \ref{Prop: Less Noisy Domination by Symmetric Channels} present sufficient conditions for a symmetric channel $W_{\delta} \in \R^{q \times q}_{\textsf{sto}}$ to be less noisy than an additive noise channel, our more comprehensive objective is to find the maximum $\delta \in \big[0,\frac{q-1}{q}\big]$ such that $W_{\delta} \succeq_{\textsf{\tiny ln}} V$ for any given general channel $V \in \R^{q \times r}_{\textsf{sto}}$ on a common input alphabet. We may formally define this maximum $\delta$ (that characterizes the extremal symmetric channel that is less noisy than $V$) as:
\begin{equation}
\delta^{\star}\left(V\right) \triangleq \sup\left\{\delta \in \left[0,\frac{q-1}{q}\right] : W_{\delta} \succeq_{\textsf{\tiny ln}} V\right\}
\end{equation}
and for every $0 \leq \delta < \delta^{\star}\left(V\right)$, $W_{\delta} \succeq_{\textsf{\tiny ln}} V$. Alternatively, we can define a non-negative (less noisy) \textit{domination factor} function for any channel $V \in \R^{q \times r}_{\textsf{sto}}$:
\begin{equation}
\label{Eq: Domination Factor}
\mu_V\left(\delta\right) \triangleq \sup_{\substack{P_X, Q_X \in \P_q :\\0 < D\left(P_X W_{\delta}||Q_X W_{\delta}\right) < +\infty}}{\frac{D\left(P_X V||Q_X V\right)}{D\left(P_X W_{\delta}||Q_X W_{\delta}\right)}}
\end{equation}
with $\delta \in \big[0,\frac{q-1}{q}\big)$, which is analogous to the contraction coefficient for KL divergence since $\mu_V\left(0\right) \triangleq \eta_{\textsf{\tiny KL}}\!\left(V\right)$. Indeed, we may perceive $P_X W_{\delta}$ and $Q_X W_{\delta}$ in the denominator of \eqref{Eq: Domination Factor} as pmfs inside the ``shrunk'' simplex $\textsf{\small conv}(\{\w P_q^k :$ $k \in [q]\})$, and \eqref{Eq: Domination Factor} represents a contraction coefficient of $V$ where the supremum is taken over this ``shrunk'' simplex.\footnote{Pictorially, the ``shrunk'' simplex is the magenta triangle in Figure \ref{Figure: Additive Noise Channel Domination Regions} while the simplex itself is the larger gray triangle.} For simplicity, consider a channel $V \in \R^{q \times r}_{\textsf{sto}}$ that is strictly positive entry-wise, and has domination factor function $\mu_V: \big(0,\frac{q-1}{q}\big) \rightarrow \R^{+}$, where the domain excludes zero because $\mu_V$ is only interesting for $\delta \in \big(0,\frac{q-1}{q}\big)$, and this exclusion also affords us some analytical simplicity. It is shown in Proposition \ref{Prop: Properties of Domination Factor Function} of Appendix \ref{App: Auxiliary Results} that $\mu_V$ is always finite on $\big(0,\frac{q-1}{q}\big)$, continuous, convex, strictly increasing, and has a vertical asymptote at $\delta = \frac{q-1}{q}$. Since for every $P_X, Q_X \in \P_q$:
\begin{equation}
\mu_V\left(\delta\right) \, D\left(P_X W_{\delta}||Q_X W_{\delta}\right) \geq D\left(P_X V||Q_X V\right)
\end{equation} 
we have $\mu_V\left(\delta\right) \leq 1$ if and only if $W_{\delta} \succeq_{\textsf{\tiny ln}} V$. Hence, using the strictly increasing property of $\mu_V:\big(0,\frac{q-1}{q}\big) \rightarrow \R^{+}$, we can also characterize $\delta^{\star}\left(V\right)$ as:
\begin{equation}
\delta^{\star}\left(V\right) = \mu_V^{-1}\left(1\right)
\end{equation}
where $\mu_V^{-1}$ denotes the inverse function of $\mu_V$, and unity is in the range of $\mu_V$ by Theorem \ref{Thm: Sufficient Condition for Degradation by Symmetric Channels} since $V$ is strictly positive entry-wise.

We next briefly delineate how one might computationally approximate $\delta^{\star}\left(V\right)$ for a given general channel $V \in \R^{q \times r}_{\textsf{sto}}$. From part 2 of Proposition \ref{Prop: Loewner and Spectral Characterizations of Less Noisy}, it is straightforward to obtain the following minimax characterization of $\delta^{\star}\left(V\right)$:  
\begin{equation}
\label{Eq:Minimax Formulation}
\delta^{\star}\left(V\right) = \inf_{P_X \in \P_q^{\circ}}\,{\sup_{\delta \in \mathcal{S}(P_X)}{\delta}}
\end{equation}
where $\mathcal{S}(P_X) = \big\{\delta \! \in \! \big[0,\frac{q-1}{q}\big] \! : W_{\delta} \, \textsf{\small diag}\!\left(P_X W_{\delta}\right)^{\!-1} \! W_{\delta}^T \succeq_{\textsf{\tiny PSD}} V \textsf{\small diag}\!\left(P_X V\right)^{\!-1} \! V^T\big\}$. The infimum in \eqref{Eq:Minimax Formulation} can be na\"{i}vely approximated by sampling several $P_X \in \P_q^{\circ}$. The supremum in \eqref{Eq:Minimax Formulation} can be estimated by verifying collections of rational (ratio of polynomials) inequalities in $\delta$. This is because the positive semidefiniteness of a matrix is equivalent to the non-negativity of all its principal minors by \textit{Sylvester's criterion} \cite[Theorem 7.2.5]{BasicMatrixAnalysis}. Unfortunately, this procedure appears to be rather cumbersome.

Since analytically computing $\delta^{\star}\left(V\right)$ also seems intractable, we now prove Theorem \ref{Thm: Sufficient Condition for Degradation by Symmetric Channels} from section \ref{Main Results}. Theorem \ref{Thm: Sufficient Condition for Degradation by Symmetric Channels} provides a sufficient condition for $W_{\delta} \succeq_{\textsf{\tiny deg}} V$ (which implies $W_{\delta} \succeq_{\textsf{\tiny ln}} V$ using Proposition \ref{Prop: Relations between Channel Preorders}) by restricting its attention to the case where $V \in \R^{q \times q}_{\textsf{sto}}$ with $q \geq 2$. Moreover, it can be construed as a lower bound on $\delta^{\star}\left(V\right)$:
\begin{equation}
\delta^{\star}\left(V\right) \geq \frac{\nu}{1-(q-1)\nu + \frac{\nu}{q-1}} 
\end{equation}
where $\nu = \min\left\{[V]_{i,j}:1 \leq i,j \leq q\right\}$ is the minimum conditional probability in $V$.

\renewcommand{\proofname}{Proof of Theorem \ref{Thm: Sufficient Condition for Degradation by Symmetric Channels}}

\begin{proof}
Let the channel $V \in \R^{q \times q}_{\textsf{sto}}$ have the conditional pmfs $v_1,\dots,v_q \in \P_q$ as its rows:  
$$ V = \left[v_1^T \enspace v_2^T \, \cdots \enspace v_q^T \right]^T . $$
From the proof of Proposition \ref{Prop: Degradation by Symmetric Channels}, we know that $w_{(q-1)\nu} \succeq_{\textsf{\tiny maj}} v_i$ for every $i \in \left\{1,\dots,q\right\}$. Using part 1 of Proposition \ref{Prop: Majorization} in Appendix \ref{App: Basics of Majorization Theory} (and the fact that the set of all permutations of $w_{(q-1)\nu}$ is exactly the set of all cyclic permutations of $w_{(q-1)\nu}$), we can write this as:
$$ \forall i \in \left\{1,\dots,q\right\}, \enspace v_i = \sum_{j = 1}^{q}{p_{i,j} \, w_{(q-1)\nu} P_q^{j-1}} $$
where the matrix $P_q \in \R^{q \times q}$ is given in \eqref{Eq: Generator Cyclic Permutation Matrix}, and $\left\{p_{i,j} \geq 0 : 1 \leq i,j \leq q\right\}$ are the convex weights such that $\sum_{j = 1}^{q}{p_{i,j}} = 1$ for every $i \in \left\{1,\dots,q\right\}$. Defining $P \in \R^{q \times q}_{\textsf{sto}}$ entry-wise as $\left[P\right]_{i,j} = p_{i,j}$ for every $1 \leq i,j \leq q$, we can also write this equation as $V = P W_{(q-1)\nu}$.\footnote{Matrices of the form $V = P W_{(q-1)\nu}$ with $P \in \R^{q \times q}_{\textsf{sto}}$ are \textit{not} necessarily degraded versions of $W_{(q-1)\nu}$: $W_{(q-1)\nu} \not\succeq_{\textsf{\tiny deg}} V$ (although we certainly have input-output degradation: $W_{(q-1)\nu} \succeq_{\textsf{\tiny iod}} V$). As a counterexample, consider $W_{1/2}$ for $q = 3$, and $P = \left[1 \, 0 \, 0; \, 1 \, 0 \, 0; \, 0 \, 1 \, 0\right]$, where the semicolons separate the rows of the matrix. If $W_{1/2} \succeq_{\textsf{\tiny deg}} P W_{1/2}$, then there exists $A \in \R^{3 \times 3}_{\textsf{sto}}$ such that $P W_{1/2} = W_{1/2} A$. However, $A = W_{1/2}^{-1} P W_{1/2} = (1/4)\left[3 \, 0 \, 1; \, 3 \, 0 \, 1; \, -1 \, 4 \, 1 \right]$ has a strictly negative entry, which leads to a contradiction.} Observe that:
$$ P = \sum_{1 \leq j_1,\dots,j_q \leq q}{\left(\prod_{i = 1}^{q}{p_{i,j_i}}\right) E_{j_1,\dots,j_q}} $$
where $\left\{\prod_{i = 1}^{q}{p_{i,j_i}} : 1 \leq j_1,\dots,j_q \leq q \right\}$ form a product pmf of convex weights, and for every $1 \leq j_1,\dots,j_q \leq q$:
$$ E_{j_1,\dots,j_q} \triangleq \left[e_{j_1} \enspace e_{j_2} \, \cdots \enspace e_{j_q} \right]^T $$
where $e_i \in \R^q$ is the $i$th standard basis (column) vector that has unity at the $i$th entry and zero elsewhere. Hence, we get:
$$ V = \sum_{1 \leq j_1,\dots,j_q \leq q}{\left(\prod_{i = 1}^{q}{p_{i,j_i}}\right) E_{j_1,\dots,j_q} W_{(q-1)\nu}} . $$
Suppose there exists $\delta \in \big[0,\frac{q-1}{q}\big]$ such that for all $j_1,\dots,j_q \in \left\{1,\dots,q\right\}$:
$$ \exists M_{j_1,\dots,j_q} \in \R^{q \times q}_{\textsf{sto}}, \enspace E_{j_1,\dots,j_q} W_{(q-1)\nu} = W_{\delta} M_{j_1,\dots,j_q} $$ 
i.e. $W_{\delta} \succeq_{\textsf{\tiny deg}} E_{j_1,\dots,j_q} W_{(q-1)\nu}$. Then, we would have:
$$ V = W_{\delta} \underbrace{\sum_{1 \leq j_1,\dots,j_q \leq q}{\left(\prod_{i = 1}^{q}{p_{i,j_i}}\right) M_{j_1,\dots,j_q}}}_{\text{stochastic matrix}} $$
which implies that $W_{\delta} \succeq_{\textsf{\tiny deg}} V$.

We will demonstrate that for every $j_1,\dots,j_q \in \left\{1,\dots,q\right\}$, there exists $M_{j_1,\dots,j_q} \in \R^{q \times q}_{\textsf{sto}}$ such that $E_{j_1,\dots,j_q} W_{(q-1)\nu}$ $= W_{\delta} M_{j_1,\dots,j_q}$ when $0 \leq \delta \leq \nu/\big(1-(q-1)\nu + \frac{\nu}{q-1}\big)$. Since $0 \leq \nu \leq \frac{1}{q}$, the preceding inequality implies that $0 \leq \delta \leq \frac{q-1}{q}$, where $\delta = \frac{q-1}{q}$ is possible if and only if $\nu = \frac{1}{q}$. When $\nu = \frac{1}{q}$, $V = W_{(q-1)/q}$ is the channel with all uniform conditional pmfs, and $W_{(q-1)/q} \succeq_{\textsf{\tiny deg}} V$ clearly holds. Hence, we assume that $0 \leq \nu < \frac{1}{q}$ so that $0 \leq \delta < \frac{q-1}{q}$, and establish the equivalent condition that for every $j_1,\dots,j_q \in \left\{1,\dots,q\right\}$:
$$ M_{j_1,\dots,j_q} = W_{\delta}^{-1} E_{j_1,\dots,j_q} W_{(q-1)\nu} $$
is a valid stochastic matrix. Recall that $W_{\delta}^{-1} = W_{\tau}$ with $\tau = \frac{-\delta}{1-\delta-(\delta/(q-1))}$ using part 4 of Proposition \ref{Prop: Properties of Symmetric Channel Matrices}. Clearly, all the rows of each $M_{j_1,\dots,j_q}$ sum to unity. So, it remains to verify that each $M_{j_1,\dots,j_q}$ has non-negative entries. For any $j_1,\dots,j_q \in \left\{1,\dots,q\right\}$ and any $i,j \in \left\{1,\dots,q\right\}$:
$$ \left[M_{j_1,\dots,j_q}\right]_{i,j} \geq \nu\left(1 - \tau\right) + \tau\left(1-\left(q-1\right)\nu\right) $$
where the right hand side is the minimum possible entry of any $M_{j_1,\dots,j_q}$ (with equality when $j_1 > 1$ and $j_2 = j_3 = \cdots = j_q = 1$ for example) as $\tau < 0$ and $1-\left(q-1\right)\nu > \nu$. To ensure each $M_{j_1,\dots,j_q}$ is entry-wise non-negative, the minimum possible entry must satisfy:
\begin{align*}
\nu\left(1 - \tau\right) + \tau\left(1-\left(q-1\right)\nu\right) & \geq 0 \\
\Leftrightarrow \enspace \nu + \frac{\delta \nu}{1-\delta-\frac{\delta}{q-1}} - \frac{\delta\left(1-\left(q-1\right)\nu\right)}{1-\delta-\frac{\delta}{q-1}} & \geq 0 
\end{align*}
and the latter inequality is equivalent to:
$$ \delta \leq \frac{\nu}{1 - \left(q-1\right)\nu + \frac{\nu}{q-1}} . $$
This completes the proof.
\end{proof}

\renewcommand{\proofname}{Proof}

We remark that if $V = E_{2,1,\dots,1} W_{(q-1)\nu} \in \R^{q \times q}_{\textsf{sto}}$, then this proof illustrates that $W_{\delta} \succeq_{\textsf{\tiny deg}} V$ if and only if $0 \leq \delta \leq \nu/\big(1-(q-1)\nu + \frac{\nu}{q-1}\big)$. Hence, the condition in Theorem \ref{Thm: Sufficient Condition for Degradation by Symmetric Channels} is tight when no further information about $V$ is known. It is worth juxtaposing Theorem \ref{Thm: Sufficient Condition for Degradation by Symmetric Channels} and Proposition \ref{Prop: Degradation by Symmetric Channels}. The upper bounds on $\delta$ from these results satisfy:
\begin{equation} 
\label{Eq: Compare Upper Bounds}
\underbrace{\frac{\nu}{1-(q-1)\nu + \frac{\nu}{q-1}}}_{\text{upper bound in Theorem \ref{Thm: Sufficient Condition for Degradation by Symmetric Channels}}} \leq \underbrace{\left(q-1\right)\nu}_{\substack{\text{upper bound in}\\ \text{Proposition \ref{Prop: Degradation by Symmetric Channels}}}} 
\end{equation}
where we have equality if and only if $\nu = \frac{1}{q}$, and it is straightforward to verify that \eqref{Eq: Compare Upper Bounds} is equivalent to $\nu \leq \frac{1}{q}$. Moreover, assuming that $q$ is large and $\nu = o\left(1/q\right)$, the upper bound in Theorem \ref{Thm: Sufficient Condition for Degradation by Symmetric Channels} is $\nu/\!\left(1 + o\left(1\right) + o\left(1/q^2\right)\right) = \Theta\left(\nu\right)$, while the upper bound in Proposition \ref{Prop: Degradation by Symmetric Channels} is $\Theta\left(q\nu\right)$.\footnote{We use the \textit{Bachmann-Landau asymptotic notation} here. Consider the (strictly) positive functions $f:\N \rightarrow \R$ and $g:\N \rightarrow \R$. The little-$o$ notation is defined as: $f(q) = o\left(g(q)\right) \Leftrightarrow \lim_{q \rightarrow \infty}{f(q)/g(q)} = 0$. The big-$O$ notation is defined as: $f(q) = O\left(g(q)\right) \Leftrightarrow \limsup_{q \rightarrow \infty}{\left|f(q)/g(q)\right|} < +\infty$. Finally, the big-$\Theta$ notation is defined as: $f(q) = \Theta\left(g(q)\right) \Leftrightarrow 0 < \liminf_{q \rightarrow \infty}{\left|f(q)/g(q)\right|} \leq \limsup_{q \rightarrow \infty}{\left|f(q)/g(q)\right|} < +\infty$.} (Note that both bounds are $\Theta\left(1\right)$ if $\nu = \frac{1}{q}$.) Therefore, when $V \in \R^{q \times q}_{\textsf{sto}}$ is an additive noise channel, $\delta = O\left(q \nu\right)$ is enough for $W_{\delta} \succeq_{\textsf{\tiny deg}} V$, but a general channel $V \in \R^{q \times q}_{\textsf{sto}}$ requires $\delta = O\left(\nu\right)$ for such degradation. So, in order to account for $q$ different conditional pmfs in the general case (as opposed to a single conditional pmf which characterizes the channel in the additive noise case), we loose a factor of $q$ in the upper bound on $\delta$. Furthermore, we can check using simulations that $W_{\delta} \in \R^{q \times q}_{\textsf{sto}}$ is not in general less noisy than $V \in \R^{q \times q}_{\textsf{sto}}$ for $\delta = (q-1)\nu$. Indeed, counterexamples can be easily obtained by letting $V = E_{j_1,\dots,j_q} W_{\delta}$ for specific values of $1 \leq j_1,\dots,j_q \leq q$, and computationally verifying that $W_{\delta} \not\succeq_{\textsf{\tiny ln}} V + J \in \R^{q \times q}_{\textsf{sto}}$ for appropriate choices of perturbation matrices $J \in \R^{q \times q}$ with sufficiently small Frobenius norm.

We have now proved Theorems \ref{Thm: Chi Squared Divergence Characterization of Less Noisy}, \ref{Thm: Sufficient Condition for Degradation by Symmetric Channels}, and \ref{Thm: Additive Less Noisy Domination and Degradation Regions for Symmetric Channels} from section \ref{Main Results}. The next section relates our results regarding less noisy and degradation preorders to LSIs, and proves Theorem \ref{Thm: Domination of Dirichlet Forms}.

\section{Less noisy domination and logarithmic Sobolev inequalities}
\label{Less Noisy Domination and Logarithmic Sobolev Inequalities}

\textit{Logarithmic Sobolev inequalities} (LSIs) are a class of functional inequalities that shed light on several important phenomena such as concentration of measure, and ergodicity and hypercontractivity of Markov semigroups. We refer readers to \cite{Ledoux1999} and \cite{MarkovSemigroups} for a general treatment of such inequalities, and more pertinently to \cite{LogSobolevInequalitiesDiaconis} and \cite{AnalysisofMixingTimes}, which present LSIs in the context of finite state-space Markov chains. In this section, we illustrate that proving a channel $W \in \R^{q \times q}_{\textsf{sto}}$ is less noisy than a channel $V \in \R^{q \times q}_{\textsf{sto}}$ allows us to translate an LSI for $W$ to an LSI for $V$. Thus, important information about $V$ can be deduced (from its LSI) by proving $W \succeq_{\textsf{\tiny ln}} V$ for an appropriate channel $W$ (such as a $q$-ary symmetric channel) that has a known LSI.

We commence by introducing some appropriate notation and terminology associated with LSIs. For fixed input and output alphabet $\X = \Y = [q]$ with $q \in \N$, we think of a channel $W \in \R^{q \times q}_{\textsf{sto}}$ as a Markov kernel on $\X$. We assume that the ``time homogeneous'' discrete-time Markov chain defined by $W$ is \textit{irreducible}, and has unique \textit{stationary distribution} (or invariant measure) $\pi \in \P_q$ such that $\pi W = \pi$. Furthermore, we define the Hilbert space $\L^2\left(\X,\pi\right)$ of all real functions with domain $\X$ endowed with the inner product:
\begin{equation}
\forall f,g \in \L^2\left(\X,\pi\right), \enspace \left<f,g\right>_{\pi} \triangleq \sum_{x \in \X}{\pi(x)f(x)g(x)}
\end{equation}
and induced norm $\left\|\cdot\right\|_{\pi}$. We construe $W:\L^2\left(\X,\pi\right) \rightarrow \L^2\left(\X,\pi\right)$ as a conditional expectation operator that takes a function $f \in \L^2\left(\X,\pi\right)$, which we can write as a column vector $f = \left[f(0) \cdots f(q-1)\right]^T \in \R^q$, to another function $Wf \in \L^2\left(\X,\pi\right)$, which we can also write as a column vector $Wf \in \R^q$. Corresponding to the discrete-time Markov chain $W$, we may also define a continuous-time \textit{Markov semigroup}:
\begin{equation}
\forall t \geq 0, \enspace H_t \triangleq \exp\left(-t\left(I_q - W\right)\right) \in \R^{q \times q}_{\textsf{sto}}
\end{equation}
where the ``discrete-time derivative'' $W - I_q$ is the \textit{Laplacian operator} that forms the \textit{generator} of the Markov semigroup. The unique stationary distribution of this Markov semigroup is also $\pi$, and we may interpret $H_t : \L^2\left(\X,\pi\right) \rightarrow \L^2\left(\X,\pi\right)$ as a conditional expectation operator for each $t \geq 0$ as well. 

In order to present LSIs, we define the \textit{Dirichlet form} $\Dirichlet_W:\L^2\left(\X,\pi\right) \times \L^2\left(\X,\pi\right) \rightarrow \R$:
\begin{equation}
\forall f,g \in \L^2\left(\X,\pi\right), \enspace \Dirichlet_W\left(f,g\right) \triangleq \left<\left(I_q - W\right)f,g\right>_{\pi}
\end{equation}
which is used to study properties of the Markov chain $W$ and its associated Markov semigroup $\left\{H_t \in \R^{q \times q}_{\textsf{sto}} : t \geq 0\right\}$. ($\Dirichlet_W$ is technically only a Dirichlet form when $W$ is a \textit{reversible} Markov chain, i.e. $W$ is a self-adjoint operator, or equivalently, $W$ and $\pi$ satisfy the \textit{detailed balance condition} \cite[Section 2.3, p.705]{LogSobolevInequalitiesDiaconis}.) Moreover, the quadratic form defined by $\Dirichlet_W$ represents the energy of its input function, and satisfies:
\begin{equation}
\label{Eq: Symmetric Characterization of Dirichlet Form}
\forall f \in \L^2\!\left(\X,\pi\right), \, \Dirichlet_W\!\left(f,f\right) = \left<\!\!\left(I_q - \frac{W + W^{*}}{2}\right)\!\! f,f\!\right>_{\!\!\!\pi}
\end{equation}
where $W^{*}:\L^2\left(\X,\pi\right) \rightarrow \L^2\left(\X,\pi\right)$ is the adjoint operator of $W$. Finally, we introduce a particularly important Dirichlet form corresponding to the channel $W_{(q-1)/q} = \1 \unif$, which
has all uniform conditional pmfs and uniform stationary distribution $\pi = \unif$, known as the \textit{standard Dirichlet form}:
\begin{equation}
\label{Eq: Standard Dirichlet Form}
\begin{aligned}
\Dirichlet_{\textsf{std}}\left(f,g\right) & \triangleq \Dirichlet_{\1 \unif}\left(f,g\right) = \COV_{\unif}\left(f,g\right) \\
& = \sum_{x \in \X}{\frac{f(x)g(x)}{q}} - \left(\sum_{x \in \X}{\frac{f(x)}{q}}\right) \!\! \left(\sum_{x \in \X}{\frac{g(x)}{q}}\right)
\end{aligned}
\end{equation}
for any $f,g \in \L^2\left(\X,\unif\right)$. The quadratic form defined by the standard Dirichlet form is presented in \eqref{Eq: Standard Dirichlet Form Quadratic Form} in subsection \ref{Comparison of Dirichlet Forms}.

We now present the LSIs associated with the Markov chain $W$ and the Markov semigroup $\left\{H_t \in \R^{q \times q}_{\textsf{sto}} : t \geq 0\right\}$ it defines. The LSI for the Markov semigroup with constant $\alpha \in \R$ states that for every $f \in \L^2\left(\X,\pi\right)$ such that $\left\|f\right\|_{\pi} = 1$, we have:
\begin{equation}
\label{Eq: Log-Sobolev Inequality}
D\left(f^2 \pi \, || \, \pi\right) = \sum_{x \in \X}{\pi(x)f^2(x)\log\left(f^2(x)\right)} \leq \frac{1}{\alpha} \Dirichlet_W\left(f,f\right)
\end{equation}
where we construe $\mu = f^2 \pi \in \P_q$ as a pmf such that $\mu(x) = f(x)^2 \pi(x)$ for every $x \in \X$, and $f^2$ behaves like the Radon-Nikodym derivative (or density) of $\mu$ with respect to $\pi$. The largest constant $\alpha$ such that \eqref{Eq: Log-Sobolev Inequality} holds:
\begin{equation}
\label{Eq: Log-Sobolev Constant}
\alpha\!\left(W\right) \triangleq \inf_{\substack{f \in \L^2\left(\X,\pi\right):\\\left\|f\right\|_{\pi} = 1\\D(f^2 \pi || \pi) \neq 0}}{\frac{\Dirichlet_W\left(f,f\right)}{D\left(f^2 \pi \, || \, \pi\right)}} 
\end{equation}
is called the \textit{logarithmic Sobolev constant} (LSI constant) of the Markov chain $W$ (or the Markov chain $\left(W+ W^{*}\right)\!/2$). Likewise, the LSI for the discrete-time Markov chain with constant $\alpha \in \R$ states that for every $f \in \L^2\left(\X,\pi\right)$ such that $\left\|f\right\|_{\pi} = 1$, we have:
\begin{equation}
\label{Eq: Discrete Log-Sobolev Inequality}
D\left(f^2 \pi \, || \, \pi\right) \leq \frac{1}{\alpha} \Dirichlet_{W W^{*}}\!\left(f,f\right)
\end{equation}
where $\Dirichlet_{WW^{*}}:\L^2\left(\X,\pi\right) \times \L^2\left(\X,\pi\right) \rightarrow \R$ is the ``discrete'' Dirichlet form. The largest constant $\alpha$ such that \eqref{Eq: Discrete Log-Sobolev Inequality} holds is the LSI constant of the Markov chain $WW^{*}$, $\alpha\!\left(WW^{*}\right)$, and we refer to it as the \textit{discrete logarithmic Sobolev constant} of the Markov chain $W$. As we mentioned earlier, there are many useful consequences of such LSIs. For example, if \eqref{Eq: Log-Sobolev Inequality} holds with constant \eqref{Eq: Log-Sobolev Constant}, then for every pmf $\mu \in \P_q$:
\begin{equation}
\label{Eq: Continuous Ergodicity}
\forall t \geq 0, \enspace D\left(\mu H_t || \pi\right) \leq e^{-2 \alpha\left(W\right) t} D\left(\mu || \pi\right)
\end{equation}
where the exponent $2\alpha\!\left(W\right)$ can be improved to $4\alpha\!\left(W\right)$ if $W$ is reversible \cite[Theorem 3.6]{LogSobolevInequalitiesDiaconis}. This is a measure of ergodicity of the semigroup $\left\{H_t \in \R^{q \times q}_{\textsf{sto}} : t \geq 0\right\}$. Likewise, if \eqref{Eq: Discrete Log-Sobolev Inequality} holds with constant $\alpha\!\left(WW^{*}\right)$, then for every pmf $\mu \in \P_q$:
\begin{equation}
\label{Eq: Discrete Ergodicity}
\forall n \in \N, \enspace D\left(\mu W^n || \pi\right) \leq\left(1 - \alpha\!\left(WW^{*}\right)\right)^n D\left(\mu || \pi\right)
\end{equation}
as mentioned in \cite[Remark, p.725]{LogSobolevInequalitiesDiaconis} and proved in \cite{DiscreteTimeLogSobolev}. This is also a measure of ergodicity of the Markov chain $W$.

Although LSIs have many useful consequences, LSI constants are difficult to compute analytically. Fortunately, the LSI constant corresponding to $\Dirichlet_{\textsf{std}}$ has been computed in \cite[Appendix, Theorem A.1]{LogSobolevInequalitiesDiaconis}. Therefore, using the relation in \eqref{Eq: Symmetric Channel Dirichlet Form Quadratic Form}, we can compute LSI constants for $q$-ary symmetric channels as well. The next proposition collects the LSI constants for $q$-ary symmetric channels (which are irreducible for $\delta \in (0,1]$) as well as some other related quantities.  

\begin{proposition}[Constants of Symmetric Channels]
\label{Prop: Constants of Symmetric Channels}
The $q$-ary symmetric channel $W_{\delta} \in \R^{q \times q}_{\textsf{sto}}$ with $q \geq 2$ has:
\begin{enumerate}
\item LSI constant: 
$$ \alpha\!\left(W_{\delta}\right) = \left\{
     \begin{array}{ll}
		 \delta , & q = 2 \\
     \frac{(q-2)\delta}{(q-1)\log\left(q - 1\right)} , & q > 2
     \end{array} \right. $$
for $\delta \in (0,1]$.
\item discrete LSI constant:
$$ \alpha\!\left(W_{\delta} W_{\delta}^{*}\right) = \alpha\!\left(W_{\delta^{\prime}}\right) = \left\{
     \begin{array}{ll}
		 2\delta(1-\delta) , & q = 2 \\
     \frac{(q-2)\left(2q - 2 - q\delta\right)\delta}{(q-1)^2\log\left(q - 1\right)} , & q > 2      
     \end{array} \right. $$
for $\delta \in (0,1]$, where $\delta^{\prime} = \delta \big(2 - \frac{q \delta}{q-1}\big)$.
\item Hirschfeld-Gebelein-R\'{e}nyi maximal correlation corresponding to the uniform stationary distribution $\unif \in \P_q$:
$$ \rho_{\textsf{max}}\!\left(\unif,W_{\delta}\right) = \left|1-\delta-\frac{\delta}{q-1}\right| $$
for $\delta \in [0,1]$, where for any channel $W \in \R^{q \times r}_{\textsf{sto}}$ and any source pmf $P_X \in \P_q$, we define the maximal correlation between the input random variable $X \in [q]$ and the output random variable $Y \in [r]$ (with joint pmf $P_{X,Y}(x,y) = P_X(x) W_{Y|X}(y|x)$) as \cite{RenyiCorrelation}: 
$$ \rho_{\textsf{max}}\!\left(P_X,W\right) \triangleq \sup_{\substack{f:[q] \rightarrow \R, \, g:[r] \rightarrow \R\\ \E\left[f(X)\right] = \E\left[g(Y)\right] = 0 \\ \E\left[f^2(X)\right] = \E\left[g^2(Y)\right] = 1}}{\E\left[f(X)g(Y)\right]} . $$
\item contraction coefficient for KL divergence bounded by:
$$ \left(1-\delta-\frac{\delta}{q-1}\right)^2 \leq \eta_{\textsf{\tiny KL}}\!\left(W_{\delta}\right) \leq \left|1-\delta-\frac{\delta}{q-1}\right| $$
for $\delta \in \left[0,1\right]$.
\end{enumerate} 
\end{proposition}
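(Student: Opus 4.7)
The plan is to reduce all four parts to the spectral decomposition of $W_\delta$ supplied by part 2 of Proposition \ref{Prop: Properties of Symmetric Channel Matrices}, namely
$$ W_\delta \;=\; \lambda(\delta)\!\left(I_q - \tfrac{1}{q}\1\1^T\right) + \tfrac{1}{q}\1\1^T, \qquad \lambda(\delta) \triangleq 1-\delta-\tfrac{\delta}{q-1}, $$
combined with the fact that, because $W_\delta$ is symmetric as a matrix and $\unif$ is its stationary distribution, $W_\delta$ is self-adjoint in $\L^2(\X,\unif)$: $W_\delta^{*} = W_\delta^T = W_\delta$. This self-adjointness is what makes the Dirichlet-form bookkeeping, the identification of $W_\delta W_\delta^*$, and the maximal-correlation computation all collapse to eigenvalue statements about a single symmetric circulant.

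For part 1, I would start from the Dirichlet-form identity \eqref{Eq: Symmetric Channel Dirichlet Form Quadratic Form}, $\Dirichlet_{W_\delta}(f,f) = \tfrac{q\delta}{q-1}\Dirichlet_{\textsf{std}}(f,f)$. Since the reference measure is $\unif$ in both cases, the normalization $\|f\|_{\pi} = 1$ and the right-hand side $D(f^2\pi \, || \, \pi)$ in \eqref{Eq: Log-Sobolev Inequality} coincide for $W_\delta$ and for $W_{(q-1)/q} = \1\unif$, so the infimum definition \eqref{Eq: Log-Sobolev Constant} directly yields $\alpha(W_\delta) = \tfrac{q\delta}{q-1}\alpha(\1\unif)$. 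The remaining ingredient is the standard log-Sobolev constant $\alpha(\1\unif)$, for which Theorem A.1 of \cite{LogSobolevInequalitiesDiaconis} gives $1/2$ when $q=2$ and $\tfrac{q-2}{q\log(q-1)}$ when $q>2$; substitution produces the stated formulas.

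For part 2, self-adjointness gives $W_\delta W_\delta^* = W_\delta^2$, and squaring the spectral decomposition yields $W_\delta^2 = W_{\delta'}$ with $\lambda(\delta') = \lambda(\delta)^2$, which solves to $\delta' = \delta\bigl(2 - \tfrac{q\delta}{q-1}\bigr)$; applying part 1 to $W_{\delta'}$ produces the closed form. For part 3, observe that $\unif W_\delta = \unif$ makes the divergence transition matrix $\textsf{\small diag}(\unif)^{1/2} W_\delta \,\textsf{\small diag}(\unif W_\delta)^{-1/2}$ equal to $W_\delta$ itself, so $\rho_{\textsf{max}}(\unif, W_\delta)$ is the operator norm of $W_\delta$ restricted to the mean-zero subspace (the orthogonal complement of $\1$); by self-adjointness this equals the second-largest absolute eigenvalue of $W_\delta$, which is $|\lambda(\delta)|$ by part 2 of Proposition \ref{Prop: Properties of Symmetric Channel Matrices}.

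For part 4, the lower bound follows from the Ahlswede--G\'{a}cs identity \eqref{Eq: KL and Chi-Squared Contraction Coefficient}, $\eta_{\textsf{\tiny KL}}(W_\delta) = \eta_{\chi^2}(W_\delta)$, together with the standard perturbation bound $\eta_{\chi^2}(W) \geq \rho_{\textsf{max}}^2(Q_X, W)$ valid for every $Q_X$ (obtained by plugging $P_X = Q_X + \varepsilon h$ along the top right singular vector of the divergence transition matrix into the defining supremum of $\eta_{\chi^2}$ and letting $\varepsilon \to 0$); specializing $Q_X = \unif$ and invoking part 3 yields $\eta_{\textsf{\tiny KL}}(W_\delta) \geq \lambda(\delta)^2$. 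The upper bound uses $\eta_{\textsf{\tiny KL}}(W) \leq \eta_{\textsf{TV}}(W)$, the Dobrushin-coefficient bound of \cite{CIR93}, combined with a direct computation $\eta_{\textsf{TV}}(W_\delta) = \max_{x \neq x'} \tfrac{1}{2}\|W_\delta(\cdot|x) - W_\delta(\cdot|x')\|_{\ell^1} = |\lambda(\delta)|$, which holds because any two distinct rows of $W_\delta$ differ in exactly two coordinates, each by $\pm\lambda(\delta)$. The sole non-trivial input throughout is the explicit form of the standard log-Sobolev constant from \cite{LogSobolevInequalitiesDiaconis}; once this is imported, the remaining parts are essentially spectral bookkeeping, and the only minor subtlety is verifying the $\eta_{\chi^2} \geq \rho_{\textsf{max}}^2$ direction by a local perturbation argument.
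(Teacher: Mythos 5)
Your proof is correct and follows essentially the same route as the paper: part 1 via the scaling $\Dirichlet_{W_\delta} = \frac{q\delta}{q-1}\Dirichlet_{\textsf{std}}$ and the known constant $\alpha(\1\unif)$ from Diaconis--Saloff-Coste, part 2 via $W_\delta W_\delta^* = W_\delta^2 = W_{\delta'}$, part 3 via the divergence transition matrix reducing to $W_\delta$ itself, and part 4 via the chain $\rho_{\textsf{max}}^2 \leq \eta_{\textsf{\tiny KL}} \leq \eta_{\textsf{\tiny TV}}$ with a direct computation of $\eta_{\textsf{\tiny TV}}(W_\delta)$. The only cosmetic difference is that you unpack the lower bound $\rho_{\textsf{max}}^2 \leq \eta_{\textsf{\tiny KL}}$ via the Ahlswede--G\'acs identity and a local perturbation, whereas the paper simply cites this standard sandwich inequality from the contraction-coefficient literature.
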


\begin{proof} See Appendix \ref{App: Proofs of Propositions}.
\end{proof}

In view of Proposition \ref{Prop: Constants of Symmetric Channels} and the intractability of computing LSI constants for general Markov chains, we often ``compare'' a given irreducible channel $V \in \R^{q \times q}_{\textsf{sto}}$ with a $q$-ary symmetric channel $W_{\delta} \in \R^{q \times q}_{\textsf{sto}}$ to try and establish an LSI for it. We assume for the sake of simplicity that $V$ is doubly stochastic and has uniform stationary pmf (just like $q$-ary symmetric channels). Usually, such a comparison between $W_{\delta}$ and $V$ requires us to prove domination of Dirichlet forms, such as:
\begin{equation}
\label{Eq: Dirichlet Form Domination}
\forall f \in \L^2\left(\X,\unif\right), \enspace \Dirichlet_{V}\!\left(f,f\right) \geq \Dirichlet_{W_{\delta}}\!\left(f,f\right) = \frac{q\delta}{q-1} \Dirichlet_{\textsf{std}}\left(f,f\right)
\end{equation}
where we use \eqref{Eq: Symmetric Channel Dirichlet Form Quadratic Form}. Such pointwise domination results immediately produce LSIs, \eqref{Eq: Log-Sobolev Inequality} and \eqref{Eq: Discrete Log-Sobolev Inequality}, for $V$. Furthermore, they also lower bound the LSI constants of $V$; for example:
\begin{equation}
\alpha\!\left(V\right) \geq \alpha\!\left(W_{\delta}\right) .
\end{equation}
This in turn begets other results such as \eqref{Eq: Continuous Ergodicity} and \eqref{Eq: Discrete Ergodicity} for the channel $V$ (albeit with worse constants in the exponents since the LSI constants of $W_{\delta}$ are used instead of those for $V$). More general versions of Dirichlet form domination between Markov chains on different state spaces with different stationary distributions, and the resulting bounds on their LSI constants are presented in \cite[Lemmata 3.3 and 3.4]{LogSobolevInequalitiesDiaconis}. We next illustrate that the information theoretic notion of less noisy domination is a sufficient condition for various kinds of pointwise Dirichlet form domination. 

\begin{theorem}[Domination of Dirichlet Forms]
\label{Thm: Extended Domination of Dirichlet Forms}
Let $W,V \in \R^{q \times q}_{\textsf{sto}}$ be doubly stochastic channels, and $\pi = \unif$ be the uniform stationary distribution. Then, the following are true:
\begin{enumerate}
\item If $W \succeq_{\textsf{\tiny ln}} V$, then:
$$ \forall f \in \L^2\left(\X,\unif\right), \enspace \Dirichlet_{V V^{*}}\left(f,f\right) \geq \Dirichlet_{W W^{*}}\left(f,f\right) . $$
\item If $W \in \R^{q \times q}_{\succeq 0}$ is positive semidefinite, $V$ is normal (i.e. $V^TV = V V^T$), and $W \succeq_{\textsf{\tiny ln}} V$, then:
$$ \forall f \in \L^2\left(\X,\unif\right), \enspace \Dirichlet_{V}\left(f,f\right) \geq \Dirichlet_{W}\left(f,f\right) . $$
\item If $W = W_{\delta} \in \R^{q \times q}_{\textsf{sto}}$ is any $q$-ary symmetric channel with $\delta \in \big[0,\frac{q-1}{q}\big]$ and $W_{\delta} \succeq_{\textsf{\tiny ln}} V$, then:
$$ \forall f \in \L^2\left(\X,\unif\right), \enspace \Dirichlet_{V}\left(f,f\right) \geq \frac{q\delta}{q-1}\Dirichlet_{\textsf{std}}\left(f,f\right) . $$
\end{enumerate} 
\end{theorem}   

\begin{proof} ~\newline
\textbf{Part 1:} First observe that:
\begin{align*}
\forall f \in \L^2\left(\X,\unif\right), \enspace \Dirichlet_{WW^{*}}\!\left(f,f\right) & = \frac{1}{q} f^T \left(I_q - WW^{T}\right)f \\
\forall f \in \L^2\left(\X,\unif\right), \enspace \Dirichlet_{V V^{*}}\!\left(f,f\right) & = \frac{1}{q} f^T \left(I_q - VV^{T}\right)f
\end{align*}
where we use the facts that $W^T = W^{*}$ and $V^T = V^{*}$ because the stationary distribution is uniform. This implies that $\Dirichlet_{V V^{*}}\!\left(f,f\right) \geq \Dirichlet_{W W^{*}}\!\left(f,f\right)$ for every $f \in \L^2\left(\X,\unif\right)$ if and only if $I_q - V V^T \succeq_{\textsf{\tiny PSD}} I_q - W W^T$, which is true if and only if $W W^T \succeq_{\textsf{\tiny PSD}} V V^T$. Since $W \succeq_{\textsf{\tiny ln}} V$, we get $W W^T \succeq_{\textsf{\tiny PSD}} V V^T$ from part 2 of Proposition \ref{Prop: Loewner and Spectral Characterizations of Less Noisy} after letting $P_X = \unif = P_X W = P_X V$. \\
\textbf{Part 2:} Once again, we first observe using \eqref{Eq: Symmetric Characterization of Dirichlet Form} that:
\begin{align*}
\forall f \in \L^2\left(\X,\unif\right), \enspace \Dirichlet_{W}\left(f,f\right) & = \frac{1}{q} f^T \left(I_q - \frac{W + W^{T}}{2}\right)f , \\
\forall f \in \L^2\left(\X,\unif\right), \enspace \Dirichlet_V\left(f,f\right) & = \frac{1}{q} f^T \left(I_q - \frac{V + V^{T}}{2}\right)f .
\end{align*}
So, $\Dirichlet_{V}\left(f,f\right) \geq \Dirichlet_{W}\left(f,f\right)$ for every $f \in \L^2\left(\X,\unif\right)$ if and only if $\left(W + W^T\right)\!/2 \succeq_{\textsf{\tiny PSD}} \left(V + V^T\right)\!/2$. Since $W W^T \succeq_{\textsf{\tiny PSD}} V V^T$ from the proof of part 1, it is sufficient to prove that:
\begin{equation}
\label{Eq: Gramian implies Symmetric Part}
W W^T \succeq_{\textsf{\tiny PSD}} V V^T \enspace \Rightarrow \enspace \frac{W + W^T}{2} \succeq_{\textsf{\tiny PSD}} \frac{V + V^T}{2} .
\end{equation}
Lemma \ref{Lemma: Gramian Loewner Domination implies Symmetric Part Loewner Domination} in Appendix \ref{App: Auxiliary Results} establishes the claim in \eqref{Eq: Gramian implies Symmetric Part} because $W \in \R^{q \times q}_{\succeq 0}$ and $V$ is a normal matrix. \\
\textbf{Part 3:} We note that when $V$ is a normal matrix, this result follows from part 2 because $W_{\delta} \in \R^{q \times q}_{\succeq 0}$ for $\delta \in \big[0,\frac{q-1}{q}\big]$, as can be seen from part 2 of Proposition \ref{Prop: Properties of Symmetric Channel Matrices}. For a general doubly stochastic channel $V$, we need to prove that $\Dirichlet_{V}\left(f,f\right) \geq \Dirichlet_{W_{\delta}}\left(f,f\right) = \frac{q\delta}{q-1}\Dirichlet_{\textsf{std}}(f,f)$ for every $f \in \L^2\left(\X,\unif\right)$ (where we use \eqref{Eq: Symmetric Channel Dirichlet Form Quadratic Form}). Following the proof of part 2, it is sufficient to prove \eqref{Eq: Gramian implies Symmetric Part} with $W = W_{\delta}$:\footnote{Note that \eqref{Eq: Gramian implies Symmetric Part} trivially holds for $W = W_{\delta}$ with $\delta = (q-1)/q$, because $W_{(q-1)/q} = W_{(q-1)/q}^2 = \1 \unif \succeq_{\textsf{\tiny PSD}} V V^T$ implies that $V = W_{(q-1)/q}$.}
$$ W_{\delta}^2 \succeq_{\textsf{\tiny PSD}} VV^T \enspace \Rightarrow \enspace W_{\delta} \succeq_{\textsf{\tiny PSD}} \frac{V + V^T}{2} $$
where $W_{\delta}^2 = W_{\delta} W_{\delta}^T$ and $W_{\delta} = \left(W_{\delta} + W_{\delta}^T\right)\!/2$. Recall the \textit{L\"{o}wner-Heinz theorem} \cite{LoewnerHeinzTheorem1,LoewnerHeinzTheorem2}, (cf. \cite[Section 6.6, Problem 17]{MatrixAnalysis}), which states that for $A,B \in \R^{q \times q}_{\succeq 0}$ and $0 \leq p \leq 1$:
\begin{equation}
\label{Eq: Loewner-Heinz}
A \succeq_{\textsf{\tiny PSD}} B \enspace \Rightarrow \enspace A^p \succeq_{\textsf{\tiny PSD}} B^p 
\end{equation}
or equivalently, $f:\left[0,\infty\right) \rightarrow \R, \, f(x) = x^p$ is an \textit{operator monotone} function for $p \in [0,1]$. Using \eqref{Eq: Loewner-Heinz} with $p = \frac{1}{2}$ (cf. \cite[Corollary 7.7.4 (b)]{BasicMatrixAnalysis}), we have:
$$ W_{\delta}^2 \succeq_{\textsf{\tiny PSD}} VV^T \enspace \Rightarrow \enspace W_{\delta} \succeq_{\textsf{\tiny PSD}} \left(VV^T\right)^{\frac{1}{2}} $$
because the Gramian matrix $VV^T \in \R^{q \times q}_{\succeq 0}$. (Here, $\left(VV^T\right)^{\frac{1}{2}}$ is the unique positive semidefinite square root matrix of $VV^T$.) 

Let $VV^T = Q\Lambda Q^T$ and $(V + V^T)/2 = U \Sigma U^T$ be the spectral decompositions of $VV^T$ and $(V + V^T)/2$, where $Q$ and $U$ are orthogonal matrices with eigenvectors as columns, and $\Lambda$ and $\Sigma$ are diagonal matrices of eigenvalues. Since $VV^T$ and $(V + V^T)/2$ are both doubly stochastic, they both have the unit norm eigenvector $\1/\sqrt{q}$ corresponding to the maximum eigenvalue of unity. In fact, we have:
$$ \left(VV^T\right)^{\frac{1}{2}} \frac{\1}{\sqrt{q}} = \frac{\1}{\sqrt{q}} \quad \text{and} \quad \left(\frac{V + V^T}{2}\right) \frac{\1}{\sqrt{q}} = \frac{\1}{\sqrt{q}} $$
where we use the fact that $(VV^T)^{\frac{1}{2}} =  Q\Lambda^{\frac{1}{2}} Q^T$ is the spectral decomposition of $(VV^T)^{\frac{1}{2}}$. For any matrix $A \in \R^{q \times q}_{\textsf{sym}}$, let $\lambda_1(A) \geq \lambda_2(A) \geq \cdots \geq \lambda_q(A)$ denote the eigenvalues of $A$ in descending order. Without loss of generality, we assume that $\left[\Lambda\right]_{j,j} = \lambda_j(VV^T)$ and $\left[\Sigma\right]_{j,j} = \lambda_j((V + V^T)/2)$ for every $1 \leq j \leq q$. So, $\lambda_1((VV^T)^{\frac{1}{2}}) = \lambda_1((V + V^T)/2) = 1$, and the first columns of both $Q$ and $U$ are equal to $\1/\sqrt{q}$. 

From part 2 of Proposition \ref{Prop: Properties of Symmetric Channel Matrices}, we have $W_{\delta} = Q D Q^T = U D U^T$, where $D$ is the diagonal matrix of eigenvalues such that $\left[D\right]_{1,1} = \lambda_1(W_{\delta}) = 1$ and $\left[D\right]_{j,j} = \lambda_j(W_{\delta}) = 1 - \delta - \frac{\delta}{q-1}$ for $2 \leq j \leq q$. Note that we may use either of the eigenbases, $Q$ or $U$, because they both have first column $\1/\sqrt{q}$, which is the eigenvector of $W_{\delta}$ corresponding to $\lambda_1(W_{\delta}) = 1$ since $W_{\delta}$ is doubly stochastic, and the remaining eigenvector columns are permitted to be any orthonormal basis of $\textsf{\small span}(\1/\sqrt{q})^{\perp}$ as $\lambda_j(W_{\delta}) = 1 - \delta - \frac{\delta}{q-1}$ for $2 \leq j \leq q$. Hence, we have:
\begin{align*}
W_{\delta} \succeq_{\textsf{\tiny PSD}} \left(VV^T\right)^{\frac{1}{2}} & \Leftrightarrow QDQ^T \succeq_{\textsf{\tiny PSD}} Q\Lambda^{\frac{1}{2}}Q^T \Leftrightarrow D \succeq_{\textsf{\tiny PSD}} \Lambda^{\frac{1}{2}} , \\
W_{\delta} \succeq_{\textsf{\tiny PSD}} \frac{V + V^T}{2} & \Leftrightarrow UDU^T \succeq_{\textsf{\tiny PSD}} U\Sigma U^T \Leftrightarrow D \succeq_{\textsf{\tiny PSD}} \Sigma .
\end{align*}

In order to show that $D \succeq_{\textsf{\tiny PSD}} \Lambda^{\frac{1}{2}} \Rightarrow D \succeq_{\textsf{\tiny PSD}} \Sigma$, it suffices to prove that $\Lambda^{\frac{1}{2}} \succeq_{\textsf{\tiny PSD}} \Sigma$. Recall from \cite[Corollary 3.1.5]{MatrixAnalysis} that for any matrix $A \in \R^{q \times q}$, we have:\footnote{This states that for any matrix $A \in \R^{q \times q}$, the $i$th largest eigenvalue of the symmetric part of $A$ is less than or equal to the $i$th largest singular value of $A$ (which is the $i$th largest eigenvalue of the unique positive semidefinite part $(AA^T)^{1/2}$ in the polar decomposition of $A$) for every $1 \leq i \leq q$.}
\begin{equation}
\forall i \in \left\{1,\dots,q\right\}, \enspace \lambda_i\left(\left(AA^T\right)^{\frac{1}{2}}\right) \geq \lambda_i\left(\frac{A + A^T}{2}\right) . 
\end{equation}
Hence, $\Lambda^{\frac{1}{2}} \succeq_{\textsf{\tiny PSD}} \Sigma$ is true, cf. \cite[Lemma 2.5]{NashInequalities}. This completes the proof.
\end{proof}

Theorem \ref{Thm: Extended Domination of Dirichlet Forms} includes Theorem \ref{Thm: Domination of Dirichlet Forms} from section \ref{Main Results} as part 3, and also provides two other useful pointwise Dirichlet form domination results. Part 1 of Theorem \ref{Thm: Extended Domination of Dirichlet Forms} states that less noisy domination implies discrete Dirichlet form domination. In particular, if we have $W_{\delta} \succeq_{\textsf{\tiny ln}} V$ for some irreducible $q$-ary symmetric channel $W_{\delta} \in \R^{q \times q}_{\textsf{sto}}$ and irreducible doubly stochastic channel $V \in \R^{q \times q}_{\textsf{sto}}$, then part 1 implies that:
\begin{equation}
\label{Eq: Discrete Ergodicity of V}
\forall n \in \N, \enspace D\left(\mu V^n || \unif\right) \leq\left(1 - \alpha\!\left(W_{\delta} W_{\delta}^{*}\right)\right)^n D\left(\mu || \unif\right)
\end{equation}
for all pmfs $\mu \in \P_q$, where $\alpha\!\left(W_{\delta} W_{\delta}^{*}\right)$ is computed in part 2 of Proposition \ref{Prop: Constants of Symmetric Channels}. However, it is worth mentioning that \eqref{Eq: Discrete Ergodicity} for $W_{\delta}$ and Proposition \ref{Prop: KL Divergence Characterization of Less Noisy} directly produce \eqref{Eq: Discrete Ergodicity of V}. So, such ergodicity results for the discrete-time Markov chain $V$ do not require the full power of the Dirichlet form domination in part 1. Regardless, Dirichlet form domination results, such as in parts 2 and 3, yield several functional inequalities (like Poincar\'{e} inequalities and LSIs) which have many other potent consequences as well. 

Parts 2 and 3 of Theorem \ref{Thm: Extended Domination of Dirichlet Forms} convey that less noisy domination also implies the usual (continuous) Dirichlet form domination under regularity conditions. We note that in part 2, the channel $W$ is more general than that in part 3, but the channel $V$ is restricted to be normal (which includes the case where $V$ is an additive noise channel). The proofs of these parts essentially consist of two segments. The first segment uses part 1, and the second segment illustrates that pointwise domination of discrete Dirichlet forms implies pointwise domination of Dirichlet forms (as shown in \eqref{Eq: Dirichlet Form Domination}). This latter segment is encapsulated in Lemma \ref{Lemma: Gramian Loewner Domination implies Symmetric Part Loewner Domination} of Appendix \ref{App: Auxiliary Results} for part 2, and requires a slightly more sophisticated proof pertaining to $q$-ary symmetric channels in part 3.

\section{Conclusion}
\label{Conclusion}

In closing, we briefly reiterate our main results by delineating a possible program for proving LSIs for certain Markov chains. Given an arbitrary irreducible doubly stochastic channel $V \in \R^{q \times q}_{\textsf{sto}}$ with minimum entry $\nu = \min\{\left[V\right]_{i,j} : 1 \leq i,j \leq q\} > 0$ and $q \geq 2$, we can first use Theorem \ref{Thm: Sufficient Condition for Degradation by Symmetric Channels} to generate a $q$-ary symmetric channel $W_{\delta} \in \R^{q \times q}_{\textsf{sto}}$ with $\delta = \nu/\big(1-(q-1)\nu+\frac{\nu}{q-1}\big)$ such that $W_{\delta} \succeq_{\textsf{\tiny deg}} V$. This also means that $W_{\delta} \succeq_{\textsf{\tiny ln}} V$, using Proposition \ref{Prop: Relations between Channel Preorders}. Moreover, the $\delta$ parameter can be improved using Theorem \ref{Thm: Additive Less Noisy Domination and Degradation Regions for Symmetric Channels} (or Propositions \ref{Prop: Degradation by Symmetric Channels} and \ref{Prop: Less Noisy Domination by Symmetric Channels}) if $V$ is an additive noise channel. We can then use Theorem \ref{Thm: Extended Domination of Dirichlet Forms} to deduce a pointwise domination of Dirichlet forms. Since $W_{\delta}$ satisfies the LSIs \eqref{Eq: Log-Sobolev Inequality} and \eqref{Eq: Discrete Log-Sobolev Inequality} with corresponding LSI constants given in Proposition \ref{Prop: Constants of Symmetric Channels}, Theorem \ref{Thm: Extended Domination of Dirichlet Forms} establishes the following LSIs for $V$:
\begin{align}
D\left(f^2 \unif \, || \, \unif\right) & \leq \frac{1}{\alpha\!\left(W_{\delta}\right)} \Dirichlet_V\left(f,f\right) \\
D\left(f^2 \unif \, || \, \unif\right) & \leq \frac{1}{\alpha\!\left(W_{\delta} W_{\delta}^{*}\right)} \Dirichlet_{V V^{*}}\!\left(f,f\right)
\end{align}
for every $f \in \L^2\left(\X,\unif\right)$ such that $\left\|f\right\|_{\unif} = 1$. These inequalities can be used to derive a myriad of important facts about $V$. We note that the equivalent characterizations of the less noisy preorder in Theorem \ref{Thm: Chi Squared Divergence Characterization of Less Noisy} and Proposition \ref{Prop: Loewner and Spectral Characterizations of Less Noisy} are particularly useful for proving some of these results. Finally, we accentuate that Theorems \ref{Thm: Sufficient Condition for Degradation by Symmetric Channels} and \ref{Thm: Additive Less Noisy Domination and Degradation Regions for Symmetric Channels} address our motivation in subsection \ref{Main Question and Motivation} by providing analogs of the relationship between less noisy domination by $q$-ary erasure channels and contraction coefficients in the context of $q$-ary symmetric channels.

\appendices

\section{Basics of majorization theory} 
\label{App: Basics of Majorization Theory}

Since we use some majorization arguments in our analysis, we briefly introduce the notion of \textit{group majorization} over row vectors in $\R^q$ (with $q \in \N$) in this appendix. Given a group $\G \subseteq \R^{q \times q}$ of matrices (with the operation of matrix multiplication), we may define a preorder called $\G$-majorization over row vectors in $\R^q$. For two row vectors $x,y \in \R^q$, we say that $x$ $\G$-majorizes $y$ if $y \in \textsf{\small conv}\left(\left\{xG:G \in \G\right\}\right)$, where $\left\{xG:G \in \G\right\}$ is the orbit of $x$ under the group $\G$. Group majorization intuitively captures a notion of ``spread'' of vectors. So, $x$ $\G$-majorizes $y$ when $x$ is more spread out than $y$ with respect to $\G$. We refer readers to \cite[Chapter 14, Section C]{Majorization} and the references therein for a thorough treatment of group majorization. If we let $\G$ be the symmetric group of all permutation matrices in $\R^{q \times q}$, then $\G$-majorization corresponds to traditional majorization of vectors in $\R^q$ as introduced in \cite{HardyLittlewoodPolya}. The next proposition collects some results about traditional majorization.

\begin{proposition}[Majorization \cite{HardyLittlewoodPolya, Majorization}]
\label{Prop: Majorization} 
Given two row vectors $x = \left(x_1,\dots,x_q\right), y = \left(y_1,\dots,y_q\right) \in \R^q$, let $x_{(1)} \leq \cdots \leq x_{(q)}$ and $y_{(1)} \leq \cdots \leq y_{(q)}$ denote the re-orderings of $x$ and $y$ in ascending order. Then, the following are equivalent:
\begin{enumerate}
\item $x$ majorizes $y$, or equivalently, $y$ resides in the convex hull of all permutations of $x$.
\item $y = xD$ for some doubly stochastic matrix $D \in \R^{q \times q}_{\textsf{sto}}$.
\item The entries of $x$ and $y$ satisfy:
\begin{align*}
\sum_{i = 1}^{k}{x_{(i)}} & \leq \sum_{i = 1}^{k}{y_{(i)}} \, , \enspace \text{for} \enspace k = 1,\dots,q-1 \, , \\
\text{and} \enspace \sum_{i = 1}^{q}{x_{(i)}} & = \sum_{i = 1}^{q}{y_{(i)}} \, .
\end{align*} 
\end{enumerate}
When these conditions are true, we write $x \succeq_{\textsf{\tiny maj}} y$.
\end{proposition}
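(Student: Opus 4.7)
The plan is to prove the three-way equivalence by establishing the cycle $(2) \Rightarrow (1) \Rightarrow (3) \Rightarrow (2)$, which is the classical Hardy--Littlewood--P\'{o}lya theorem. The architecture rests on three well-known ingredients: Birkhoff's theorem, the rearrangement inequality, and an iterative ``Robin Hood'' transfer.

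First, I would deal with $(1) \Leftrightarrow (2)$ via \emph{Birkhoff's theorem}, which states that the extreme points of the compact convex polytope of $q \times q$ doubly stochastic matrices are precisely the $q!$ permutation matrices. Therefore, $y \in \textsf{\small conv}\left(\left\{xP : P \in \R^{q \times q} \text{ a permutation matrix}\right\}\right)$ if and only if $y = xD$ for some doubly stochastic $D \in \R^{q \times q}_{\textsf{sto}}$, giving the equivalence immediately without any ordering considerations.

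Next, for $(2) \Rightarrow (3)$, I note that both conditions are invariant under a simultaneous permutation of coordinates, so one may assume $x, y$ are already sorted in ascending order. The equality $\sum_{i} x_{(i)} = \sum_{i} y_{(i)}$ is immediate from the fact that the row and column sums of $D$ are unity. For the inequalities $\sum_{i=1}^{k} x_{(i)} \leq \sum_{i=1}^{k} y_{(i)}$, I would write $\sum_{j=1}^{k} y_{(j)} = \sum_{i} c_i \, x_{(i)}$ where $c_i = \sum_{j = 1}^{k} D_{i,j} \in [0,1]$ satisfies $\sum_i c_i = k$; because $x_{(i)}$ is increasing, the minimum of $\sum_i c_i x_{(i)}$ over such weights is attained by concentrating mass on the smallest entries, i.e.\ $c_1 = \cdots = c_k = 1$ and the remaining $c_i = 0$, yielding the desired inequality.

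The main obstacle, and the genuinely nontrivial implication, is $(3) \Rightarrow (2)$. My plan is to use the \emph{$T$-transform} (or ``Robin Hood'') algorithm. Assuming without loss of generality that $x$ and $y$ are in ascending order and $x \neq y$, let $i$ be the smallest index with $x_{(i)} \neq y_{(i)}$; condition $(3)$ forces $x_{(i)} < y_{(i)}$. Pick the smallest $j > i$ with $x_{(j)} > y_{(j)}$ (which must exist by the equality of total sums) and form the convex combination $T_{\lambda} = \lambda I_q + (1 - \lambda) Q_{ij}$, where $Q_{ij}$ swaps coordinates $i$ and $j$ and $\lambda \in [0,1]$ is chosen maximally so that the updated vector $x' = x T_{\lambda}$ still satisfies $x'_{(i)} \leq y_{(i)}$ and $x'_{(j)} \geq y_{(j)}$ (one of these becoming an equality). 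The key verifications are that $x'$ remains in ascending order, that $x'$ still satisfies $(3)$ with respect to $y$, and that $x'$ agrees with $y$ in strictly more coordinates than $x$ does. Since each $T_{\lambda}$ is doubly stochastic, iterating at most $q-1$ times produces vectors $x = x^{(0)}, x^{(1)}, \ldots, x^{(N)} = y$ with $x^{(m+1)} = x^{(m)} T_{\lambda_m}$; composing, $y = x \cdot (T_{\lambda_0} T_{\lambda_1} \cdots T_{\lambda_{N-1}})$, and the product of doubly stochastic matrices is doubly stochastic, giving condition $(2)$. The termination and the bookkeeping of which coordinates are ``fixed'' after each $T$-transform are the delicate points, but both follow by a straightforward finite induction on the number of disagreeing coordinates.
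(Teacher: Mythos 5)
The paper does not prove Proposition \ref{Prop: Majorization}; it is stated as a classical result and cited directly to \cite{HardyLittlewoodPolya} and \cite{Majorization} without any argument, so there is no in-paper proof to compare against. Your proposal is the standard Hardy--Littlewood--P\'{o}lya proof, structured in the canonical way: Birkhoff's theorem for $(1) \Leftrightarrow (2)$, partial row sums together with the rearrangement principle for $(2) \Rightarrow (3)$, and iterated $T$-transforms for $(3) \Rightarrow (2)$. All three legs are sound in substance. One small wording slip: with $T_{\lambda} = \lambda I_q + (1-\lambda)Q_{ij}$, larger $\lambda$ means \emph{less} transfer, so choosing $\lambda$ ``maximally'' would just give $\lambda = 1$ and $x' = x$; you mean to choose $\lambda$ minimally (equivalently, the transfer $1-\lambda$ maximally) subject to $x'_i \leq y_i$ and $x'_j \geq y_j$. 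You also rightly flag the two genuinely delicate bookkeeping points in $(3) \Rightarrow (2)$ --- that $x'$ can be kept in sorted order (or that one tracks sorted partial sums instead) and that $x'$ still satisfies the partial-sum inequalities against $y$ --- both of which are handled in the cited references by a finite induction on the number of disagreeing coordinates, exactly as you indicate.
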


In the context of subsection \ref{Symmetric Channels and their Properties}, given an Abelian group $(\X,\oplus)$ of order $q$, another useful notion of $\G$-majorization can be obtained by letting $\G = \left\{P_x \in \R^{q \times q}: x \in \X\right\}$ be the group of permutation matrices defined in \eqref{Eq: Permutation Representation} that is isomorphic to $(\X,\oplus)$. For such choice of $\G$, we write $x \succeq_{\text{\tiny $\X$}} y$ when $x$ $\G$-majorizes (or $\X$-majorizes) $y$ for any two row vectors $x,y \in \R^q$. We will only require one fact about such group majorization, which we present in the next proposition.

\begin{proposition}[Group Majorization]
\label{Prop: Group Majorization}
Given two row vectors $x,y \in \R^q$, $x \succeq_{\text{\tiny $\X$}} y$ if and only if there exists $\lambda \in \P_q$ such that $y = x \, \textsf{\small circ}_{\X}(\lambda)$. 
\end{proposition}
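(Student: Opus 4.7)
The plan is to reduce the statement to the explicit decomposition of $\X$-circulant matrices recorded in equation (\ref{Eq: Group Circulant Matrix Decomposition}), namely $\textsf{\small circ}_{\X}(\lambda) = \sum_{i \in \X} \lambda_i P_i^T$. Right-multiplying $x$ by this identity gives
\[
x \, \textsf{\small circ}_{\X}(\lambda) \;=\; \sum_{i \in \X} \lambda_i \bigl(x P_i^T\bigr),
\]
so the set $\{x \, \textsf{\small circ}_{\X}(\lambda) : \lambda \in \P_q\}$ is precisely the convex hull $\textsf{\small conv}\bigl(\{x P_i^T : i \in \X\}\bigr)$.

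The next step is to match this convex hull with the orbit hull $\textsf{\small conv}(\{x P_a : a \in \X\})$ appearing in the definition of $\succeq_{\text{\tiny $\X$}}$. This is straightforward: each $P_i$ is orthogonal, so $P_i^T = P_i^{-1}$, and since the family $\{P_i : i \in \X\}$ forms a group isomorphic to the Abelian group $(\X,\oplus)$ (as noted in Subsection \ref{Symmetric Channels and their Properties}), inversion is a bijection on this group. Concretely, $P_i^T = P_{-i}$, so the sets $\{P_i^T : i \in \X\}$ and $\{P_a : a \in \X\}$ coincide, and hence so do the two convex hulls.

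Combining the two observations yields both directions of the biconditional at once: $y \in \textsf{\small conv}(\{x P_a : a \in \X\})$ if and only if $y = x \, \textsf{\small circ}_{\X}(\lambda)$ for some $\lambda \in \P_q$. For the forward direction, a convex representation $y = \sum_a \mu_a x P_a$ produces the pmf $\lambda_i \triangleq \mu_{-i}$; for the converse, a representation $y = x \, \textsf{\small circ}_{\X}(\lambda)$ produces convex weights $\mu_a \triangleq \lambda_{-a}$. There is no essential obstacle here — the only bookkeeping is the inversion bijection $i \mapsto -i$ on $\X$ that reconciles the indexing in the circulant decomposition with the indexing of the orbit $\{x P_a : a \in \X\}$.
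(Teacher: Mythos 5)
Your proof is correct. It follows the same basic idea as the paper's — reduce the orbit hull to the explicit description of $\X$-circulant matrices in terms of the permutation representation — but you take a slightly more direct route. The paper starts from the row-column decomposition \eqref{Eq: Group Circulant Matrix Row-Column Decomposition} to write $y = \lambda\,\textsf{\small circ}_{\X}(x)$ and then invokes commutativity of $\X$-circular convolution to swap to $y = x\,\textsf{\small circ}_{\X}(\lambda)$; you instead apply \eqref{Eq: Group Circulant Matrix Decomposition} directly to $x\,\textsf{\small circ}_{\X}(\lambda) = \sum_i \lambda_i\, x P_i^T$, which eliminates the commutativity step. You also make explicit the inversion bijection $\{P_i^T : i \in \X\} = \{P_a : a \in \X\}$ needed to reconcile the transposed permutations with the orbit, which the paper leaves implicit. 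Both arguments are short and rely on the same algebraic facts about the regular representation, so this is a minor variant rather than a genuinely different route.
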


\begin{proof}
Observe that:
\begin{align*}
x \succeq_{\text{\tiny $\X$}} y & \Leftrightarrow y \in \textsf{\small conv}\left(\left\{x P_z : z \in \X\right\}\right) \\
& \Leftrightarrow y = \lambda \, \textsf{\small circ}_{\X}(x) \enspace \text{for some} \enspace \lambda \in \P_q \\
& \Leftrightarrow y = x \, \textsf{\small circ}_{\X}(\lambda) \enspace \text{for some} \enspace \lambda \in \P_q 
\end{align*}
where the second step follows from \eqref{Eq: Group Circulant Matrix Row-Column Decomposition}, and the final step follows from the commutativity of $\X$-circular convolution.
\end{proof}

Proposition \ref{Prop: Group Majorization} parallels the equivalence between parts 1 and 2 of Proposition \ref{Prop: Majorization}, because $\textsf{\small circ}_{\X}(\lambda)$ is a doubly stochastic matrix for every pmf $\lambda \in \P_q$. In closing this appendix, we mention a well-known special case of such group majorization. When $(\X,\oplus)$ is the cyclic Abelian group $\mathbb{Z}/q\mathbb{Z}$ of integers with addition modulo $q$, $\G = \left\{I_q,P_q,P_q^2,\dots,P_q^{q-1}\right\}$ is the group of all cyclic permutation matrices in $\R^{q \times q}$, where $P_q \in \R^{q \times q}$ is defined in \eqref{Eq: Generator Cyclic Permutation Matrix}. The corresponding notion of $\G$-majorization is known as \textit{cyclic majorization}, cf. \cite{CyclicMajorization}.

\section{Proofs of propositions \ref{Prop: Properties of Symmetric Channel Matrices} and \ref{Prop: Constants of Symmetric Channels}}
\label{App: Proofs of Propositions}

\renewcommand{\proofname}{Proof of Proposition \ref{Prop: Properties of Symmetric Channel Matrices}}

\begin{proof} ~\newline
\textbf{Part 1:} This is obvious from \eqref{Eq: Symmetric Channel}. \\
\textbf{Part 2:} Since the DFT matrix jointly diagonalizes all circulant matrices, it diagonalizes every $W_{\delta}$ for $\delta \in \R$ (using part 1). The corresponding eigenvalues are all real because $W_{\delta}$ is symmetric. To explicitly compute these eigenvalues, we refer to \cite[Problem 2.2.P10]{BasicMatrixAnalysis}. Observe that for any row vector $x = (x_0,\dots,x_{q-1}) \in \R^q$, the corresponding circulant matrix satisfies:
\begin{align*}
\textsf{\small circ}_{\mathbb{Z}/q\mathbb{Z}}(x) & = \sum_{k = 0}^{q-1}{x_k P_q^k} = F_q \left(\sum_{k = 0}^{q-1}{x_k D_q^k}\right) F_q^H \\
& = F_q \, \textsf{\small diag}\!\left(\sqrt{q} \, x F_q\right) F_q^H 
\end{align*}
where the first equality follows from \eqref{Eq: Group Circulant Matrix Decomposition} for the group $\mathbb{Z}/q\mathbb{Z}$ \cite[Section 0.9.6]{BasicMatrixAnalysis}, $D_q = \textsf{\small diag}((1,\omega,\omega^2,\dots,\omega^{q-1}))$, and $P_q = F_q D_q F_q^H \in \R^{q \times q}$ is defined in \eqref{Eq: Generator Cyclic Permutation Matrix}. Hence, we have:
\begin{align*}
\lambda_j\left(W_{\delta}\right) & = \sum_{k = 1}^{q}{\left(w_{\delta}\right)_k \omega^{(j-1)(k-1)}} \\
& = \left\{
     \begin{array}{ll}
       1 , & j = 1 \\
       1-\delta-\frac{\delta}{q-1} , & j = 2,\dots,q
     \end{array} \right.
\end{align*}
where $\w = (1-\delta,\delta/(q-1),\dots,\delta/(q-1))$. \\
\textbf{Part 3:} This is also obvious from \eqref{Eq: Symmetric Channel}\textemdash{}recall that a square stochastic matrix is doubly stochastic if and only if its stationary distribution is uniform \cite[Section 8.7]{BasicMatrixAnalysis}. \\
\textbf{Part 4:} For $\delta \neq \frac{q-1}{q}$, we can verify that $W_{\tau}W_{\delta} = I_q$ when $\tau = \frac{-\delta}{1-\delta-\frac{\delta}{q-1}}$ by direct computation:
\begin{align*}
\left[W_{\tau}W_{\delta}\right]_{j,j} & = \left(1-\tau\right)\left(1-\delta\right) + \left(q-1\right)\left(\frac{\tau}{q-1}\right)\left(\frac{\delta}{q-1}\right) \\
& = 1 \, , \enspace \text{for} \enspace j = 1,\dots,q \, , \\
\left[W_{\tau}W_{\delta}\right]_{j,k} & = \frac{\delta \left(1-\tau\right)}{q-1} + \frac{\tau \left(1-\delta\right)}{q-1} + \left(q-2\right)\frac{\tau \delta}{\left(q-1\right)^2} \\
& = 0 \, , \enspace \text{for} \enspace j \neq k \enspace \text{and} \enspace 1 \leq j,k \leq q \, .
\end{align*}
The $\delta = \frac{q-1}{q}$ case follows from \eqref{Eq: Symmetric Channel}. \\
\textbf{Part 5:} The set $\big\{W_{\delta}:\delta \in \R\backslash\big\{\frac{q-1}{q}\big\}\big\}$ is closed under matrix multiplication. Indeed, for $\epsilon,\delta \in \R\backslash\big\{\frac{q-1}{q}\big\}$, we can straightforwardly verify that $W_{\epsilon}W_{\delta} = W_{\tau}$ with $\tau = \epsilon + \delta - \epsilon \delta - \frac{\epsilon \delta}{q-1}$. Moreover, $\tau \neq \frac{q-1}{q}$ because $W_{\tau}$ is invertible (since $W_{\epsilon}$ and $W_{\delta}$ are invertible using part 4). The set also includes the identity matrix as $W_0 = I_q$, and multiplicative inverses (using part 4). Finally, the associativity of matrix multiplication and the commutativity of circulant matrices proves that $\big\{W_{\delta}:\delta \in \R\backslash\big\{\frac{q-1}{q}\big\}\big\}$ is an Abelian group.  
\end{proof}

\renewcommand{\proofname}{Proof of Proposition \ref{Prop: Constants of Symmetric Channels}}

\begin{proof} ~\newline
\textbf{Part 1:} We first recall from \cite[Appendix, Theorem A.1]{LogSobolevInequalitiesDiaconis} that the Markov chain $\1\unif \in \R^{q \times q}_{\textsf{sto}}$ with uniform stationary distribution $\pi = \unif \in \P_q$ has LSI constant:
$$ \alpha\!\left(\1\unif\right) = \!\! \inf_{\substack{f \in \L^2\left(\X,\unif\right):\\\left\|f\right\|_{\unif} = 1\\D(f^2 \unif || \unif) \neq 0}}{\frac{\Dirichlet_{\textsf{std}}\left(f,f\right)}{D\left(f^2 \unif \, || \, \unif\right)}} = 
\left\{
     \begin{array}{ll}
		 \frac{1}{2} , & q = 2 \\
     \frac{1-\frac{2}{q}}{\log\left(q - 1\right)} , & q > 2      
     \end{array} \right. . $$
Now using \eqref{Eq: Symmetric Channel Dirichlet Form Quadratic Form}, $\alpha\!\left(W_{\delta}\right) = \frac{q\delta}{q-1}\alpha\!\left(\1\unif\right)$, which proves part 1. \\
\textbf{Part 2:} Observe that $W_{\delta}W_{\delta}^{*} = W_{\delta}W_{\delta}^T = W_{\delta}^2 = W_{\delta^{\prime}}$, where the first equality holds because $W_{\delta}$ has uniform stationary pmf, and $\delta^{\prime} = \delta \big(2 - \frac{q \delta}{q-1}\big)$ using the proof of part 5 of Proposition \ref{Prop: Properties of Symmetric Channel Matrices}. As a result, the discrete LSI constant $\alpha\!\left(W_{\delta} W_{\delta}^{*}\right) = \alpha\!\left(W_{\delta^{\prime}}\right)$, which we can calculate using part 1 of this proposition. \\
\textbf{Part 3:} It is well-known in the literature that $\rho_{\textsf{max}}\!\left(\unif,W_{\delta}\right)$ equals the second largest singular value of the divergence transition matrix $\textsf{\small diag}\!\left(\sqrt{\unif}\right)^{-1} W_{\delta} \, \textsf{\small diag}\!\left(\sqrt{\unif}\right) = W_{\delta}$ (see \cite[Subsection I-B]{BoundsbetweenContractionCoefficients} and the references therein). Hence, from part 2 of Proposition \ref{Prop: Properties of Symmetric Channel Matrices}, we have $\rho_{\textsf{max}}\!\left(\unif,W_{\delta}\right) = \big|1 - \delta - \frac{\delta}{q-1}\big|$. \\
\textbf{Part 4:} First recall the \textit{Dobrushin contraction coefficient} (for total variation distance) for any channel $W \in \R^{q \times r}_{\textsf{sto}}$:
\begin{align}
\eta_{\textsf{\tiny TV}}\!\left(W\right) & \triangleq \sup_{\substack{P_X,Q_X \in \P_q:\\P_X \neq Q_X}}{\frac{\left\|P_X W - Q_X W\right\|_{\ell^1}}{\left\|P_X - Q_X\right\|_{\ell^1}}} \\
& = \frac{1}{2} \max_{x,x^{\prime} \in [q]}{\left\|W_{Y|X}(\cdot|x) - W_{Y|X}(\cdot|x^{\prime})\right\|_{\ell^1}}
\end{align}
where $\left\|\cdot\right\|_{\ell^1}$ denotes the $\ell^1$-norm, and the second equality is Dobrushin's two-point characterization of $\eta_{\textsf{\tiny TV}}$ \cite{DobrushinContraction}. Using this characterization, we have:
$$ \eta_{\textsf{\tiny TV}}\!\left(W_{\delta}\right) = \frac{1}{2} \max_{x,x^{\prime} \in [q]}{\left\|\w P_q^{x} - \w P_q^{x^{\prime}} \right\|_{\ell^1}} = \left|1-\delta-\frac{\delta}{q-1}\right| $$
where $\w$ is the noise pmf of $W_{\delta}$ for $\delta \in [0,1]$, and $P_q \in \R^{q \times q}$ is defined in \eqref{Eq: Generator Cyclic Permutation Matrix}. It is well-known in the literature (see e.g. the introduction of \cite{ContractionCoefficients} and the references therein) that:
\begin{equation}
\rho_{\textsf{max}}\!\left(\unif,W_{\delta}\right)^2 \leq \eta_{\textsf{\tiny KL}}\!\left(W_{\delta}\right) \leq \eta_{\textsf{\tiny TV}}\!\left(W_{\delta}\right) . 
\end{equation}
Hence, the value of $\eta_{\textsf{\tiny TV}}\!\left(W_{\delta}\right)$ and part 3 of this proposition establish part 4. This completes the proof.
\end{proof}

\renewcommand{\proofname}{Proof}

\section{Auxiliary results}
\label{App: Auxiliary Results}

\begin{proposition}[Properties of Domination Factor Function]
\label{Prop: Properties of Domination Factor Function}
Given a channel $V \in \R^{q \times r}_{\textsf{sto}}$ that is strictly positive entry-wise, its domination factor function $\mu_V : \big(0,\frac{q-1}{q}\big) \rightarrow \R^{+}$ is continuous, convex, and strictly increasing. Moreover, we have $\lim_{\delta \rightarrow \frac{q-1}{q}}{\mu_V\left(\delta\right)} = +\infty$.
\end{proposition}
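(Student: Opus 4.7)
The plan is to first reformulate $\mu_V$ using $\chi^2$-divergence and then exploit the eigenvalue structure of $W_\delta$ on $\1^\perp$. Following the two-step argument from the proof of Theorem~\ref{Thm: Chi Squared Divergence Characterization of Less Noisy}---the forward direction via the local chi-squared approximation applied to $P_t = (1-t)Q + tP$, and the reverse via the integral representation with a pointwise $\chi^2$-bound inside the integral---one obtains
$$ \mu_V(\delta) \;=\; \sup_{Q \in \P_q^\circ,\, J \in \1^\perp \setminus \{0\}}\, \frac{J V \, \textsf{\small diag}(QV)^{-1} V^T J^T}{\lambda(\delta)^2 \sum_{i=1}^q J_i^2 \big/ \big[\lambda(\delta) Q_i + (1-\lambda(\delta))/q\big]} , $$
where $\lambda(\delta) = 1 - q\delta/(q-1)$ is the non-trivial eigenvalue of $W_\delta$ on $\1^\perp$. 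The simplification uses the identity $J W_\delta = \lambda(\delta) J$ for $J \in \1^\perp$, which collapses $\chi^2(PW_\delta\|QW_\delta)$ into $\lambda(\delta)^2 (P-Q)\, \textsf{\small diag}(QW_\delta)^{-1} (P-Q)^T$.

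Using this representation, the vertical asymptote and strict monotonicity follow quickly. For the asymptote, fix any $(P_0, Q_0)$ with $D(P_0 V\|Q_0 V) > 0$ (such a pair exists as long as $V$ is not of the form $\1 p^T$, an implicit non-triviality behind the proposition since otherwise $\mu_V \equiv 0$). As $\delta \to (q-1)/q$, both $P_0 W_\delta$ and $Q_0 W_\delta$ converge to $\unif$, so $D(P_0 W_\delta\|Q_0 W_\delta) \to 0$ by continuity of KL divergence, forcing $\mu_V(\delta) \to +\infty$. For strict monotonicity on $0 < \delta_1 < \delta_2 < (q-1)/q$, part~5 of Proposition~\ref{Prop: Properties of Symmetric Channel Matrices} furnishes $\tau \in (0,1)$ with $W_{\delta_2} = W_{\delta_1} W_\tau$; applying the KL-SDPI for $W_\tau$ together with $\eta_{\textsf{\tiny KL}}(W_\tau) \leq |1 - q\tau/(q-1)| < 1$ from part~4 of Proposition~\ref{Prop: Constants of Symmetric Channels} yields $\mu_V(\delta_2) \geq \mu_V(\delta_1)/\eta_{\textsf{\tiny KL}}(W_\tau) > \mu_V(\delta_1)$.

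Convexity is the crux. I would prove it by showing that each pointwise ratio $\phi_{J,Q}(\cdot)$ in the display above is convex in $\lambda$---and hence in $\delta$, since the relation is affine---so that $\mu_V$ inherits convexity as a supremum of convex functions. In the $q = 2$ case this is transparent: using $J_1 = -J_2$ and $a_1 + a_2 = 1$ (for $a_i = \lambda Q_i + (1-\lambda)/q$), the denominator collapses to $\epsilon^2 \lambda^2/[1/4 - \lambda^2(Q_1 - 1/2)^2]$, producing $\phi_{J,Q}(\lambda) = C_1/\lambda^2 - C_2$ with $C_1 > 0$, $C_2 \geq 0$, manifestly convex on $(0,1)$. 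For general $q$, write $N = J V \, \textsf{\small diag}(QV)^{-1} V^T J^T$ and $f(\lambda) = \lambda^2 g(\lambda)$ with $g(\lambda) = \sum_i J_i^2/a_i(\lambda)$, so convexity of $\phi_{J,Q} = N/f$ becomes equivalent to the inequality $f f'' \leq 2(f')^2$. A Cauchy--Schwarz argument with $x_i = J_i (Q_i - 1/q)/a_i^{3/2}$, $y_i = J_i/a_i^{1/2}$ gives $g g'' \geq 2(g')^2$ (log-convexity of $g$), and the algebraic identity $\sum_i J_i^2/a_i^2 = q(g + \lambda g')$---which follows from $Q_i - 1/q = (a_i - 1/q)/\lambda$---together with the analogous expression for $\sum_i J_i^2/a_i^3$ permits explicit control of the slack term $\lambda^4(g g'' - 2(g')^2)$ against the leading terms $6\lambda^2 g^2 + 4\lambda^3 g g'$ in the expansion of $2(f')^2 - f f''$. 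This algebraic bookkeeping is the main technical step, and I expect it to be the primary obstacle.

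Continuity on the open interval $(0, (q-1)/q)$ then follows automatically from convexity of a finite-valued function. Finiteness of $\mu_V(\delta)$ at each interior $\delta$ is a consequence of the Loewner characterization in Proposition~\ref{Prop: Loewner and Spectral Characterizations of Less Noisy}: since $W_\delta$ is invertible for $\delta < (q-1)/q$, the family $\{W_\delta\, \textsf{\small diag}(PW_\delta)^{-1} W_\delta^T : P \in \P_q^\circ\}$ is uniformly positive-definite on the compact simplex, which bounds the generalized eigenvalue ratio defining $\mu_V(\delta)$.
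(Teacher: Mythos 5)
Your arguments for finiteness, strict monotonicity, the vertical asymptote, and continuity-given-convexity are sound and essentially coincide with the paper's: the factorization $W_{\delta_2} = W_{\delta_1}W_\tau$ plus the SDPI with $\eta_{\textsf{\tiny KL}}(W_\tau) < 1$ is exactly the paper's monotonicity proof, and the asymptote argument (the denominator tends to $0$ because $P_X W_{(q-1)/q} = \unif$) is also the paper's. The $\chi^2$-reformulation of $\mu_V(\delta)$ as a supremum of ratios of quadratic forms is correct --- the two-step argument of Theorem \ref{Thm: Chi Squared Divergence Characterization of Less Noisy} does transfer --- and your remark that the asymptote implicitly needs $V \neq \1 p$ is fair.

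The convexity step, however, is a genuine gap, and not merely because the ``algebraic bookkeeping'' is left undone: the pointwise convexity you aim for is false for $q \geq 3$. In your notation, with $a_i = \lambda Q_i + (1-\lambda)/q$, $g = \sum_i J_i^2/a_i$, $h = \sum_i J_i^2/a_i^2$, $k = \sum_i J_i^2/a_i^3$, your own identities give $f' = \lambda(g + h/q)$ and $f'' = 2k/q^2$, hence $2(f')^2 - ff'' = \frac{2\lambda^2}{q^2}\big((qg+h)^2 - gk\big)$, so convexity of $\phi_{J,Q} = N/f$ is equivalent to $(qg+h)^2 \geq gk$. Take $q = 3$, $J = (0.1,\,2,\,-2.1)$ and $a = (0.01,\,0.495,\,0.495)$, which is realizable with $\lambda = 0.98$ and some $Q \in \P_3^{\circ}$: then $g \approx 18.0$, $h \approx 134$, $k \approx 1.01\times 10^{4}$, giving $gk \approx 1.8\times 10^{5}$ versus $(3g+h)^2 \approx 3.5\times 10^{4}$. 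The mechanism is that $gk \sim g\, J_1^2/a_1^3$ while $(qg+h)^2 \sim J_1^4/a_1^4$ as $a_1 \to 0$, so the inequality forces $J_1^2/a_1 \gtrsim g$, which fails whenever the coordinate carrying the smallest $a_i$ does not dominate $g$; the constraint $J_1^2 = J_2^2$ rescues $q = 2$, which is why your binary computation works. Consequently $\phi_{J,Q}$ can be strictly concave in $\lambda$ on an open set, and $\mu_V$ cannot be exhibited as a supremum of convex functions along this route.

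For comparison, the paper proves convexity by fixing $(P_X,Q_X)$, noting that $\delta \mapsto D(P_X W_\delta \| Q_X W_\delta)$ is convex (joint convexity of KL divergence composed with the affine map $\delta \mapsto W_\delta$), and then asserting that the reciprocal of a non-negative convex function is convex. That last assertion is not a valid general principle, and your $\chi^2$-level counterexample is precisely a local (small $\|P_X - Q_X\|$) manifestation of its failure for these particular functions. So you cannot repair your argument simply by importing the paper's one-line justification; establishing convexity of the supremum $\mu_V$ requires an argument that does not proceed term by term over the pairs $(P_X,Q_X)$ or $(J,Q)$.
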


\begin{proof}
We first prove that $\mu_V$ is finite on $\big(0,\frac{q-1}{q}\big)$. For any $P_X,Q_X \in \P_q$ and any $\delta \in \big(0,\frac{q-1}{q}\big)$, we have:
\begin{align*}
D(P_X V || Q_X V) & \leq \chi^2(P_X V || Q_X V) \leq \frac{\left\|(P_X - Q_X) V\right\|_{\ell^2}^2}{\nu} \\
& \leq \frac{\left\|P_X - Q_X\right\|_{\ell^2}^2 \left\|V\right\|_{\textsf{op}}^2}{\nu} 
\end{align*}
where the first inequality is well-known (see e.g. \cite[Lemma 8]{BoundsbetweenContractionCoefficients}) and $\nu = \min\left\{[V]_{i,j}:1 \leq i \leq q, 1 \leq j \leq r\right\}$, and:
\begin{align*}
D(P_X W_{\delta} || Q_X W_{\delta}) & \geq \frac{1}{2} \left\|(P_X - Q_X) W_{\delta}\right\|_{\ell^2}^2 \\
& \geq \frac{1}{2} \left\|P_X - Q_X\right\|_{\ell^2}^2 \left(1 - \delta - \frac{\delta}{q-1}\right)^2
\end{align*}
where the first inequality follows from Pinsker's inequality (see e.g. \cite[Proof of Lemma 6]{BoundsbetweenContractionCoefficients}), and the second inequality follows from part 2 of Proposition \ref{Prop: Properties of Symmetric Channel Matrices}. Hence, we get:
\begin{equation}
\forall \delta \in \left(0,\frac{q-1}{q}\right), \enspace \mu_V \left(\delta\right) \leq \frac{2 \left\|V\right\|_{\textsf{op}}^2}{\nu \left(1 - \delta - \frac{\delta}{q-1}\right)^2} . 
\end{equation}

To prove that $\mu_V$ is strictly increasing, observe that $W_{\delta^{\prime}} \succeq_{\textsf{\tiny deg}} W_{\delta}$ for $0 < \delta^{\prime} < \delta < \frac{q-1}{q}$, because $W_{\delta} = W_{\delta^{\prime}} W_{p}$ with:
\begin{align*}
p & = \delta - \frac{\delta^{\prime}}{1-\delta^{\prime}-\frac{\delta^{\prime}}{q-1}} + \frac{\delta\delta^{\prime}}{1-\delta^{\prime}-\frac{\delta^{\prime}}{q-1}} + \frac{\frac{\delta \delta^{\prime}}{q-1}}{1-\delta^{\prime}-\frac{\delta^{\prime}}{q-1}} \\
& = \frac{\delta - \delta^{\prime}}{1-\delta^{\prime}-\frac{\delta^{\prime}}{q-1}} \in \left(0,\frac{q-1}{q}\right)
\end{align*}
where we use part 4 of Proposition \ref{Prop: Properties of Symmetric Channel Matrices}, the proof of part 5 of Proposition \ref{Prop: Properties of Symmetric Channel Matrices} in Appendix \ref{App: Proofs of Propositions}, and the fact that $W_{p} = W_{\delta^{\prime}}^{-1} W_{\delta}$. As a result, we have for every $P_X,Q_X \in \P_q$:
$$ D\left(P_X W_{\delta}||Q_X W_{\delta}\right) \leq \eta_{\textsf{\tiny KL}}\!\left(W_{p}\right) D\left(P_X W_{\delta^{\prime}}||Q_X W_{\delta^{\prime}}\right) $$
using the SDPI for KL divergence, where part 4 of Proposition \ref{Prop: Constants of Symmetric Channels} reveals that $\eta_{\textsf{\tiny KL}}\!\left(W_{p}\right) \in (0,1)$ since $p \in \big(0,\frac{q-1}{q}\big)$. Hence, we have for $0 < \delta^{\prime} < \delta < \frac{q-1}{q}$:
\begin{equation}
\mu_V\left(\delta^{\prime}\right) \leq \eta_{\textsf{\tiny KL}}\!\left(W_{p}\right) \mu_V\left(\delta\right) 
\end{equation}
using \eqref{Eq: Domination Factor}, and the fact that $0 < D\left(P_X W_{\delta^{\prime}}||Q_X W_{\delta^{\prime}}\right) < +\infty$ if and only if $0 < D\left(P_X W_{\delta}||Q_X W_{\delta}\right) < +\infty$. This implies that $\mu_V$ is strictly increasing. 

We next establish that $\mu_V$ is convex and continuous. For any fixed $P_X,Q_X \in \P_q$ such that $P_X \neq Q_X$, consider the function $\delta \mapsto D\left(P_X V||Q_X V\right)/D\left(P_X W_{\delta}||Q_X W_{\delta}\right)$ with domain $\big(0,\frac{q-1}{q}\big)$. This function is convex, because $\delta \mapsto D\left(P_X W_{\delta}||Q_X W_{\delta}\right)$ is convex by the convexity of KL divergence, and the reciprocal of a non-negative convex function is convex. Therefore, $\mu_V$ is convex since \eqref{Eq: Domination Factor} defines it as a pointwise supremum of a collection of convex functions. Furthermore, we note that $\mu_V$ is also continuous since a convex function is continuous on the interior of its domain. 

Finally, observe that: 
\begin{align*}
\liminf_{\delta \rightarrow \frac{q-1}{q}}{\mu_V\left(\delta\right)} & \geq \sup_{\substack{P_X, Q_X \in \P_q\\P_X \neq Q_X}}{\liminf_{\delta \rightarrow \frac{q-1}{q}}{\frac{D\left(P_X V||Q_X V\right)}{D\left(P_X W_{\delta}||Q_X W_{\delta}\right)}}} \\
& = \sup_{\substack{P_X, Q_X \in \P_q\\P_X \neq Q_X}}{\frac{D\left(P_X V||Q_X V\right)}{\displaystyle{\limsup_{\delta \rightarrow \frac{q-1}{q}}{D\left(P_X W_{\delta}||Q_X W_{\delta}\right)}}}} \\
& = +\infty
\end{align*}
where the first inequality follows from the minimax inequality and \eqref{Eq: Domination Factor} (note that $0 < D\left(P_X W_{\delta}||Q_X W_{\delta}\right) < +\infty$ for $P_X \neq Q_X$ and $\delta$ close to $\frac{q-1}{q}$), and the final equality holds because $P_X W_{(q-1)/q} = \unif$ for every $P_X \in \P_q$.
\end{proof} 

\begin{lemma}[Gramian L\"{o}wner Domination implies Symmetric Part L\"{o}wner Domination]
\label{Lemma: Gramian Loewner Domination implies Symmetric Part Loewner Domination}
Given $A \in \R^{q \times q}_{\succeq 0}$ and $B \in \R^{q \times q}$ that is normal, we have:
$$ A^2 = A A^T \succeq_{\textsf{\tiny PSD}} B B^T \enspace \Rightarrow \enspace A = \frac{A + A^T}{2} \succeq_{\textsf{\tiny PSD}} \frac{B + B^T}{2} . $$
\end{lemma}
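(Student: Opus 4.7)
The plan is to chain two Löwner-order inequalities. Since $A \in \R^{q \times q}_{\succeq 0}$ is symmetric, $A$ coincides with its own symmetric part and $A^2 = A A^T$. So the statement reduces to showing that
\[
A^2 \succeq_{\textsf{\tiny PSD}} B B^T \enspace \Rightarrow \enspace A \succeq_{\textsf{\tiny PSD}} (BB^T)^{1/2} \succeq_{\textsf{\tiny PSD}} \frac{B + B^T}{2},
\]
where the first implication is an application of the Löwner--Heinz theorem (already invoked in the paper as \eqref{Eq: Loewner-Heinz}) with the operator-monotone function $x \mapsto x^{1/2}$, and the second inequality is the only genuinely new content. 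Once both inequalities are established, transitivity of $\succeq_{\textsf{\tiny PSD}}$ concludes the proof.

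For the key step $(BB^T)^{1/2} \succeq_{\textsf{\tiny PSD}} (B+B^T)/2$, I would exploit the normality of $B$ via the real Schur normal form: there exist an orthogonal matrix $Q \in \R^{q \times q}$ and a block-diagonal $\Lambda$ such that $B = Q \Lambda Q^T$, where $\Lambda$ consists of $1 \times 1$ blocks holding the real eigenvalues of $B$ and $2 \times 2$ blocks of the form $\left[\begin{smallmatrix} a & -b \\ b & a \end{smallmatrix}\right]$ encoding each complex-conjugate pair $a \pm bi$. Because orthogonal conjugation preserves the Löwner order and commutes with both the Gramian square root and the symmetric-part operation, it suffices to verify the inequality block-by-block on $\Lambda$. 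For a real $1 \times 1$ block this is just $|\lambda| \geq \lambda$; for a $2 \times 2$ block one computes $\Lambda\Lambda^T = (a^2 + b^2) I_2$, hence $(\Lambda\Lambda^T)^{1/2} = \sqrt{a^2+b^2}\, I_2$, while $(\Lambda + \Lambda^T)/2 = a I_2$, and the inequality reduces to $\sqrt{a^2 + b^2} \geq a$.

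The main obstacle is that the normality hypothesis on $B$ is genuinely necessary: for a generic real matrix, $(BB^T)^{1/2} \not\succeq_{\textsf{\tiny PSD}} (B + B^T)/2$ (for instance, the standard nilpotent $B = \left[\begin{smallmatrix} 0 & 1 \\ 0 & 0 \end{smallmatrix}\right]$ already provides a counterexample). In particular, the eigenvalue-wise inequality $\lambda_i((BB^T)^{1/2}) \geq \lambda_i((B + B^T)/2)$ of \cite[Corollary 3.1.5]{MatrixAnalysis} invoked earlier in the paper holds for every $B$ but does \emph{not} yield a Löwner comparison, so the argument must actually use that $B$ and $B^T$ share an orthogonal eigenbasis. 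The real Schur form supplies exactly this alignment and reduces the problem to scalar and $2 \times 2$ inequalities that can be checked directly.
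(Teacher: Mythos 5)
Your proof is correct and follows the same two-step structure as the paper's: first apply the L\"{o}wner--Heinz theorem with $p = \tfrac{1}{2}$ to obtain $A = (AA^T)^{1/2} \succeq_{\textsf{\tiny PSD}} (BB^T)^{1/2}$, then show $(BB^T)^{1/2} \succeq_{\textsf{\tiny PSD}} (B + B^T)/2$ using the normality of $B$, and finish by transitivity. The only difference is in how normality is exploited for the second step. The paper invokes the complex spectral theorem, writing $B = U D U^H$ with $U$ unitary and $D$ complex diagonal, so that $(BB^T)^{1/2} = U|D|U^H$ and $(B+B^T)/2 = U\Re\{D\}U^H$, and the conclusion follows from the entrywise inequality $|z| \geq \Re(z)$. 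You instead use the real Schur (block-diagonal) form, which keeps all arithmetic in $\R$ at the cost of checking $2\times 2$ rotation--scaling blocks, where the same pointwise fact reappears as $\sqrt{a^2+b^2} \geq a$. These are essentially equivalent encodings of the observation that a real normal $B$ and its transpose are simultaneously (block-)diagonalized by a common orthonormal basis, which is exactly what turns an eigenvalue comparison into a L\"{o}wner comparison. Your remark that the eigenvalue interlacing $\lambda_i\bigl((BB^T)^{1/2}\bigr) \geq \lambda_i\bigl((B+B^T)/2\bigr)$ of \cite[Corollary 3.1.5]{MatrixAnalysis} alone does not imply the L\"{o}wner order for general $B$, and that the nilpotent $B = \bigl[\begin{smallmatrix} 0 & 1 \\ 0 & 0 \end{smallmatrix}\bigr]$ is a counterexample when normality is dropped, is a useful sanity check that the paper does not spell out.
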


\begin{proof}
Since $A A^T \succeq_{\textsf{\tiny PSD}} B B^T \succeq_{\textsf{\tiny PSD}} 0$, using the L\"{o}wner-Heinz theorem (presented in \eqref{Eq: Loewner-Heinz}) with $p = \frac{1}{2}$, we get:
$$ A = \left(A A^T\right)^{\frac{1}{2}} \succeq_{\textsf{\tiny PSD}} \left(B B^T\right)^{\frac{1}{2}} \succeq_{\textsf{\tiny PSD}} 0 $$
where the first equality holds because $A \in \R^{q \times q}_{\succeq 0}$. It suffices to now prove that $(B B^T)^{1/2} \succeq_{\textsf{\tiny PSD}} \left(B + B^T\right)\!/2$, as the transitive property of $\succeq_{\textsf{\tiny PSD}}$ will produce $A \succeq_{\textsf{\tiny PSD}} \left(B + B^T\right)\!/2$. Since $B$ is normal, $B = UDU^H$ by the complex spectral theorem \cite[Theorem 7.9]{LinearAlgebraAxler}, where $U$ is a unitary matrix and $D$ is a complex diagonal matrix. Using this unitary diagonalization, we have:
$$ U|D|U^H = \left(BB^T\right)^{\frac{1}{2}} \succeq_{\textsf{\tiny PSD}} \frac{B + B^T}{2} = U\Re\!\left\{D\right\}U^H $$
since $|D| \succeq_{\textsf{\tiny PSD}} \Re\!\left\{D\right\}$, where $|D|$ and $\Re\!\left\{D\right\}$ denote the element-wise magnitude and real part of $D$, respectively. This completes the proof. 
\end{proof}

\section*{Acknowledgment}

We would like to thank an anonymous reviewer and the Associate Editor, Chandra Nair, for bringing the reference \cite{vanDijk} to our attention. Yury Polyanskiy would like to thank Dr. Ziv Goldfeld for discussions on secrecy capacity.

\bibliographystyle{IEEEtran}
\bibliography{IEEEabrv,ChannelDominationRefs}

\end{document}